\begin{document}
\title{On the Simulation Power of Surface Chemical Reaction Networks\thanks{This work is supported by NSTC(Taiwan) grant number 110-2223-E-002-006-MY3.}}
%
%
\author{Yi-Xuan Lee\inst{1} \and
Ho-Lin Chen\inst{1}\thanks{corresponding author}\\ 
%
\institute{National Taiwan University}
\email{\{r12921065,holinchen\}@ntu.edu.tw}}
\authorrunning{Y.-X. Lee and H.-L. Chen}

\maketitle              
\begin{abstract}
The Chemical Reaction Network (CRN) is a well-studied model that describes the interaction of molecules in well-mixed solutions. In 2014, Qian and Winfree~\cite{QW14} proposed the abstract surface chemical reaction network model (sCRN), which takes advantage of spatial separation by placing molecules on a structured surface, limiting the interaction between molecules. In this model, molecules can only react with their immediate neighbors. Many follow-up works study the computational and pattern-construction power of sCRNs.

In this work, our goal is to describe the power of sCRN by relating the model to other well-studied models in distributed computation. Our main result is to show that, given the same initial configuration, sCRN, affinity-strengthening tile automata, cellular automata, and amoebot can all simulate each other (up to unavoidable rotation and reflection of the pattern). One of our techniques is coloring on-the-fly, which allows all molecules in sCRN to have a global orientation.

\keywords{Surface chemical reaction networks  \and Simulation \and Tile Automata}
\end{abstract}

\section{Introduction}
The Chemical Reaction Network (CRN) is a well-studied model that describes the interaction of molecules in well-mixed solutions. The computational power of CRNs has been studied in various settings, including rate-dependent~\cite{SCW08,CSW09,FGB17} and rate-independent models~\cite{cdd14:stochastic,cdsw23:mass_action}. In this study, rather than considering molecules in a well-mixed solution, we study the scenarios in which molecules have designated locations and can only react with other molecules in their vicinity.


In 2014, Qian and Winfree~\cite{QW14} proposed the abstract surface chemical reaction network model (sCRN), which takes advantage of spatial separation by placing molecules on a structured surface, limiting the interaction between molecules. In this model, molecules can only react with their immediate neighbors. Qian and Winfree~\cite{QW14} also proposed an implementation of an arbitrary programmable sCRN using DNA strand displacement. Furthermore, they demonstrate the computational power of sCRN by building a continuous active logic circuit and cellular automata in a parallel and scalable way.

A follow-up work by Clamons\etal~\cite{CQW20} discuss the computational and pattern-creation power of sCRNs. They extend the idea of~\cite{QW14} to achieve local synchronicity on a $2$-dimensional square lattice, with an initial pattern providing local orientation. This requires tethering the molecules precisely on the surface. On the other hand, they ask whether we can achieve complex spatial arrangements from relatively simple initial conditions, which rely on the ability to control molecular interactions precisely. They also emphasize the trade-off between these two concepts. Some of the open questions are solved in~\cite{AB23}, which focused on the complexity of deciding the reachability from a given configuration to another given configuration. Brailovskaya et al.~\cite{BGY19} showed that feed-forward circuits can be constructed in sCRN using only swap reactions. In this work, our main goal is to study the computational power of sCRN given a uniform surface with no local orientation.

We seek to describe the power of sCRN by relating the model to other well-studied models in distributed computation, such as abstract tile assembly model (aTAM)~\cite{Win98,RW00,DL12,ALQ01}, cellular automata~\cite{JVN66,BNR20,FG08}, and tile automata~\cite{CL18,CGS21,RDS23}. These models have been shown to be Turing universal~\cite{Win98,DFM12,CGS21} and can perform a large variety of tasks, including computation and pattern formation (e.g.,\cite{Pa12,BNR20,RDS23}). We further compare sCRN to the amoebot model, where programmable matter has the ability to move. We use the concept of "simulation" based on the definition given in~\cite{HP23} with some necessary modifications. Intuitively, if two models can simulate each other, they can perform the same tasks in the same way (e.g., create the same pattern in the same ordering).

\subsection{Our Results}
In this work, our main result is to show that, given the same initial configuration, sCRN, affinity-strengthening tile automata, cellular automata, and amoebot can all simulate each other (up to unavoidable rotation and reflection of the pattern). The results are listed in Table~\ref{tab:result}, where $\Mcal_\Scal,\Mcal_\Tcal$ represent different models. We ask whether $\Mcal_\Scal$ can simulate $\Mcal_\Tcal$ (denoted by $\Mcal_\Scal\triangleright\Mcal_\Tcal$), and vice versa (denoted by $\Mcal_\Tcal\triangleright\Mcal_\Scal$). 

The main challenge for this simulation is that sCRN does not have predefined directions. Unlike the tile automata model (and all other models simulated), which has a given direction, the molecules in sCRN do not have the ability to differentiate their neighbors in different directions. We solve this problem by providing a coloring to the surface such that each molecule has four neighbors with different colors. Furthermore, in the simulation of $\Mcal_\Tcal$ by $\Mcal_\Scal$, we want to make the behavior of $\Mcal_\Scal$ to be as close to $\Mcal_\Tcal$ as possible. When $\Mcal_\Tcal$ enters a terminal configuration, $\Mcal_\Scal$ should also enter a terminal configuration without using too much extra space. Therefore, instead of coloring the whole surface, we carefully perform the coloring on the fly. A molecule is colored only when molecules in its vicinity are about to participate in a reaction. This coloring technique can be used in the simulation of all different models, so we describe this coloring technique separately in Section~\ref{sec:sCRN-dsCRN}.


In Section~\ref{sec:sCRN-dsCRN}, we show that unit-seeded sCRN (s-sCRN) can simulate unit-seeded, directed sCRN (s-d-sCRN) up to rotation and reflection, while simulating s-sCRN with s-d-sCRN is trivial. In Section~\ref{sec:aTAM}, we show that s-d-sCRN can simulate aTAM.
It is easy to show that aTAM cannot simulate sCRN since the reactions are not reversible.
In Section~\ref{sec:TA}, we show that s-d-sCRN and unit-seeded tile automata with affinity-strengthening rules (s-as-TA) can simulate each other. In Section~\ref{sec:CA}, we show that d-sCRN and asynchronous cellular automata (async-CA) with non-deterministic local function can simulate each other. We also show that clockwise sCRN (c-sCRN) and amoebot can simulate each other in section~\ref{sec:amoe}. Notice that, when we simulate s-d-sCRN with s-sCRN, we are essentially giving a random global orientation to the surface. When simulating unit-seeded models, the orientation of the seed can be included in the coloring process and thus the terminal configuration of the sCRN is always the same as the system simulated, up to rotation and reflection. On the other hand, cellular automata and amoebots have complicated initial patterns. Therefore, simulating cellular automata and amoebots by s-sCRN (undirected) will inevitably rotate or flip the initial configuration before the simulation starts, unless the direction used in the cellular automata and amoebots is also encoded in the initial configuration of s-sCRN. 

\begin{table}
\caption{Cross-Model Simulation Results.}\label{tab:result}
\begin{tabular}{|m{6em}|m{6em}|m{8em}|m{6em}|m{6em}|}
\hline
$\Mcal_\Scal$ & $\Mcal_\Tcal$ & $\Mcal_\Scal\triangleright\Mcal_\Tcal$ & $\Mcal_\Tcal\triangleright\Mcal_\Scal$  & REF \\
\hline
\hline 
s-sCRN & s-d-sCRN & \checkmark \ (up to rotation and reflection) & \checkmark  &  Thm~\ref{thm:sCRN-dsCRN}\\
\hline
s-d-sCRN & aTAM & \checkmark & $\times$ & Thm~\ref{thm:dsCRN-aTAM}\\
\hline
s-d-sCRN& s-as-TA & \checkmark & \checkmark & Thm~\ref{thm:dsCRN-TA}\\
\hline
d-sCRN & async-CA & \checkmark & \checkmark & Thm~\ref{thm:dsCRN-CA},~\ref{thm:CA-dsCRN}\\
\hline
c-sCRN & amoebot & \checkmark & \checkmark & Thm~\ref{thm:csCRN-amoe},~\ref{thm:amoe_csCRN}\\
\hline
\end{tabular}
\end{table}

\section{Models}
\subsection{Surface chemical reaction network}\label{ssec:sCRN}

The model of \textbf{surface chemical reaction network (sCRN)} was originally proposed by Qian and Winfree~\cite{QW14}. To align the underlying structures of all models and to describe the simulation problem, which is highly relative to the transformation of patterns in each system, we use a slightly different definition given in~\cite{AB23}. In particular, they define the configuration and reachability in sCRN. To make the simulations simpler, we also propose in Section~\ref{sssec:dsCRN} the directed sCRN as a variation of sCRN. We now give a brief description.

A surface chemical reaction network $(Q,R)$ consists of an underlying \emph{surface} $L$, a finite set of species $Q$, and a set of \emph{reactions} $R$. In general, the surface $L$ is an arbitrary planar graph, but we restrict it to the square lattice (viewed as $\Zbb^2$) and the triangular lattice with nearest neighbor connectivity in this paper. The \emph{cells} are vertices of $L$, and each cell $u$ is associated with a species $\sigma(u)$ in $Q$. For the purpose of defining the simulation between computational models in Section~\ref{sec:simdef}, we sometimes call the species "states" of a cell, and $Q$ the set of "states". A \emph{configuration} is a mapping from each cell to a state in $Q$. Every reaction $r\in R$ has one of the following two forms: ($A,B,C,D\in Q$)

\begin{itemize}
\item \emph{Unimolecular reaction} $A\rw B$. A cell in state $A$ can change its state to $B$ itself. Or,
\item \emph{Bimolecular reaction} $A+B\rw C+D$ where $A,B,C,D$ not necessarily distinct, meaning that when two species $A$ and $B$ are adjacent, their states could be replaced with $C$ and $D$ respectively and simultaneously. Note that the orientation of $A$ and $B$ does not matter, but the order cannot be changed (state $A$ must turn into $C$, and $B$ must turn into $D$).
\end{itemize}

In this paper, we consider the sCRN $\Gamma=(Q,S,R)$ with a specified initial configuration $S$. We further define the \textbf{unit-seeded sCRN (s-sCRN)} where the initial configuration $S$ maps every cell to a \emph{blank state} $\Ocal\in Q$ except for a special cell mapped to a \emph{seed state} $s$. In s-sCRN we don't allow any reactions s.t. $\Ocal$ is the only reactant when considering such unit-seed system. i.e. $\Ocal\rw A$ or $\Ocal+\Ocal\rw A+B$ is illegal here.

\subsubsection{Surface chemical reaction network with orientation}\label{sssec:dsCRN}

For the purpose of our simulation, we introduce and investigate two variations of the sCRN model: the \textbf{directed sCRN (d-sCRN)} and the \textbf{clockwise sCRN (c-sCRN)}. Directed sCRN provides each bimolecular reaction with a global direction $d\in\{\uw,\rw,\dw,\lw\}$. We write $(A,B,C,D,d)$ to indicate that, when $d=\uw$ (resp. $\rw,\dw,\lw$), two adjacent species $A$ and $B$ could turn into species $C$ and $D'$ simultaneously if $B$ is in the North (resp. East, South, West) of $A$, where ``North'' means the $(0,1)$ direction in $\Zbb^2$. Notice that $\dw,\lw$ are actually redundant since the bimolecular reactions are symmetric for all pairs of adjacent species; we use these notations just for convenience when describing protocols in Section~\ref{sec:sCRN-dsCRN}. We use $(A,B,C,D,d)^{-1}$ to represent the reverse reaction $(C,D,A,B,d)$. For a unimolecular reaction such that species $A$ can turn into $B$, we simply use the notation $(A,B,\odot)$. Clockwise sCRN, defined on the triangular lattice, provides each bimolecular reaction with a local direction $d\in\{0,\cdots,5\}$ such that $(A,B,C,D,d)$ represents the reaction where $B$ is in the $d$-th direction in the view of $A$. The only common knowledge is that for every species, $d$ increases in the clockwise order.

\subsubsection{Reachability and termination}\label{sssec:reachability}
We first define the \emph{one-step reachability}. For any two configurations $\alpha$ and $\beta$, we say that $\beta$ is reachable from $\alpha$ in one step if there exists a single cell or a pair of adjacent cells such that performing some $r\in R$ on these cells yields $\beta$. Write $\alpha\rw_\Gamma^1\beta$. 


Let $\rw_\Gamma$ be the reflexive transitive closure of $\rw^1_{\Gamma}$, a configuration $\beta$ is \emph{reachable} from another configuration $\alpha$ if $\alpha\rw_\Gamma\beta$. That is, $\beta$ is reachable from $\alpha$ in one or more steps. A configuration $\alpha$ is \emph{$\Gamma$-reachable} (or \emph{reachable} when $\Gamma$ is clear from the context) if $S\rw_\Gamma\alpha$, and write $\Acal(\Gamma)$ to denote the set of all $\Gamma$-reachable configurations. We say $\alpha$ is \emph{$\Gamma$-terminal} if $\alpha$ is $\Gamma$-reachable and there exists no configuration $\beta\neq\alpha$ which is $\Gamma$-reachable from $\alpha$. We denote the set of $\Gamma$-terminal configurations by $\Acal_*(\Gamma)$.

\subsection{Abstract tile assembly model}\label{ssec:aTAM}

In his Ph.D. thesis~\cite{Win98}, Winfree proposed the \textbf{abstract tile assembly model (aTAM)}. This model formalizes the self-assembly of molecules (such as DNA), describing the process wherein simple tiles spontaneously attach to each other to produce complex structures. In this paper, we use the model proposed in his later work with Rothemund~\cite{RW00}. Two differences we make are that we use the definition of assembly in~\cite{DL12} to align with the one used in Section~\ref{ssec:TA} when introducing tile automata system, and we use the definition of attachability in~\cite{ALQ01} to make our description of the protocol in Section~\ref{sec:aTAM} more comprehensive. Note that these two definitions are equivalent to the original one given in~\cite{Win98}. We include a brief description here to make this paper self-contained.

A \emph{tile} is an oriented unit square with the north, east, south, and west sides labeled from some alphabet $\Sigma$. For a tile $t$, define the \emph{state} $\sigma$ to be the 4-tuple $\sigma(t)=(\sigma_N(t), \sigma_E(t), \sigma_S(t), \sigma_W(t))$, consisting of the labels (also called the \emph{glues}) on its four sides. It is assumed that $\nullsf\in\Sigma$, and we use $\nullsf=(\nullsf,\nullsf,\nullsf,\nullsf)$ (abuse of notation) to represent the absence of any other tiles. A \emph{glue strength function} $g:\Sigma\times\Sigma\rw\Nbb\cup{0}$ maps a pair of glues to a natural number $\Nbb$.
\[g(x,y)=g(y,x)=\begin{cases}
\hat{g}(x), & x=y\\
0, & \text{otherwise}
\end{cases},\ \forall x,y\in\Sigma, \text{ for some }\hat{g}:\Sigma\rw\Nbb\]

Given a finite set of states $Q$, it is allowed that an infinite number of tiles of the same state occupying locations in $\Zbb^2$. A \emph{configuration} is a mapping from $\Zbb^2$ to $Q$ ($\nullsf\in Q$). Let $\dagQ=Q\setminus\{\nullsf\}$, we consider the partial function $\dagalpha$ from $\Zbb^2$ to $\dagQ$ where $\dagalpha(v)=\alpha(v)$ for all $v\in\Zbb^2$ s.t. $\alpha(v)\neq\nullsf$. We can view the domain of $\dagalpha$ as a subgraph of $\Zbb^2$ lattice. We call $\alpha$ an \emph{assembly} if $\dagalpha$ has a connected, non-empty domain $\Omega_\alpha\subset\Zbb^2$, and $\Omega_\alpha$ is called the \emph{shape} of this assembly $\alpha$.

Let $\alpha,\beta$ be two configurations s.t. $\dagalpha,\dagbeta$ have disjoint domains. Addition of configurations $\alpha,\beta$, denoted as $\gamma=\alpha+\beta$ (or $\beta=\gamma-\alpha$) is defined by 
\[\gamma=\begin{cases}
    \alpha(x,y), &\beta(x,y)=\nullsf.\\
    \beta(x,y), &\alpha(x,y)=\nullsf.\\
    \nullsf, &\alpha(x,y)=\beta(x,y)=\nullsf.
\end{cases}\]

The \emph{interaction strength} between two adjacent tiles with their abutting sides labeled $x,y$ is $g(x,y)$. Given a configuration $\alpha$, define the \emph{binding graph} $B_\alpha$ whose vertices are those tiles in $\dagQ$, with an edge of weight $w$ between two vertices if the interaction strength between them is $w$. For some \emph{temperature} $\tau>0$, an assembly $\alpha$ is \emph{$\tau$-stable} if the graph $B_\alpha$ has min-cut $\geq\tau$.

We use $t_{(x_0,y_0)}$ where $t$ is a tile to represent the configuration s.t.
\[t_{(x_0,y_0)}(x,y)=\begin{cases}
    \sigma(t), &(x,y)=(x_0,y_0).\\
    \nullsf, &(x,y)\neq(x_0,y_0).
\end{cases}\]
For a $\tau$-stable assembly $\alpha$, we say the position $(x_0,y_0)$ is \emph{attachable} in $\alpha$ if there exists an assembly $\beta$ s.t. $\beta=\alpha+t_{(x_0,y_0)}$ and
\begin{equation}\label{eq1}
    \begin{aligned}
    &g(\sigma_N(t),\sigma_S(\beta(x_0,y_0+1)))+g(\sigma_E(t),\sigma_W(\beta(x_0+1,y_0)))\\
    &+g(\sigma_S(t),\sigma_N(\beta(x_0,y_0-1)))+g(\sigma_W(t),\sigma_E(\beta(x_0-1,y_0)))\geq\tau
    \end{aligned}
\end{equation}
Notice that Equation(\ref{eq1}) is equivalent to requiring $\beta$ to be $\tau$-stable.

By the attachment of a single tile at position $(x,y)$, $\beta$ is reachable from $\alpha$ in one step. Write $\alpha\rw^1_{\Tcal}\beta$. The \emph{reachability} $\rw_\Gamma$ is defined the same as in Section~\ref{sssec:reachability}. Also, we follow the definition of terminal set $\Acal_*(\Gamma)$ in Section~\ref{sssec:reachability} with an additional requirement that for any $\alpha\in\Acal_*(\Gamma)$, $\alpha$ must be $\tau$-stable.

Such a system of aTAM is represented by a quadruple $\Gamma=(Q,S,g,\tau)$, where $Q,g,\tau$ are as above and $S$ is the \emph{seed configuration}. In this paper (also as suggested in~\cite{RW00}), we consider only the unit-seeded system where $S=s_{(0,0)}$ is the configuration s.t. $S(0,0)=s$ for some \emph{seed tile} $s$ and all other locations are $\nullsf$.

\subsection{Cellular automata}\label{ssec:CA}

The cellular automata was first designed in von Neumann's book~\cite{JVN66}. It's a dynamical system that use local interaction to perform complex global behavior. Cellular automata has been studied to model natural phenomena, and it is also computationally universal. In this paper, we use the definition of synchronous cellular automata in the survey~\cite{BNR20}, and use the definition of asynchronous cellular automata in~\cite{FG08}. We include a brief description here to make this paper self-contained.

A $d-$dimensional cellular automata, whose underlying topology is $\Zbb^d$, is specified by a triple $\Gamma=(Q,\Ncal,f)$
\begin{itemize}
    \item $Q$ is the finite state set.
    \item $\Ncal=(\vec{v_1},\cdots,\vec{v_N})$ is the neighborhood vector of $N$ distinct elements of $\Zbb^d$. Then the \emph{neighbors} of a cell at location $\vec{v}\in\Zbb^d$ are $\{\vec{v}+\vec{v_i}\}_{i=1}^N$.
    \item $f:Q^N\rw Q$ is the \emph{local rule} that computes the next state of a cell from the states of its neighbors. i.e. the next state of a cell $\vec{v}$ is $f(a_1,\cdots,a_N)$ where $a_i$ is the state of its neighbor $\vec{v}+\vec{v_i}$. All cells use the same rule.
\end{itemize}

The \emph{configuration} is a mapping $\alpha:\Zbb^d\rw Q$ that specifies the states of all cells. We denote by $\Ccal(d,Q)=Q^{\Zbb^d}$ the set of all possible configurations. In this paper, we only consider dimension $d\leq2$ and the \emph{von-Neumann} neighborhood containing all $\vec{v_i}$ s.t. $\onenorm{\vec{v_i}}\leq1$. In $2$-dimensional, this is $\Ncal=(\vec{v_O},\vec{v_N},\vec{v_E},\vec{v_S},\vec{v_W})$ where $\vec{v_O}=(0,0),\vec{v_N}=(0,1),\vec{v_E}=(1,0),\vec{v_S}=(0,-1),\vec{v_W}=(-1,0)$.

Another way to identify a cellular automata is by its \emph{global transition function} $G$. We define $G$ in different ways depending on whether the cells are updated simultaneously. We now introduce these two kinds.
\medskip\\
\textbf{synchronous CA (CA).} All cells are updated simultaneously. For this case, the global function $G:\Ccal(d,Q)\rw\Ccal(d,Q)$ maps configuration $\alpha$ to $\beta$ ($\beta=G(\alpha)$) if for all $\vec{v}\in\Zbb^d$, $\beta(\vec{v})=f(\alpha(\vec{v}+\vec{v_1}),\cdots,\alpha(\vec{v}+\vec{v_n}))$.
\medskip\\
\textbf{asynchronous CA (async-CA)}. In this paper we consider the \emph{fully asynchronous} updating scheme, under which a single cell is chosen at random at each time step. The global function $G:\Ccal(d,Q)\times\Zbb^d\rw\Ccal(d,Q)$ takes at input a configurations $\alpha$ and a randomly chosen cell $\vec{u}\in\Zbb^d$, and output the configuration $\beta=G(\alpha)$ s.t. \[\beta(\vec{v})=\begin{cases}
    f(\alpha(\vec{v}+\vec{v_1}),\cdots,\alpha(\vec{v}+\vec{v_N})), & \vec{v}=\vec{u}\\
    \alpha(\vec{v}), &\text{otherwise}
\end{cases}.\]

Moreover, in this paper we consider the \textbf{non-deterministic async-CA} whose local function $f$ could be \emph{nondeterministic}. That is, for any cell $v$, let $a_O,a_N,\cdots,a_W$ be the states of the neighborhood $\vec{v},\vec{v}+\vec{v_N},\cdots,\vec{v}+\vec{v_W}$ respectively. Then $f$ maps $(a_O,a_N,a_E,a_S,a_W)$ to a set of possible next states, from which one is picked non-deterministically at each time. A configuration $\beta$ is reachable in one step from another configuration $\alpha$, denoted by $\alpha\rw^1_\Gamma\beta$, if $\beta=G(\alpha)$. The \emph{reachability} $\rw_\Gamma$ is defined as in Section~\ref{sssec:reachability}, and the terminal set $\Acal_*(\Gamma)$ is a subset of the reachable set that is a fixed point of the global function $G$. i.e. $\Acal_*(\Gamma)=\{\alpha\in\Acal(\Gamma):G(\alpha)=\alpha\}$.

\subsection{Tile automata}\label{ssec:TA}
The tile automata model, combining \emph{2-handed assembly model} with local state-change rules between pairs of adjacent tiles, is a marriage between tile-based self-assembly and asynchronous cellular automata. It was first proposed by Chalk\etal~\cite{CL18} and we follow their model definition. Variations like affinity strengthening and unit-seeded system were considered in~\cite{CGS21} and~\cite{RDS23} respectively. In this paper, we restrict ourselves on the system with both constraints, and slightly modify the description of single tile attachment to align with the one used in aTAM model introduced in Section~\ref{ssec:aTAM}. We include a brief description here to make this paper self-contained.

Similar to how we define a tile in the tile-based self-assembly, here a \emph{tile} $t$ is a unit square located at $\Zbb^2$, each assigned a \emph{state} $\sigma(t)$ from the finite state set $Q$. Similar to aTAM, we use $\nullsf\in Q$ to represent the lack of any other tile. An \emph{affinity function} $g:Q^2\times D\rw\Nbb$, where $D=\{\perp,\vdash\}$, represents the \emph{affinity strength} between two states with relative orientation $d\in D$. To describe it explicitly, the affinity strength between two adjacent tiles $t_1,t_2$ located at $(x_1,y_1),(x_2,y_2)$ is $\begin{cases}
    g(\sigma(t_1),\sigma(t_2),\perp), &\text{if }y_1>y_2\\
    g(\sigma(t_2),\sigma(t_1),\perp), &\text{if }y_1<y_2\\
    g(\sigma(t_1),\sigma(t_2),\vdash), &\text{if }x_1<x_2\\
    g(\sigma(t_2),\sigma(t_1),\vdash), &\text{if }x_1>x_2
\end{cases}$.\\
And the affinity strength between a $\nullsf$ tile and any other tile is $0$.

A \emph{configuration} is a mapping $\alpha$ from $\Zbb^2$ to $Q$. Let $\dagQ=Q\setminus\{\nullsf\}$, we consider the partial function $\dagalpha:\Zbb^2\rw\dagQ$ like the one defined in Section~\ref{ssec:aTAM}. An \emph{assembly} is a configuration $\alpha$ where the domain of $\dagalpha$ being connected. The \emph{shape}, the additive notation ``$+$'', and the induced \emph{binding graph} $B_\alpha$ all follow the definitions in Section~\ref{ssec:aTAM}, where the weight $w$ on an edge $(t_1,t_2)$ now equals to the affinity strength between these two vertices. For a \emph{stability threshold} $\tau$, an assembly $\alpha$ is $\tau$-stable if $B_\alpha$ has min-cut $\geq\tau$.

In general, an assembly $\alpha$ could break into two pieces, (say, \emph{$\tau$-breakable}) if there's a cut in the binding graph with total affinity strength $<\tau$. On the other hand, two $\tau$-stable assembly could be combined along a border whose total strength sums to at least $\tau$. A special case is the \emph{single tile attachment}, which is similar to the tile assembly system. For a $\tau$-stable assembly $\alpha$, we say the position $(x_0,y_0)$ is \emph{attachable} in $\alpha$ if there exists $\tau$-stable assembly $\beta$ s.t. $\beta=\alpha+ t_{(x_0,y_0)}$. But in this paper we only consider the \emph{affinity-strengthening rule} s.t. for each transition rule that takes a state $\sigma_x^1$ to $\sigma_x^2$, it must satisfy that the affinity strength with all other states could only increase. In other words, for every $\sigma\in Q$, $g(\sigma_x^1,\sigma,d)\leq g(\sigma_x^2,\sigma,d)\ \forall d\in D$. Notice that with the affinity-strengthening rule, any tile that has attached to a configuration can not fall off. This implies that the whole assembly is unbreakable after any state transition.

Like asynchronous cellular automata, tile automata has a local state changing rule. The \emph{transition rule} $r$ is a 5-tuple $(\sigma_x^1,\sigma_y^1,\sigma_x^2,\sigma_y^2,d)$ where $\sigma_x^1,\sigma_y^1,\sigma_x^2,\sigma_y^2\in Q$ and $d\in D$. It means that if states $\sigma_x^1$ and $\sigma_y^1$ are adjacent with relative orientation $d$, they can simultaneously turn into state $\sigma_x^2,\sigma_y^2$ respectively. In this paper, we allow the transition rule to be nondeterministic, i.e. there may be two rules $(\sigma_x^1,\sigma_y^1,\sigma_x^2,\sigma_y^2,d)$ and $(\sigma_x^1,\sigma_y^1,\sigma_x^3,\sigma_y^3,d)$ s.t. $(\sigma_x^2,\sigma_y^2)\neq(\sigma_x^3,\sigma_y^3)$.

A \textbf{tile automata system (TA)} is a 6-tuple $\Gamma=(Q,S,I,g,R,\tau)$ where $Q$ is a finite set of states, $g$ is an affinity function, $S$ is the initial configuration, $R$ is a set of transition rules, and $\tau$ is the stability threshold. We assume that initially, there is a set of tiles whose states belong to the \emph{initial state} $I\subseteq Q$ that could be used in attachment. Any configuration $\beta$ that is reachable from $\alpha$ in one step (denoted as $\alpha\rw^1_\Gamma\beta$) if and only if $\beta$ is formed by applying some transition rule to $\alpha$ or by an attachment of a tile in $I$. The \emph{reachability} $\rw_\Gamma$ is defined the same as in Section~\ref{sssec:reachability}. Also, we follow the definition of terminal set $\Acal_*(\Gamma)$ in Section~\ref{sssec:reachability} with an additional requirement that for any $\alpha\in\Acal_*(\Gamma)$, $\alpha$ must be $\tau$-stable.

We restrict ourselves to the \textbf{unit-seeded TA with affinity-strengthening rule (s-as-TA)} in this paper. The term ``unit-seeded'' means that there exists a single seed tile $s$ s.t. $S=s_{(0,0)}$ and only single tile attachment is allowed.

\subsection{Amoebot}\label{ssec:Amoe}
The amoebot model was originally proposed by Derakhshandeh\etal~\cite{DDG14}. To simplify the simulation, we use the notions of movement and transition function given by Alumbaugh\etal~\cite{ADD19}, and propose a slightly different definition of configuration. Inspired by amoeba, it is a model of programmable matter where particles perform simple computation according to local information, and can move via contraction and expansion. We include a brief description here to make this paper self-contained.

The amoebot model is an abstract computational model of \emph{programmable matter} consisting of \emph{particles}, which are simple computational units that can move and bond to others and exchange information by these bonds. The underlying topology is a triangular lattice with nearest neighbor connectivity $G_\bigtriangleup=(V,E)$. $V$ represents all possible positions of a particle, and $E$ represents all possible movement and information transitions between particles.

Every particle is either \emph{contracted} (occupying a single node) or \emph{expanded} (occupying two adjacent nodes). Particles are \emph{anonymous}, but each edges leaving a particle $p$ is locally labeled so that a particle can uniquely identify each of them: The labeling starts with $0$ at an arbitrary outgoing edge leading to a node that is only adjacent to one of the nodes occupied by $p$, then increases in clockwise order around the particle. Notice that while they have a common clockwise chirality, they may have different \emph{orientations} in $\Ocal=\{0,1,2,3,4,5\}$ encoding their offsets for local direction $0$ from global direction $0$ (to the right). i.e. they don't have consensus in global orientation.

Every particle has a constant-size, shared, local memory which can be read and write by both itself and its neighbor for communications. More formally, we denote an amoebot model by $\Gamma=(\Phi,S,\Sigma,\delta)$. Each particle has a \emph{state} from a finite set $\Phi$. A particle $p$ communicates with an neighboring particle $q$ by placing a \emph{flag} from a finite alphabet $\Sigma$ on the edges leading to $q$, so that $q$ can read this flag.

Particles move through \emph{expansions} and \emph{contractions}: A contracted particle can \emph{expand} into an unoccupied adjacent node and become expanded. The \emph{head} of an expanded particle is the node it last expanded into and the other node is the \emph{tail}. An expanded particle can \emph{contract} to its head, performing a movement toward its head, or contract back to its tail. The direction of the edge labeled $0$ remains constant during movement.\footnote{When a particle expands, there could be two choices of edge to be labeled $0$, in this case, the edge ''away'' from the particle is labeled $0$.} Neighboring particles can perform \emph{handover} in one of the two ways:
\begin{itemize}
    \item A contracted particle $p$ \emph{push} an expanded neighbor $q$ by expanding into one of the nodes occupied by $q$.
    \item An expanded particle $p$ \emph{pull} a contracted particle $q$ by contracting, forcing $q$ to expand into the currently vacating node.
\end{itemize}
Define $M$ to be the set of all possible movements, $M=\{\idlesf\}\cup\{\expandsf_i:i\in0,\cdots,5\}\cup\{\contractsf_i:i\in0,\cdots,9\}\cup\{\handoversf_i:i\in0,\cdots,5\}$. $\idlesf$ means that a particle does not move, $\expandsf_i$ and $\contractsf_i$ means expand/ contract toward the local direction $i$, and $\handoversf_i$ means that a contracted particle $p$ expand toward its $i$-th direction, forcing an expanded $q$ to contract.

In the execution of an amoebot algorithm, particles progress by performing atomic actions. 
We may assume a standard asynchronous computation model, i.e. only one particle is activates at a time. Each time a particle $p$ is activate, it acts according to a \emph{transition function} $\delta:\Phi\times\Sigma^{10}\times D\rw\Pcal(\Phi\times\Sigma^{10}\times D\times M)$.
where $D$ denote its tail direction of an expanded particle. We use $\varepsilon\in D$ to represents the contracted particle. The transition function takes as input the state of a particle, all the flags it reads, and maps to a set of \emph{turns}. Where a turn specifies a state to transition into, flags to set, and a movement to execute.

For our simulation, we align the notation of \emph{configuration} with other models. For a particle $p$, let $\sigma(p)=(\phi,o,d,f_0,\cdots,f_9)$, where $\phi\in \Phi$ is the state of $p$, $o\in\Ocal$ is the orientation of $p$, $d\in D$ is its tail direction, and $f_i$ is the flag it place at the edge labeled $i$. Let $Q=\Phi\times\Ocal\times D\times\Sigma^{10}$, a configuration is a mapping $\alpha:G_\bigtriangleup\rw Q$. For two configuration $\alpha,\beta$, $\beta$ is \emph{reachable} from $\alpha$ in one step (denoted $\alpha\rw^1_\Gamma\beta$) if $\alpha$ can become $\beta$ after a single particle activation. Let $S$ be the initial configuration, which consists of contracted particles forming a connected shape. According to what we have defined so far, we can also represent an amoebot system by $\Gamma=(Q,S,\delta)$, and we let $Q$ be the new ``state set'' of the amoebot system throughout the rest of this paper.
\section{Simulation}\label{sec:sim}
Our simulation definition is adapted from the one used in~\cite{HP23} and~\cite{ADD19}. Let $\Scal,\Tcal\in\{\textbf{sCRN, aTAM, s-as-TA, async-CA, amoebot}\}$ be two system in different models.

\subsection{Representation function}
Let $Q_\Scal,Q_\Tcal$ be the state sets of $\Scal,\Tcal$ respectively. A \emph{state representation function} $\Rcal$ from $Q_\Scal$ to $Q_\Tcal$ is a function which takes as input a state of $\Scal$, and returns either information about the state of $\Tcal$ or $\UNDsf$ (which implies the image of $\Rcal$ on this state is undefined).

Let $\mathbf{v}$ be the coordinates of a node in the underlying lattice $L$. Let $\alpha$ be a configuration of $\Scal$, and let $\alpha(\mathbf{v})$ be the state of $\alpha$ on the node $\mathbf{v}$.  A \emph{representation function} $\Rcal^*$ from $\Scal$ to $\Tcal$ takes as input an entire configuration $\alpha$ of $\Scal$ and apply $\Rcal$ to every state of $\alpha$, and returns either a corresponding configuration of $\Tcal$, or $\UNDsf$ if there is any state contained in $\alpha$ that mapped to $\UNDsf$.

Note that our definition is more relax than that in~\cite{HP23} and~\cite{ADD19}. In particular we allow $\UNDsf$ to be the image of $R$, which is motivated by the simulation definition in~\cite{AAFJ08}. The relaxation is necessary because when simulating sCRN by non-deterministic async-CA, it is possible that the sCRN perform a bimolecular reaction and hence two adjacent states $A,B$ change simultaneously. While in the non-deterministic async-CA, it can be one cell performing state transition at a time. Suppose $A$ change its state first, so the other cell $B$ must be mapped to different states before and after $A$ performing state transition. Another way of relaxing this definition to fit in with the limitation of non-deterministic async-CA is to let the representation function $\Rcal$ map a neighborhood of CA to a state in sCRN, but this definition is rather non-intuitive in other models. The details are given in Section~\ref{sec:CA}. Notice that except for the simulation of sCRN by non-deterministic acync-CA and the simulation between sCRN and amoebot system, we don't need this relaxation.

\subsection{Simulation Definition}\label{sec:simdef}
Roughly speaking, if we say that $\Scal$ simulates $\Tcal$, we want the pattern of $\Scal$ to evolve like that in $\Tcal$. Intuitively, we want for any sequence of configuration changes $\alpha\rw\beta$ in $\Tcal$, there exists a sequence of configuration changes $\alpha'\rw\beta'$ s.t. $\alpha',\beta'$ are mapped to $\alpha,\beta$ respectively. On the other hand, for any $\alpha'\rw\beta'$ in $\Tcal$, it is also required that there is a sequence of configurations that links the images of $\alpha',\beta'$ under $R^*$. i.e. $R^*(\alpha')\rw R^*(\beta')$.

\begin{definition}\label{def:equiv}
We say that $\Scal$ and $\Tcal$ have equivalent productions (under $\Rcal$), and we write $\Scal\Lrw_\Rcal\Tcal$ if the following conditions hold:
\begin{enumerate}
    \item $\{\Rcal^*(\alpha')|\alpha'\in\Acal(\Scal)\}=\Acal(\Tcal)\cup\{\UNDsf\}$.
    \item $\{\Rcal^*(\alpha')|\alpha'\in\Acal_*(\Scal)\}=\Acal_*(\Tcal)$.
\end{enumerate}
\end{definition}

\begin{definition}\label{def:flw}
We say that $\Tcal$ follows  $\Scal$ (under $\Rcal$) and we write $\Tcal\flw_\Rcal\Scal$ if, $\alpha'\rw_{\Scal}\beta'$ for some $\alpha',\beta'\in\Acal(\Scal)$ implies that $\Rcal^*(\alpha')\rw_{\Tcal}\Rcal^*(\beta')$ or, either $\Rcal^*(\alpha')=\UNDsf$ or $\Rcal^*(\beta')=\UNDsf$.
\end{definition}

To be more rigorous, for every configuration $\alpha\in\Tcal$, there must exist a reachable set in $\Scal$ which is mapped to $\alpha$ s.t. all the images (under $R^*$) of configurations that can grow from this set together cover all possible next configurations from $\alpha$.

\begin{definition}\label{def:mdl}
We say that $\Scal$ models  $\Tcal$ (under $\Rcal$) and we write $\Scal\models_\Rcal\Tcal$, if for every $\alpha\in\Acal(\Tcal)$, there exists $\Pi\subset\Acal(\Scal)$ where $\Pi\neq\emptyset$ and $\Rcal^*(\alpha')=\alpha\ \forall\alpha'\in\Pi$, such that for every $\beta\in\Acal(\Tcal)$ where $\alpha\rw\beta$, the followings hold:
\begin{enumerate}
    \item for every $\alpha'\in\Pi$, there exists $\beta'\in\Acal(\Scal)$ where $\Rcal^*(\beta')=\beta$ and $\alpha'\rw_{\Scal}\beta'$.
    \item for every $\alpha''\in\Acal(\Scal)$ where $\alpha''\rw_{\Scal}\beta'$ s.t. $\Rcal^*(\alpha'')=\alpha$ and $\Rcal^*(\beta')=\beta$, there exists $\alpha'\in\Pi$ such that $\alpha'\rw_{\Scal}\alpha''$.
\end{enumerate}
\end{definition}
\begin{definition}\label{def:sim}
    We say that $\Scal$ simulates $\Tcal$ (under $\Rcal$) if $\Scal\Lrw_\Rcal\Tcal$, $\Tcal\flw_\Rcal\Scal$, and j$\Scal\models_\Rcal\Tcal$.
\end{definition}

\subsection{Some notations}
In the succeeding chapters, we are going to state our simulation results. Before continuing, we give some notations that will be used frequently.
\begin{itemize}
    \item For $d=N,E,S,W$, define $d^{-1}=S,W,N,E$ respectively.
    \item Define $\Dcal=\{V,E,S,W\}$ and $\psi_\Dcal=(\psi_N,\psi_E,\psi_S,\psi_W)$.
    \item Define $\psi_{-d}=\{\psi_N,\psi_E,\psi_S,\psi_W\}\setminus\{\psi_d\}$. That is, we use $-d$ to represent the set of directions without $d$.
    \item $\lrangle{\psi',\psi_{-d}}$ means that we place $\psi'\cup\psi_{-d}$ in the $(N,E,S,W)$ order.
\begin{itemize}
    \item If $d=N$, $\lrangle{\psi',\psi_{-d}}=(\psi',\psi_E,\psi_S,\psi_W)$.
    \item If $d=E$, $\lrangle{\psi',\psi_{-d}}=(\psi_N,\psi',\psi_S,\psi_W)$.
    \item If $d=S$, $\lrangle{\psi',\psi_{-d}}=(\psi_N,\psi_E,\psi',\psi_W)$.
    \item If $d=W$, $\lrangle{\psi',\psi_{-d}}=(\psi_N,\psi_E,\psi_S,\psi')$.
\end{itemize}
\end{itemize}

\section{Simulation of unit-seeded directed sCRN by unit-seeded sCRN (up to rotation and reflection)}\label{sec:sCRN-dsCRN}
In this section, we show that an unit-seeded sCRN can simulate an unit-seeded directed sCRN up to rotation and reflection.
\begin{theorem}\label{thm:sCRN-dsCRN}
    Given a unit-seeded directed sCRN $\Gamma=(Q,S,R)$, there exists a unit-seeded sCRN $\Gamma'=(Q',S',R')$ which simulates $\Gamma$ up to rotation and reflection.
\end{theorem}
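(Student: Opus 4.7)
The plan is to construct a unit-seeded sCRN $\Gamma' = (Q', S', R')$ that realizes every execution of $\Gamma$ up to a consistent global rotation and reflection. The heart of the simulation is the coloring-on-the-fly technique, whereby we equip cells with a coloring of $\Zbb^2$ in which every cell has four neighbors of pairwise distinct colors; for concreteness I use the periodic $9$-coloring $c(x,y) = (x \bmod 3, y \bmod 3)$. Once a cell carries a color, it can identify the direction of any already-colored neighbor purely from the observed state. The state set $Q'$ will consist of ``colored'' copies $(q,c)$ with $q \in Q$ and $c$ a color, together with the blank state $\Ocal$ and a collection of transient auxiliary states used during the orientation process.

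The first technical step is to bootstrap a consistent orientation at the seed. The seed is placed in a special state $s'$ with an internal counter; through four successive bimolecular reactions of the form $(s_k, \Ocal) \to (s_{k-1}, t_k)$, it stamps each of its four blank neighbors with distinct tentative labels $t_1, \dots, t_4$. The four diagonal neighbors of the seed are then used as referees: any two perpendicular $t_i, t_j$ share exactly one common blank neighbor (the corner between them), whereas opposite pairs share none. Through a sequence of handshake reactions $(t_i, \Ocal) \to (t_i^a, q_i)$ and $(t_j, q_i) \to (t_j^b, q_{ij})$, the corners become doubly stamped with the identities of the perpendicular pair they lie between, and then propagate this information back to the $t_i$ via reactions of the form $(q_{ij}, t_i^a) \to (q_{ij}', t_i^{\perp j})$. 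One $t_i$ is then elected to play the role of ``north'' (the seed picks it via a further transition), which by the perpendicularity information determines the opposite $t_i$ as ``south'' and the remaining two as the $\{E, W\}$ pair; the assignment of east versus west is broken by two additional seed-mediated reactions and corresponds exactly to the reflection freedom permitted in the statement.

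Once the $3 \times 3$ block around the seed is fully oriented, coloring propagates inductively. Each directional cell (say the cell labeled ``$N$'' at $(0,1)$) sees three colored neighbors with mutually distinct states (the seed to its south and the two adjacent corner cells to its east and west) and a single blank; the blank is therefore unambiguously its north, and a reaction $(N_{\mathrm{oriented}}, \Ocal) \to (N^\star, c_{(0,2)})$ colors it correctly. Corner cells in general have only two colored neighbors, so they cannot color on their own; instead we use a cooperative scheme in which the axial propagation of the directional cells first colors enough further cells that every new diagonal corner eventually sees three colored neighbors. On this growing colored substrate we simulate $\Gamma$ in the obvious way: for each reaction $(A, B, C, D, d) \in R$ of $\Gamma$ we include in $R'$ a reaction $((A, c_A), (B, c_B)) \to ((C, c_A), (D, c_B))$ for every pair of adjacent colors $(c_A, c_B)$ that realizes the global direction $d$. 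Blank cells outside the colored region are never touched by a simulated reaction, so the coloring grows only as far as the simulation itself demands.

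The representation function is defined by $\Rcal((q, c)) = q$, $\Rcal(\Ocal) = \Ocal$, and $\Rcal = \UNDsf$ on every transient/auxiliary state, lifted to $\Rcal^*$ cell-wise on configurations. Equivalence of productions, following, and modeling (Definitions~\ref{def:equiv},~\ref{def:flw},~\ref{def:mdl}) then follow by case analysis on the reactions of $R'$: transient states are absorbed by $\UNDsf$, while colored-to-colored reactions mirror $R$ exactly, so terminal configurations of $\Gamma'$ correspond bijectively to those of $\Gamma$ under the rotation/reflection chosen at the seed. The main obstacle is the orientation bootstrap: because the lattice is locally isotropic and the seed has no intrinsic chirality, breaking the four-fold symmetry at the seed and the reflection symmetry between the two $\{E, W\}$ candidates must be achieved through non-deterministic reactions whose outcome is provably consistent with a single global orientation. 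Arguing that the corner-handshake protocol always terminates in such a consistent labeling, and that no spurious intermediate configuration survives into $\Acal_*(\Gamma')$, is the core technical content of the proof.
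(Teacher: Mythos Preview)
Your high-level strategy---a periodic $9$-coloring so that a colored cell can read off the direction of a colored neighbor---matches the paper, and your orientation bootstrap is a reasonable (if different) alternative to the paper's $X,Y,Z$ protocol. But there is a genuine gap in the heart of the construction: you do not actually handle \emph{growing reactions}, and this is exactly where all the difficulty lies.

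Concretely, your simulation rule ``for each $(A,B,C,D,d)\in R$ include $((A,c_A),(B,c_B))\to((C,c_A),(D,c_B))$'' presupposes that \emph{both} reactants are already colored. For a growing reaction $(A,\Ocal,B,C,d)$ the second reactant is the blank state, which in $\Gamma'$ is an uncolored cell; your rule simply does not fire. You try to patch this with the inductive propagation step $(N_{\mathrm{oriented}},\Ocal)\to(N^\star,c_{(0,2)})$, but this creates two problems at once. First, such propagation reactions are always enabled on the frontier, so $\Gamma'$ never terminates and $\{\Rcal^*(\alpha')\mid\alpha'\in\Acal_*(\Gamma')\}=\emptyset\neq\Acal_*(\Gamma)$; your closing sentence ``the coloring grows only as far as the simulation itself demands'' is inconsistent with the mechanism you describe. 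Second, even granting that the target cell eventually becomes colored, nothing in your scheme guarantees that the \emph{block} around the newly produced $C$ is colored, so the next reaction out of $C$ cannot identify its neighbors' directions. The paper's invariant---every non-blank colored species has its entire $3\times3$ block colored (the ``complete coloring property'')---is precisely what makes directed reactions simulable, and maintaining it is the whole content of the \texttt{Growing\_Reactions} protocol with its $\chi_i^j$, $\chi_i^jb_0b_1$, etc.\ species. In the paper, the reaction $(A,\Ocal,B,C,d)$ is simulated not by a single step but by first having $A^i$ trigger a local coloring of $\Bcal(\Ocal)$ (handling collisions with other concurrent coloring attempts, deadlock avoidance, and cleanup of stray auxiliary states), and only then performing the state change.

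A secondary point: the paper maps every auxiliary state to $\Ocal$ rather than to $\UNDsf$, so that throughout the bootstrap and the local coloring phases the image under $\Rcal^*$ remains a legitimate $\Gamma$-configuration. Your choice of $\UNDsf$ is permitted by Definition~\ref{def:equiv}, but it shifts more burden onto the termination argument, which as noted above does not go through with automatic propagation.
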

\subsection{Simulation overview}
Given a s-d-sCRN $\Gamma=(Q,R,S)$, we want to construct a s-sCRN $\Gamma'=(Q',R',S')$ to simulate $\Gamma$. For bimolecular reactions with a specified direction $d$, it is required that each species has a common knowledge of a global orientation. We achieve this by using a 2-hop coloring. The main idea is to partition the plane into blocks of $9$ cells, and color a block with ordering numbers $1,\cdots,9$ if and only if a reaction is about to happen on it. The simulation consists of the following three parts:
\begin{itemize}
    \item \ptc{Determine\_Global\_Orientation}: In this protocol, we color the seed and the eight cells around it with $\{1,\cdots,9\}$.
    \item \ptc{Growing\_Reactions}: If a reaction is possible on a specific cell, we first color the eight cells around it to make sure the global orientation is clear for that cell, and then perform the reaction.
    \item \ptc{State\_Transitions}: Simulate any reactions at a cell who has already known the global orientation.
\end{itemize}

To dive into the details, we first define the \emph{block} of a cell $u$, $\Bcal(u)=\{v:\maxnorm{v-u}\leq1\}$, to be the set of $9$ cells around $u$. And sometimes we use the block of a species $\rho$, write $\Bcal(\rho)$ (with abuse of notation), to mean the block of the position of $\rho$. Define $\dagQ=Q\setminus\{\Ocal\}$. For any reaction of the form $(A,\Ocal,B,C,d)$ where $A,B,C\in\dagQ$, we call it the \emph{growing reaction}, and we denote the set of growing reactions in $R$ by $R_g$.

We now sketch the idea of our simulation. Let $S'=S$ where $S$ maps every cell to the blank state $\Ocal$ except for a special cell that mapped to a seed state, say $s$. In the first part \texttt{Determine\_Global\_Orientation}, we color $\Bcal(s)$ with $\{1,\cdots,9\}$ to give an orientation to the system. And in the second part \texttt{Growing\_Reactions}, we ensure that in the simulation process of any growing reaction $(A,\Ocal,B,C,d)$, the state transition can be performed only if a coloring has been given to $\Bcal(\Ocal)$. This provides the information of the determined global orientation for the simulation of future reactions. The last part \texttt{State\_Transitions} gives the reactions we need for simulating reactions in a complete coloring configuration, which is relatively naive. We construct $\Gamma'$ by adding states and reactions to $Q'$ and $R'$ successively. The full description of the protocols and more details are provided in the following paragraphs.

\paragraph{\texttt{Determine\_Global\_Orientation}}

This protocol aims to color all cells in the block of the seed. So that the coloring can be extended to the entire surface and be used as the global orientation. Let $S'=S$ where $s$ is the special seed state in $\Gamma$. We color $\Bcal(s)$ with $\{1,\cdots,9\}$ and use $\Ocal^i$ to denote the colored cell, where the superscript $i\in\{1,\cdots,9\}$ is the color it receives.

Initially, we add the reaction $s+\Ocal\rw0+1$ to $R'$. Then we build the block of $0$ by forming $X,Y$ toward each direction of $0,1$ respectively. When $X,Y$ meet each other, they perform state transition to $(2,3)$ respectively, meaning that they are parallel and adjacent to $(0,1)$. From $2$ we grow $Z$ toward the remaining two directions, the first to meet $X$ perform $Z+X\rw4+5$, and the other $Z$ meeting $5$ begins to color the block of $0$ in a counter clockwise order. Except for $s$, which is to be turned into $s^5$, other cells receive $\Ocal^i$ where $i$ depends on their positions. After the coloring is complete, we need to clear off the redundant states $Y,Z$ grown outside $\Bcal(s^5)$. So we also have $\Ocal^8+Y\rw\Ocal^8+\Ocal$, $\Ocal^6+Z\rw\Ocal^6+\Ocal$, and $\Ocal^4+Z\rw\Ocal^4+\Ocal$ in $R'$. In total we must have the following reactions in $R'$:
\begin{enumerate}
    \item $s+\Ocal\rw0+1$.\comm{Initiate the coloring.}
    \item $0+\Ocal\rw0+X$, $1+\Ocal\rw1+Y$, $X+Y\rw2+3$.
    \item $2+\Ocal\rw2+Z$, $Z+X\rw4+5$, $Z+5\rw\Ocal^1+\Ocal^2$.
    \item $\Ocal^2+4\rw\Ocal^2+\Ocal^3$, $\Ocal^3+2\rw\Ocal^3+\Ocal^6$, $\Ocal^6+3\rw\Ocal^6+\Ocal^9$, $\Ocal^9+1\rw\Ocal^9+\Ocal^8$, $\Ocal^8+3\rw\Ocal^8+\Ocal^7$, $\Ocal^7+2\rw\Ocal^7+\Ocal^4$, $\Ocal^4+0\rw\Ocal^4+s^5$.
    
    \comm{$s^5$ is the new seed with its block colored.}
    \item $\Ocal^8+Y\rw\Ocal^8+\Ocal$, $\Ocal^6+Z\rw\Ocal^6+\Ocal$, $\Ocal^4+Z\rw\Ocal^4+\Ocal$.
    
    \comm{Clear off the rubbish.}
\end{enumerate}
Figure~\ref{fig:determine_orientation} shows an evolution when we have this set of reactions.

Notice that the simulation system must be indifferent up to rotation and reflection since any starting reaction $s+\Ocal\rw A+B$ looks the same for $s$ in its four directions, and then any reaction can be perform symmetrically around $\overline{AB}$. Therefore, ``simulation up to rotation and reflection'' is a necessary relaxation.

\begin{figure}[htbp]
    \centering
    \includegraphics[width=\textwidth]{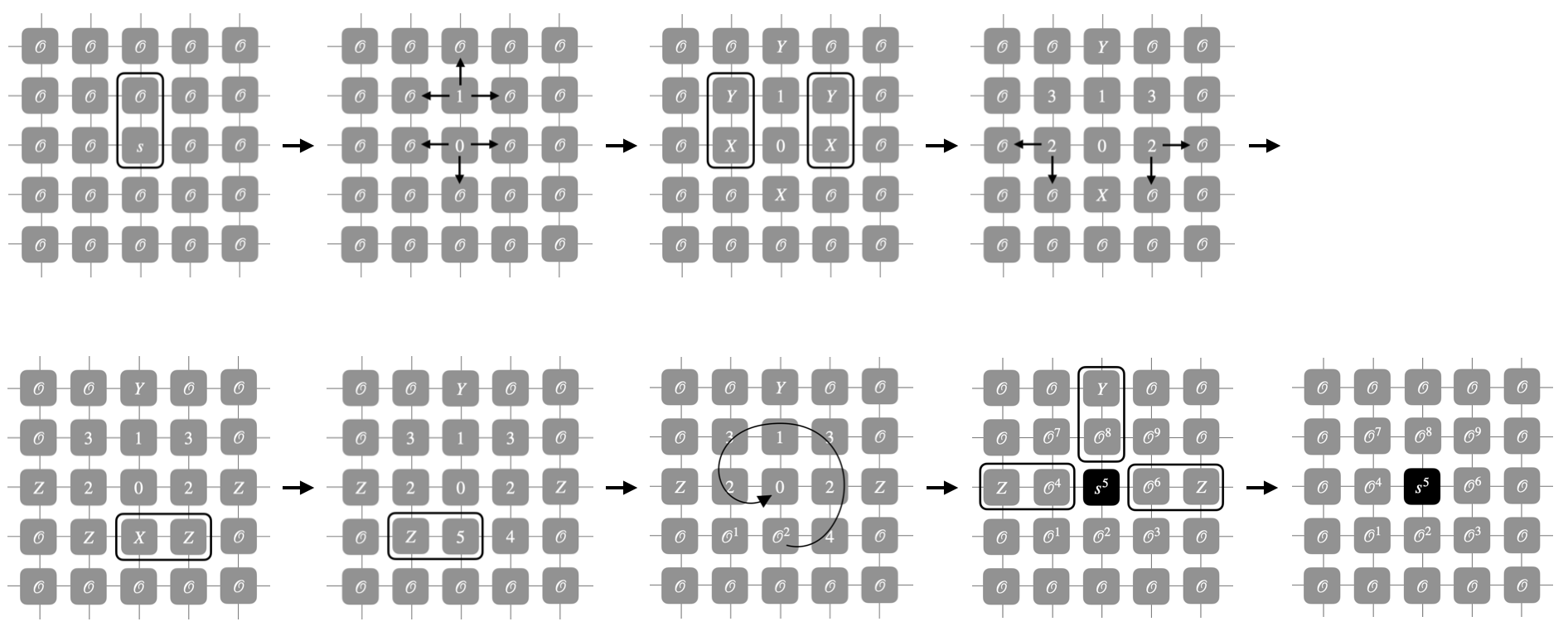}
    \caption{Coloring the block of the seed by} \ptc{Determine\_Global\_Orientation}
    \label{fig:determine_orientation}
\end{figure}

\paragraph{\texttt{Growing\_Reactions}}
For any growing reactions of this form $(A,\Ocal,B,C,d)$, we first give a coloring on $\Bcal(\Ocal)$ and then perform the reaction. Before giving the protocol, there are some notations we need to define first. We use $\psi^i$ to represent the colored species, where the superscript $i$ is the color of that species. In the simulation, we will have $8$ kinds of states in $Q'$, which are introduced in detail later:
\begin{align*}
    & \{\Ocal\}, \{\Ocal^i:i\in[9]\}, \{\sigma^i:\sigma\in\dagQ,i\in[9]\}, \{\chi^i:i\in[9]\}, \{\chi_{ij}:i,j\in[9]\},\\
    & \{\chi_i^j:i,j\in[9]\}, \{\chi_i^jp:i,j\in[9]\}, \{\chi_i^jb_0b_1:i,j\in[9], b_0,b_1\in\{0,1\}\}.
\end{align*}
Let $Q'$ be the union of them. Among these states, let 
\[\Xi^j=\{\sigma^j,\Ocal^j,\chi^j,\chi_i^j,\chi_i^jp,\chi_i^jb_0b_1:\sigma\in\dagQ,i\in[9],b_0,b_1\in\{0,1\}\}\]
be the set of all colored states whose color is $j$. And let $\Xi=\cup_{j=1}^9\Xi^j$.

Let $\dagQ_{\text{colored}}=\{\sigma^i:\sigma\in\dagQ,i\in[9]\}$. Through the representation function $\Rcal$, we map the state $\psi^i\in\dagQ_{\text{colored}}$ to $\psi$ for all $i\in[9]$. Except for the states in $\dagQ_{\text{colored}}$, all other states are mapped to the blank state $\Ocal$. We say that a configuration $\alpha'\in\Gamma'$ satisfies the \emph{complete coloring property} if for any species $\psi\in\dagQ_{\text{colored}}$, $\Bcal(\psi)$ are all colored. This protocol aims at maintaining this property.

Suppose now we have a configuration $\alpha'\in\Gamma'$ that satisfies the complete coloring property, and there is some growing reaction $(A,\Ocal,B,C,d)\in R$ that can be performed on $\alpha=\Rcal^*(\alpha')$. 

For example, Figure~\ref{fig:complete_coloring} shows a configuration with the complete coloring property. If $(A,\Ocal,B,C,\rw)\in R$, Then we want $A^1,\Ocal^2$ to be able to become $(B^1,C^2)$ while preserving the complete coloring property at the same time. Existing colors $1,2$ on $A,\Ocal$ help specifying the growing direction. Since $\Bcal(A^1)$ has been colored, we only need to color $W_{A^1}(\Ocal^2)=\Bcal(\Ocal^2)\setminus\Bcal(A^1)$, which are the three vertical cells in Figure~\ref{fig:complete_coloring} enclosed by the dashed lines. We use the notation $W_\psi(\psi')$ to represent $\Bcal(\psi')\setminus\Bcal(\psi)$ for adjacent species $\psi,\psi'\in Q'$.

Therefore, the first reaction we add to $R'$ is $A^1+\Ocal^2\rw A^1+\chi_1^2$. $\chi_1^2$ serves as an signal to begin the local coloring process. For the remaining part of this section, we'll give several examples of configuration around the block of $\chi_1^2$, and then specify the required reactions according to that situation.

Since $\Bcal(A^1)$ must have been completely colored due to the complete coloring property, we deliver the information of growing direction to the East cell of $\chi_1^2$. Suppose the cell is at position $v$ with state $\psi(v)$. To classify the state possibly encountered by $\chi_1^2$, we need to use the following property of a special kind of species $\{\chi^i,i\in[9]\}$: As soon as some state $\chi^i$ ($i\in[9]$) appears, the block of $\chi^i$ must have been colored.

With this property, we divide the possible states of $v$ into $4$ classes:
\begin{enumerate}
    \item $\psi(v)=\sigma^3$ for some $\sigma\in\dagQ\cup\{\chi\}$. It means that $\Bcal(v)$ has been colored, which implies $\Bcal(\chi_1^2)\setminus\Bcal(A^1)\subset\Bcal(v)$ has been colored as well. Then we directly turn $\chi_1^2$ to $\chi^2$, representing that $\Bcal(\chi_1^2)$ has been colored. Therefore we add the following reactions to $R'$:
    \begin{flalign*}
        \chi_1^2+\psi^3\rw\chi^2+\psi^3, \text{ for all }\psi\in\dagQ\cup\{\chi\}.&&
    \end{flalign*}
        
    \item $\psi(v)=\Ocal\text{ or }\Ocal^3$. We don't know if $\Bcal(v)$ is colored, therefore we add the reaction $\chi_1^2+\Ocal/\Ocal^3\rw\chi_1^2p+\chi_2^300$. This announce the starting point of coloring $\Bcal(\chi_2^300)\setminus\Bcal(\chi_1^2p)$. Therefore we add the following reactions to $R'$:
    \begin{flalign*}
        &\chi_1^2+\Ocal\rw\chi_1^2p+\chi_2^300.\\
        &\chi_1^2+\Ocal^3\rw\chi_1^2p+\chi_2^300.&&
    \end{flalign*}
        
    \item $\psi(v)=\chi_6^3\text{ or }\chi_9^3$. We need to be cautious about the deadlock in Figure~\ref{fig:deadlock}. In this case, we have $\chi_1^2$ continue to color the block of itself. Therefore we add the following reactions to $R'$:
    \begin{flalign*}
        &\chi_1^2+\chi_6^3\rw\chi_1^2p+\chi_2^300.\\
        &\chi_1^2+\chi_9^3\rw\chi_1^2p+\chi_2^300.&&
    \end{flalign*}

    \item $\psi(v)=\chi_1^3$. It means that some species $\sigma^1$ (where $\sigma\in\dagQ$) located two cells away from $A^1$ is also attempting to perform a growing reactions in an opposite direction. i.e. they are growing toward each other. Again by the complete coloring assumption, $\Bcal(\chi_1^2)\setminus\Bcal(A^1)\subset\Bcal(\sigma^1)$ and $\Bcal(\chi_1^3)\setminus\Bcal(\sigma^1)\subset\Bcal(A^1)$. Both $\chi_1^2,\chi_1^3$ know their blocks are colored, so they turn into $\chi^2$ and $\chi^3$ simultaneously. Therefore wee add the following reaction to $R'$:
    \begin{flalign*}
        \chi_1^2+\chi_1^3\rw\chi^2+\chi^3.&&
    \end{flalign*}
\end{enumerate} 

For case $2$ and $3$, the coloring of $\Bcal(\chi_1^2p)$ has not been done, $\chi_2^300$ is just produced to start coloring the cells to the North and South, call them $v_N,v_S$ respectively. $\chi_2^300$ will be turned into $\chi_2^3b_0b_1$ for some $b_0,b_1\in\{0,1\}$ that observes and records whether $v_N,v_S$ has been colored. We call $\chi_2^3b_0b_1$ the \emph{coloring species}. Let the states of $v_N,v_S$ be $\psi(v_N),\psi(v_S)$. For convenience, we further let the cell to the East of $\chi_2^300$ be $v_E$ with state $\psi(v_E)$. There are several situations that probably happens:

\begin{enumerate}
    \item $\psi(v_N)=\Ocal$. $\chi_2^30b$ does not know whether it encounter $v_N$ or $v_E$, so it turns the cell into $\chi_{23}$. $\chi_{23}$ then starts observing its neighbors, when seeing $\psi\in\Xi^5$, it confirms that itself must be colored $6$. Therefore we add the following reactions to $R'$:
    \begin{flalign*}
        &\chi_2^30b+\Ocal\rw\chi_2^30b+\chi_{23}, \text{ for all }b\in\{0,1\}.&&\\
        &\chi_{23}+\psi\rw\Ocal^6+\psi, \text{ for all }\psi\in\Xi^5.&&
    \end{flalign*}
    The case $\psi(v_S)=\Ocal$ is similar, so we add the following reactions to $R'$:
    \begin{flalign*}
        &\chi_2^3b0+\Ocal\rw\chi_2^3b0+\chi_{23}, \text{ for all }b\in\{0,1\}.&&\\
        &\chi_{23}+\psi\rw\Ocal^9+\psi, \text{ for all }\psi\in\Xi^8.&&
    \end{flalign*}

    Notice that there might be a redundant state $\chi_{23}$ produced at cell $v_E$, which must be turned back into $\Ocal$ later.

    \item $\psi(v_N)\in\{\chi_{39},\chi_{45},\chi_{54}\}$. When encountering these states, $\chi_2^30b$ knows by the index that they are redundant states from other simulation process of some growing reactions, and that the cell it sees is $v_N$. We can directly turn that cell into $\Ocal^6$. Therefore we add the following reactions to $R'$:
    \begin{flalign*}
        \chi_2^30b+\psi\rw\chi_2^31b+\Ocal^6, \text{ for all }\psi\in\{\chi_{39},\chi_{45},\chi_{54}\}.&&
    \end{flalign*}
    The case $\psi(v_S)\in\{\chi_{36},\chi_{78},\chi_{87}\}$ is similar, so we add the following reactions to $R'$:
    \begin{flalign*}
        \chi_2^3b0+\psi\rw\chi_2^3b1+\Ocal^9, \text{ for all }\psi\in\{\chi_{36},\chi_{78},\chi_{87}\}.&&
    \end{flalign*}

    \item $\psi(v_N)=\chi_5^6b0$. This means that there is some species $\sigma^4$ (where $\sigma\in\dagQ$) performing some growing reactions toward $v_N$, so $\chi_2^30b$ and $\chi_5^6b0$ could help each other complete their coloring process by turning itself into a colored state. Therefore we add the following reaction to $R'$:
    \begin{flalign*}
        \chi_2^30b+\chi_5^6b0\rw\chi_2^31b+\chi_5^6b1, \text{ for }b=0,1.&&
    \end{flalign*}
    For the other direction, $\psi(v_N)$ could be $\chi_4^60b$, indicating that a species $\sigma^5$ is performing a growing reaction toward $v_N$. Similar as above, we must have 
    \begin{flalign*}
        \chi_2^30b+\chi_4^60b\rw\chi_2^31b+\chi_4^61b, \text{ for }b=0,1.&&
    \end{flalign*}
    The case $\psi(v_S)=\chi_8^90b\text{ or }\chi_7^9b0$ is symmetric, so we add the following reactions to $R'$:
    \begin{flalign*}
        &\chi_2^3b0+\chi_8^90b\rw\chi_2^3b1+\chi_8^91b, \text{ for }b=0,1.\\
        &\chi_2^3b0+\chi_7^9b0\rw\chi_2^3b1+\chi_7^9b1, \text{ for }b=0,1.&&
    \end{flalign*}

    \item $\psi(v_N)\in\Xi^6\setminus\{\chi_5^6b0,\chi_4^60b\}$. Then $\psi(v_N)$ is viewed as a colored species, so we directly turned $\chi_2^30b$ into $\chi_2^31b$. We add the following reactions o $R'$:
    \begin{flalign*}
        \chi_2^30b+\psi\rw\chi_2^31b+\psi, \text{ for all }\psi(v_N)\in\Xi^6\setminus\{\chi_5^6b0,\chi_4^60b\}.&&
    \end{flalign*}
    The case $\psi(v_S)\in\Xi^9\setminus\{\chi_8^90b,\chi_7^9b0\}$ is similar, so we add the following reactions to $R'$:
    \begin{flalign*}
        \chi_2^3b0+\psi\rw\chi_2^3b1+\psi, \text{ for all }\psi(v_S)\in\Xi^9\setminus\{\chi_8^90b,\chi_7^9b0\}.&&
    \end{flalign*}
\end{enumerate}

Eventually, $\chi_2^300$ will become $\chi_2^311$, representing that both $v_N,v_S$ have been colored. Then we use a reaction $\chi_2^311+\chi_1^2p\rw\Ocal^3+\chi^2$ to announce the termination of the local coloring process. The state transition can now be simulated by adding reaction $A^1+\chi^2\rw B^1+C^2$ to $R'$. One thing remains is to eliminate the redundant state $\chi_{23}$ produced in the first case. So we need the reaction $\Ocal^3+\chi_{23}\rw\Ocal^3+\Ocal$.

The above description shows the simulation process of a special case that apply a growing reaction to a species $A$ colored by $1$ toward East. Figure~\ref{fig:growing_reaction} gives a sequence of reactions that could possibly be performed in our simulation process when there is a growing reaction $(A,\Ocal,B,C,\rw)$ in $R$.

\begin{figure}[htbp]
    \centering
    \begin{subfigure}{0.65\textwidth}
        \centering
        \includegraphics[width=0.75\textwidth]{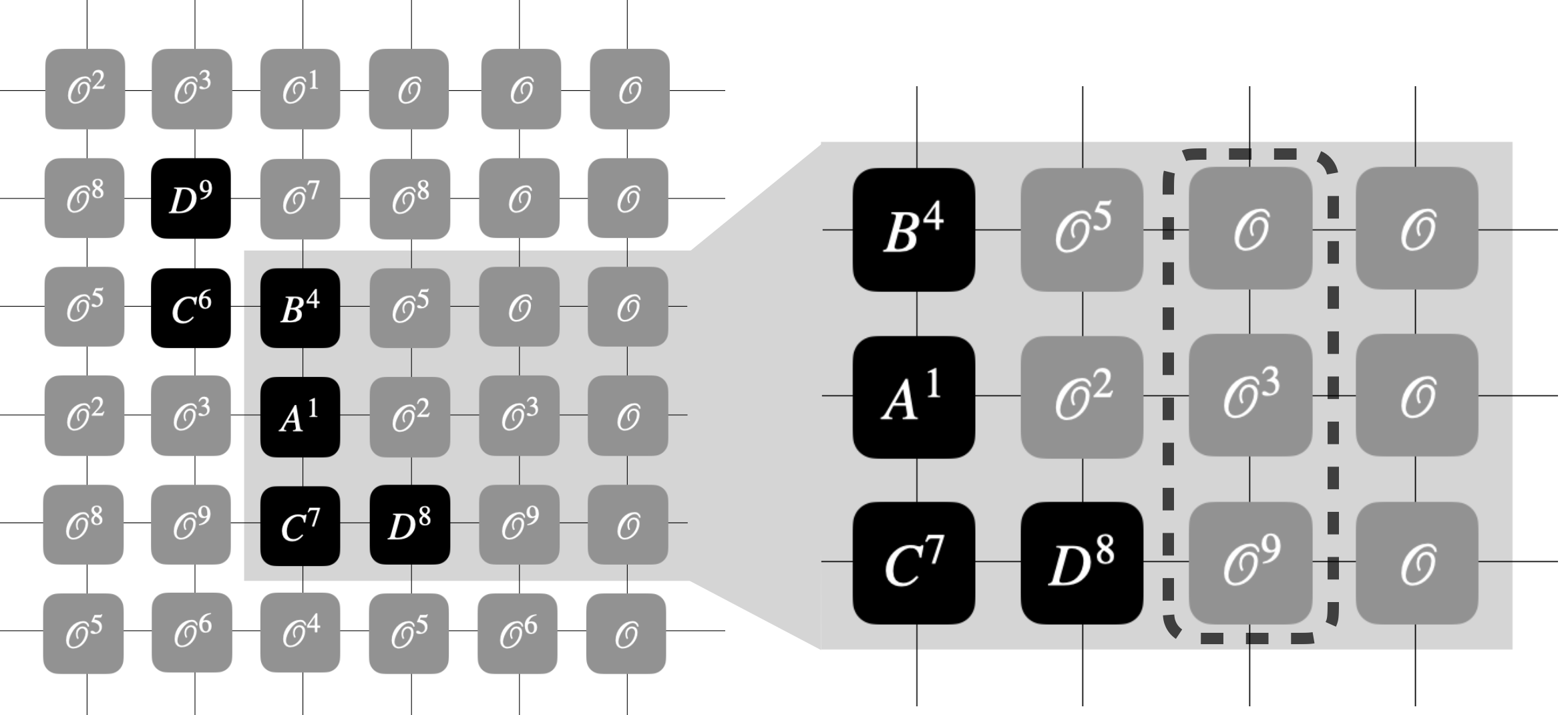}
        \caption{A complete coloring configuration.}
        \medskip
        \small The wall centered at $\Ocal^3$ are to be colored.
        \label{fig:complete_coloring}
    \end{subfigure}%
    \begin{subfigure}{0.35\textwidth}
        \centering
        \includegraphics[width=0.7\textwidth]{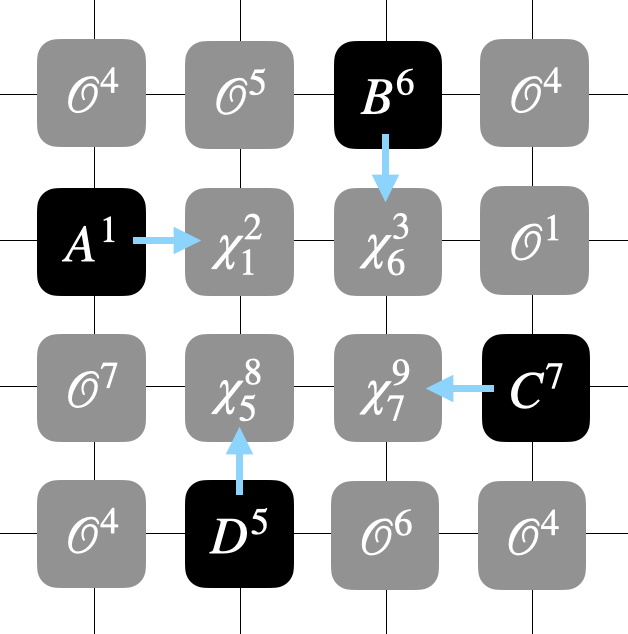}
        \caption{The undesired deadlock.}
        \label{fig:deadlock}
    \end{subfigure}%

    \bigskip
    \begin{subfigure}[b]{\textwidth}
        \centering
        \includegraphics[width=\textwidth]{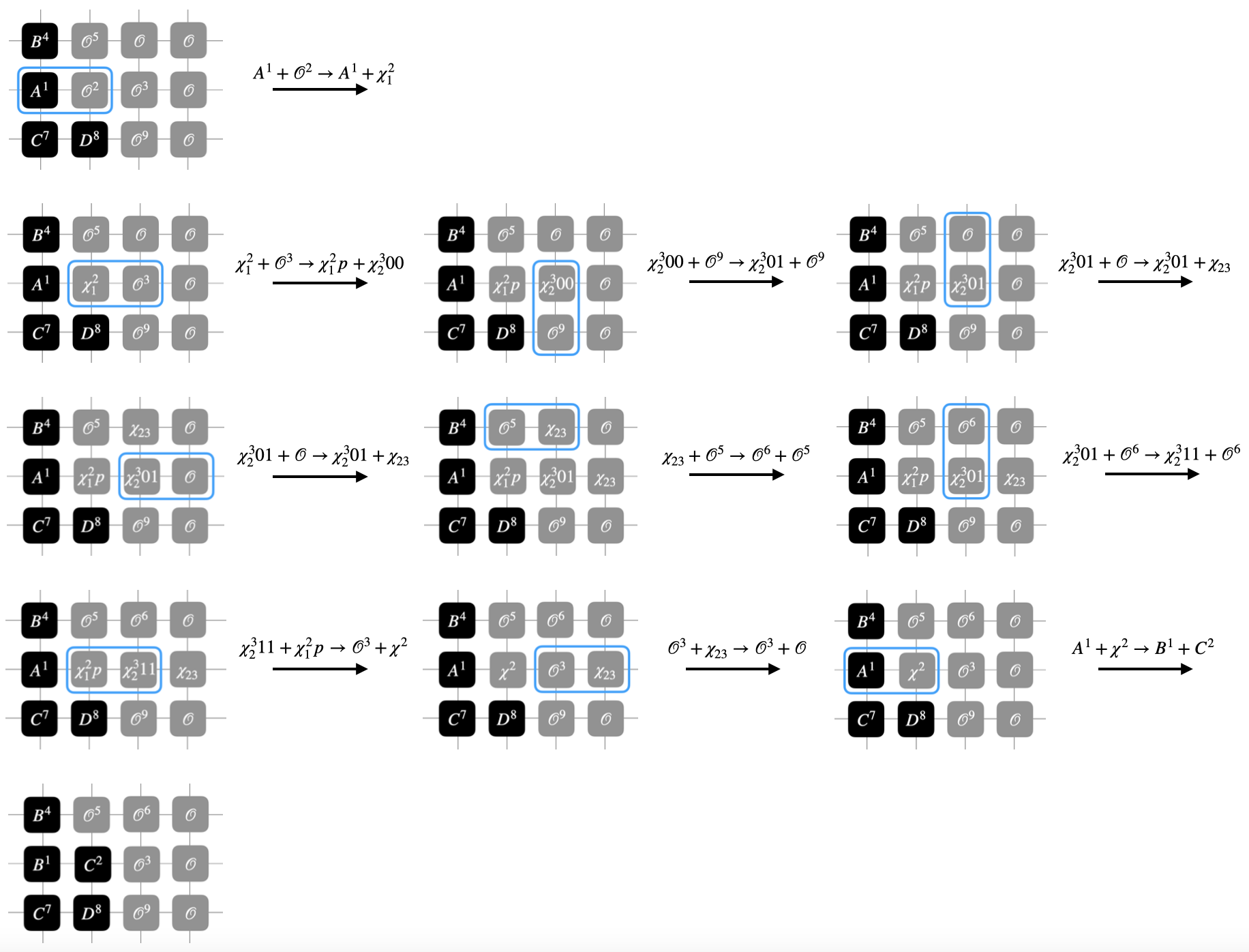}
        \caption{A possible sequence of reactions simulating $(A,\Ocal,B,C,d)$.}
        \label{fig:growing_reaction}
    \end{subfigure}%
    \caption{An example of the simulation of a growing reaction.}
     
    \label{fig:coloring_sequence}
\end{figure}


We could use the same method to construct the reaction sets needed for the growing reactions on a species colored $i\in[9]$ toward direction $d\in\{\uw,\rw,\dw,\lw\}$. This is just applying some permutations to all the reactions we constructed so far. 

Let the set be $\hat{R}$, we mean by ''apply a permutation $\pi$ to $\hat{R}$'' to represent changing all the subscript $i$ and superscript $j$ in the above simulation process to $\pi(i)$ and $\pi(j)$ respectively. In general, for every growing directions $d\in\Dcal$, we have to include the reactions on all $A^i,i\in[9]$. Observe that each case can be view as applying some rotations and translations to $\hat{R}$. For example, let $\pi=\begin{pmatrix}
    1&2&3&4&5&6&7&8&9\\
    3&6&9&2&5&8&1&4&7
\end{pmatrix}$. If we apply $\pi$ to $\hat{R}$, then we get all the reactions we need for simulating the growing reaction $(A,\Ocal,B,C,\uw)$ (toward North), starting at $A^3$. Further, let $\pi'=\begin{pmatrix}
    1&2&3&4&5&6&7&8&9\\
    3&1&2&6&4&5&9&7&8
\end{pmatrix}$, then applying $\pi'\circ\pi$ to $\hat{R}$ gives us the reactions needed for simulating the growing reaction toward North that starts at $A^2$.

We represent all the needed permutations by two-line notations.
 
First we define $4$ translations $\pi_r,\pi_l,\pi_u,\pi_d$:
\begin{equation*}
    \begin{split}
        \pi_r=\begin{pmatrix}
    1&2&3&4&5&6&7&8&9\\
    2&3&1&5&6&4&8&9&7
    \end{pmatrix}, \pi_l=\begin{pmatrix}
    1&2&3&4&5&6&7&8&9\\
    3&1&2&6&4&5&9&7&8
    \end{pmatrix},\\
    \pi_u=\begin{pmatrix}
    1&2&3&4&5&6&7&8&9\\
    4&5&6&7&8&9&1&2&3
    \end{pmatrix}, \pi_d=\begin{pmatrix}
    1&2&3&4&5&6&7&8&9\\
    7&8&9&1&2&3&4&5&6
    \end{pmatrix}.
    \end{split}
\end{equation*}
Let $\Pi_{\text{translation}}=\{id,\pi_r,\pi_l,\pi_u,\pi_d,\pi_u\circ\pi_r,\pi_u\circ\pi_l,\pi_d\circ\pi_r,\pi_d\circ\pi_l\}$. By applying each permutation in $\Pi_{\text{translation}}$ to $\hat{R}$, we could cover all the growing reactions $(A,\Ocal,B,C,\rw)$ starting at any $A^i\in Q'\text{ s.t. }i\in[9]$.

We also define $4$ rotations $\pi_1,\pi_2,\pi_3,\pi_4$:
\begin{equation*}
    \begin{split}
        \pi_1=id=\begin{pmatrix}
    1&2&3&4&5&6&7&8&9\\
    1&2&3&4&5&6&7&8&9
    \end{pmatrix}, \pi_2=\begin{pmatrix}
    1&2&3&4&5&6&7&8&9\\
    3&6&9&2&5&8&1&4&7
    \end{pmatrix},\\
    \pi_3=\begin{pmatrix}
    1&2&3&4&5&6&7&8&9\\
    9&8&7&6&5&4&3&2&1
    \end{pmatrix}, \pi_4=\begin{pmatrix}
    1&2&3&4&5&6&7&8&9\\
    7&4&1&8&5&2&9&6&3
    \end{pmatrix}.
    \end{split}
\end{equation*}
Let $\Pi_{\text{rotation}}=\{\pi_1,\pi_2,\pi_3,\pi_4\}$. By applying the permutations in $\Pi_{\text{rotation}}$ to $\hat{R}$, we could simulate any growing reactions $(A,\Ocal,B,C,d)$ toward different directions.

Let $\Pi_k=\{\pi'\circ\pi_k:\pi'\in\pi_{\text{transition}}\}$, $k\in[4]$. By applying $\pi\in\Pi_k$ to $\hat{R}$, we can simulate any reactions starting from $A^i,i\in[9]$ toward a specific direction $d$ relative to $k$. The entire simulation protocol for a growing reaction $(A,\Ocal,C,D,d)$ is described below: For all growing reactions $(A,\Ocal,B,C,d)\in R_g$, pick $k\in[4]$ s.t. $(d,k)\in\{(\rw,1),(\uw,2),(\lw,3),(\dw,4)\}$. For all $\pi\in\Pi_k$, add the following reactions to $R'$:
\begin{enumerate}
    \item $A^{\pi(1)}+\Ocal^{\pi(2)}\rw A^{\pi(1)}+\chi_{\pi(1)}^{\pi(2)}$.\comm{Triggering the local coloring process.}
    
    \item $\chi_{\pi(1)}^{\pi(2)}+\psi^{\pi(3)}\rw\chi^{\pi(2)}+\psi^{\pi(3)}$, for all $\psi\in\dagQ\cup\{\chi\}$.
    
    \comm{$\Bcal(\chi_{\pi(1)}^{\pi(2)})$ has been colored.}
    
    \item $\chi_{\pi(1)}^{\pi(2)}+\Ocal/\Ocal^{\pi(3)}\rw\chi_{\pi(1)}^{\pi(2)}p+\chi_{\pi(2)}^{\pi(3)}00$.\\
    $\chi_{\pi(1)}^{\pi(2)}+\chi_{\pi(6)}^{\pi(3)}/\chi_{\pi(9)}^{\pi(3)}\rw\chi_{\pi(1)}^{\pi(2)}p+\chi_{\pi(2)}^{\pi(3)}00$.
    
    \comm{$\Bcal(\chi_{\pi(1)}^{\pi(2)}p)\setminus\Bcal(A^1)$ is going to be colored.}
    
    \item $\chi_{\pi(1)}^{\pi(2)}+\chi_{\pi(1)}^{\pi(3)}\rw\chi^{\pi(2)}+\chi^{\pi(3)}$.

    \comm{Both blocks $\Bcal(\chi_{\pi(1)}^{\pi(2)}),\Bcal(\chi_{\pi(1)}^{\pi(3)})$ have been colored.}
    
    \item $\chi_{\pi(2)}^{\pi(3)}0b+\Ocal\rw\chi_{\pi(2)}^{\pi(3)}0b+\chi_{23}$, for all $b\in\{0,1\}$.\\ 
    $\chi_{\pi(2)\pi(3)}+\psi\rw\Ocal^{\pi(6)}+\psi$, for all $\psi\in\Xi^5$.\\
    $\chi_{\pi(2)}^{\pi(3)}b0+\Ocal\rw\chi_{\pi(2)}^{\pi(3)}b0+\chi_{\pi(2)\pi(3)}, \text{ for all }b\in\{0,1\}$.\\
    $\chi_{\pi(2)\pi(3)}+\psi\rw\Ocal^{\pi(9)}+\psi, \text{ for all }\psi\in\Xi^8$.

    \comm{A coloring species observes the adjacent blank species $\Ocal$.}
    
    \item $\chi_{\pi(2)}^{\pi(3)}0b+\psi\rw\chi_{\pi(2)}^{\pi(3)}1b+\Ocal^{\pi(6)}, \text{ for all }\psi\in\{\chi_{\pi(3)\pi(9)},\chi_{\pi(4)\pi(5)},\chi_{\pi(5)\pi(4)}\}$.\\
    $\chi_{\pi(2)}^{\pi(3)}b0+\psi\rw\chi_{\pi(2)}^{\pi(3)}b1+\Ocal^{\pi(9)}, \text{ for all }\psi\in\{\chi_{\pi(3)\pi(6)},\chi_{\pi(7)\pi(8)},\chi_{\pi(8)\pi(7)}\}$.

    \comm{A coloring species meets the redundant states from other growing reactions}
    
    \item $\chi_{\pi(2)}^{\pi(3)}0b+\chi_{\pi(5)}^{\pi(6)}b0\rw\chi_{\pi(2)}^{\pi(3)}1b+\chi_{\pi(5)}^{\pi(6)}b1, \text{ for }b=0,1$.\\
    $\chi_{\pi(2)}^{\pi(3)}0b+\chi_{\pi(4)}^{\pi(6)}0b\rw\chi_{\pi(2)}^{\pi(3)}1b+\chi_{\pi(4)}^{\pi(6)}1b, \text{ for }b=0,1$.\\
    $\chi_{\pi(2)}^{\pi(3)}b0+\chi_{\pi(8)}^{\pi(9)}0b\rw\chi_{\pi(2)}^{\pi(3)}b1+\chi_{\pi(8)}^{\pi(9)}1b, \text{ for }b=0,1$.\\
    $\chi_{\pi(2)}^{\pi(3)}b0+\chi_{\pi(7)}^{\pi(9)}b0\rw\chi_{\pi(2)}^{\pi(3)}b1+\chi_{\pi(7)}^{\pi(9)}b1, \text{ for }b=0,1$.

    \comm{Two coloring species observing each other's.}
    
    \item 
    $\chi_{\pi(2)}^{\pi(3)}0b+\psi\rw\chi_{\pi(2)}^{\pi(3)}1b+\psi, \text{ for all }\psi(v_N)\in\Xi^{\pi(6)}\setminus\{\chi_{\pi(5)}^{\pi(6)}b0,\chi_{\pi(4)}^{\pi(6)}0b\}$.\\
    $\chi_{\pi(2)}^{\pi(3)}b0+\psi\rw\chi_{\pi(2)}^{\pi(3)}b1+\psi, \text{ for all }\psi(v_S)\in\Xi^{\pi(9)}\setminus\{\chi_{\pi(8)}^{\pi(9)}0b,\chi_{\pi(7)}^{\pi(9)}b0\}$.

    \comm{The cell has been colored.}

    \item $\chi_{\pi(1)}^{\pi(2)}p+\chi_{\pi(2)}^{\pi(3)}11\rw\chi^{\pi(2)}+\Ocal^{\pi(3)}$.\comm{Announce the termination of local coloring.}

    \item $\Ocal^{\pi(3)}+\chi_{\pi(2)\pi(3)}\rw\Ocal^{\pi(3)}+\Ocal$.
    
    \comm{Clear off the redundant states produced by the coloring species.}

    \item $A^{\pi(1)}+\chi^{\pi(2)}\rw B^{\pi(1)}+C^{\pi(2)}$.\comm{Perform the state transition.}
    
\end{enumerate}

As we complete the simulation of growing reactions, we now briefly explain how to simulate the other reactions.

\paragraph{\texttt{State\_Transitions}}
For the reactions happens on the cells whose block have been colored, the simulation is straight forward. For a bimolecular reaction $(A,B,C,D,d)\in R\setminus R_g$, like the growing reactions, we have to include the reactions on all $A^i,i\in[9]$. Pick $k\in[4]$ s.t. $(d,k)\in\{(\rw,1),(\uw,2),(\lw,3),(\dw,4)\}$. For all $\pi\in\Pi_k$, add the following reactions to $R'$:
\begin{enumerate}
    \item $A^{\pi(1)}+B^{\pi(2)}\rw C^{\pi(1)}+D^{\pi(2)}$.
\end{enumerate}
For a unimolecular reaction $(A,B,\odot)$, the simulation is simple: For all $i\in[9]$, add reactions to $R'$:
\begin{enumerate}
    \item $A^i\rw B^i$.
\end{enumerate}

So far we have given the entire simulation for a s-d-sCRN $\Gamma=(Q,S,R)$ using a s-sCRN $\Gamma'=(Q',S',R')$ by combining the above three parts.

\subsection{Proof sketch}\label{ssec:4.proof}
First we give the representation function $\Rcal:Q'\rw Q$. Define $\hat{Q}=\{\sigma^i:\sigma\in Q\}$. Then $\Rcal$ is defined as the following:
\begin{itemize}
    \item For all $\psi=\sigma^i\in\hat{Q}$, $\Rcal(\psi)=\sigma$.
    \item $\Rcal(0)=s$.
    \item For all $\psi\not\in\hat{Q}\cup\{0\}$, $\Rcal(\psi)=\Ocal$.
\end{itemize}
By Definition~\ref{def:sim}, we need to show that
\begin{enumerate}
    \item $\Gamma\flw_\Rcal\Gamma'$.
    \item $\Gamma'\models_\Rcal\Gamma$.
    \item $\Gamma'\Lrw_\Rcal\Gamma$.
\end{enumerate}

\subsubsection{$\Gamma\flw_\Rcal\Gamma'$}
\begin{lemma}\label{lem:1}
    $\alpha'\rw^1\beta'$ for some $\alpha',\beta'\in\Acal(\Gamma')$ $\implies$ $\Rcal^*(\alpha')\rw^1 \Rcal^*(\beta')$.
\end{lemma}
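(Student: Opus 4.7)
The plan is to proceed by case analysis on the single reaction $r' \in R'$ that drives the one-step transition $\alpha' \rw^1 \beta'$, showing that either $\Rcal^*(\alpha') = \Rcal^*(\beta')$ (so the reflexive case of the reachability relation suffices, as permitted by Definition~\ref{def:flw}) or $\Rcal^*(\alpha')$ differs from $\Rcal^*(\beta')$ by precisely one application of a reaction in $R$. The reactions in $R'$ group naturally by the three protocols \texttt{Determine\_Global\_Orientation}, \texttt{Growing\_Reactions}, and \texttt{State\_Transitions}, which I would examine in that order.

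The key observation that disposes of the bulk of the case analysis is that every auxiliary species introduced for coloring -- that is, every element of $Q' \setminus (\hat{Q} \cup \{0\})$, comprising $X, Y, Z$, the counters $1,\ldots,5$, every $\Ocal^i$, and every $\chi$-variant ($\chi^i$, $\chi_{ij}$, $\chi_i^j$, $\chi_i^j p$, $\chi_i^j b_0 b_1$) -- is mapped to $\Ocal$ by $\Rcal$. Hence any reaction whose reactants and products all lie in this auxiliary class (or which merely shuffles the seed markers $s$, $0$, and $s^5$, which all represent the seed under $\Rcal$) is $\Rcal^*$-invariant, and the conclusion is immediate. A direct check shows this covers every reaction in \texttt{Determine\_Global\_Orientation}, items 1--10 of the \texttt{Growing\_Reactions} enumeration (triggering, local coloring, synchronization, cleanup), and any purely bookkeeping cases of \texttt{State\_Transitions}.

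The only rules in $R'$ that truly move mass under $\Rcal^*$ are the final simulating step of a growing reaction, $A^{\pi(1)} + \chi^{\pi(2)} \rw B^{\pi(1)} + C^{\pi(2)}$, together with the direct transcriptions $A^{\pi(1)} + B^{\pi(2)} \rw C^{\pi(1)} + D^{\pi(2)}$ and $A^i \rw B^i$ inside \texttt{State\_Transitions}. For each of these I would apply $\Rcal$ cellwise and exhibit exactly one matching reaction of $R$: the growing reaction $(A, \Ocal, B, C, d) \in R_g$ in the first case (since $\Rcal(\chi^{\pi(2)}) = \Ocal$ and $\Rcal(C^{\pi(2)}) = C$), the corresponding rule of $R \setminus R_g$ in the second, and the unimolecular rule $(A, B, \odot)$ in the third. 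The choice of $\pi \in \Pi_k$ with $(d,k) \in \{(\rw,1),(\uw,2),(\lw,3),(\dw,4)\}$ guarantees that the geometric placement of the colored cells encodes the correct direction $d$.

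The main obstacle is verifying that whenever the simulating step fires, the relative position of $A^{\pi(1)}$ and $\chi^{\pi(2)}$ in $\alpha'$ is genuinely consistent with the direction $d$ of the simulated reaction. This reduces to establishing, as an inductive invariant over the reachable configurations of $\Gamma'$, that the superscript on any colored species faithfully records the cell's position within the global $9$-coloring, and that $\chi^j$ appears at a cell only after the block of that cell has been fully colored $\{1,\ldots,9\}$ (the \emph{complete coloring property} announced in the protocol). Once this invariant is carried through the case analysis, the subscript/superscript conventions on the $\chi$-species pin down the geometry uniquely and the direction $d$ is forced by $\pi$, finishing the argument.
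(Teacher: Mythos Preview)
Your proposal follows the same case analysis as the paper's proof: partition $R'$ by the three protocols, observe that every auxiliary species maps to $\Ocal$ so that almost all rules are $\Rcal^*$-invariant, and then check that the three ``mass-moving'' rules (item~11 of \texttt{Growing\_Reactions}, and the two rule-shapes in \texttt{State\_Transitions}) each correspond to a single reaction of $R$.

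Where you go beyond the paper is in your final paragraph. The paper's proof simply asserts that when $A^{\pi(1)}+B^{\pi(2)}\rw C^{\pi(1)}+D^{\pi(2)}$ (or the $\chi^{\pi(2)}$ variant) fires in $\Gamma'$, the image step is $(A,B,C,D,d)\in R$ with the \emph{correct} $d$; it never justifies that the two cells carrying superscripts $\pi(1),\pi(2)$ are actually positioned in direction $d$ relative to one another. You are right that this requires the invariant that superscripts on colored species faithfully encode position in a single global $9$-coloring (fixed once \texttt{Determine\_Global\_Orientation} finishes, up to the rotation/reflection ambiguity). The paper defers the related complete-coloring invariant to Lemma~\ref{lem:3} and uses it only for $\Gamma'\models_\Rcal\Gamma$, leaving the direction-consistency point in Lemma~\ref{lem:1} implicit. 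So your plan is not a different route; it is the paper's argument with a gap you have correctly identified and proposed to close.
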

Lemma~\ref{lem:1} implies $\Gamma\flw_\Rcal\Gamma'$. If we have $\alpha'\rw_{\Gamma'}\beta'$ for some $\alpha',\beta'\in\Acal(\Gamma')$, there exists a sequence of configurations $\alpha'=\alpha_0,\alpha_1,\cdots,\alpha_k=\beta'$ s.t. $\alpha'_{i-1}\rw^1\alpha'_i$ for all $i\in[k]$. With this lemma we know $\Rcal^*(\alpha'_{i-1})\rw^1 \Rcal^*(\alpha'_i)$ for every $i\in[k]$. Therefore $\Rcal^*(\alpha')\rw_\Gamma\Rcal^*(\beta')$, which implies that $\Gamma\flw_\Rcal\Gamma'$.

\begin{proof}of Lemma~\ref{lem:1}.

Suppose that $\beta'$ can be produced by applying reactions $r'\in R'$ to $\alpha'$.

If $r'$ is a unimolecular reaction $A^i\rw B^i$ for some $A\neq\Ocal$, then
there must be a reaction $(A,B,\odot)=(\Rcal^*(A^i),\Rcal^*(B^i),\odot)\in R$ by our construction. This implies that $\Rcal^*(\alpha')\rw^1\Rcal^*(\beta')$.

Similarly, if $r'$ is a bimolecular reaction of form $A^{\pi(1)}+B^{\pi(2)}\rw C^{\pi(1)}+D^{\pi(2)}$ for some $A,B\in\dagQ,C,D\in Q$, then there exists a reaction $(A,B,C,D,d)\in R$ where $(A,B,C,D,d)=(\Rcal^*(A^{\pi(1)}),\Rcal^*(B^{\pi(2)}),\Rcal^*(C^{\pi(1)}),\Rcal^*(D^{\pi(2)}),d)$, which implies that $\Rcal^*(\alpha')\rw^1\Rcal^*(\beta')$.

If $r'$ comes from the simulation process of \ptc{Growing\_Reactions}, then there exists a growing reaction $(A,\Ocal,B,C,d)\in R_g$ ($d$ corresponds to the permutation $\pi$ used in $r'$). In this sub-protocol, the only reaction that will result in a state transition in $\Rcal^*(\alpha')$ is $r'=\{A^{\pi(1)}+\chi^{\pi(2)}\rw B^{\pi(1)}+C^{\pi(2)}\}$.\\
Since $(\Rcal^*(A^{\pi(1)}),\Rcal^*(\chi^{\pi(2)}),\Rcal^*(B^{\pi(1)}),\Rcal^*(C^{\pi(2)}),d)=(A,\Ocal,B,C,d)\in R_g$, we have $\Rcal^*(\alpha')\rw^1\Rcal^*(\beta')$. Every other reaction is not going to change the resulting configuration under the mapping $\Rcal^*$, i.e. $\Rcal^*(\alpha')=\Rcal^*(\beta')$.

Else, $r'$ comes from the execution of \ptc{Determine\_Global\_Orientation}. In this case, $\Rcal^*(\alpha')=\Rcal^*(\beta')$ always. 

\end{proof}

\subsubsection{$\Gamma'\models_\Rcal\Gamma$}
For every configuration $\alpha\in\Acal(\Gamma)$, let $\Pi(\alpha)={\Rcal^*}^{-1}(\alpha)\cap\Acal(\Gamma')$, which is the reachable configuration in $\Gamma'$ that mapped to $\alpha$. Then it remains to show the following lemma:
\begin{lemma}\label{lem:2}
    \begin{enumerate}
    \item $\Pi(\alpha)\neq\emptyset$.
    \item Given any $\beta$ s.t. $\alpha\rw^1\beta$, for every $\alpha'\in\Pi(\alpha)$, there exists $\beta'$ s.t. $\Rcal^*(\beta')=\beta$ and $\alpha'\rw\beta'$.
\end{enumerate}
\end{lemma}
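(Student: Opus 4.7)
The plan is to prove the two parts together by leveraging the explicit constructions given in \ptc{Determine\_Global\_Orientation}, \ptc{Growing\_Reactions}, and \ptc{State\_Transitions}. First I would reduce Part (1) to Part (2) via an induction on the derivation length $n$ where $S\rw^n\alpha$ in $\Gamma$: the base case $n=0$ is handled by executing the entire \ptc{Determine\_Global\_Orientation} protocol starting from $S'=S$, and one checks directly that the resulting configuration $\alpha'_S$ (with $s^5$ at the origin and the appropriate $\Ocal^i$ at each cell of $\Bcal(s^5)\setminus\{s^5\}$) satisfies $\Rcal^*(\alpha'_S)=S$, so $\alpha'_S\in\Pi(S)$; in the inductive step one picks any $\alpha'_n\in\Pi(\alpha_n)$ and applies Part (2) to obtain $\beta'\in\Pi(\alpha_{n+1})$.

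For Part (2) I would first establish a \emph{coloring invariant} by induction on the reactions of $\Gamma'$: every reachable configuration $\alpha'\in\Acal(\Gamma')$ in which at least one reaction beyond \ptc{Determine\_Global\_Orientation} has fired satisfies the complete coloring property, and all colored states agree with a single consistent $3\times 3$ coloring of $\Zbb^2$ whose orientation is fixed (up to rotation and reflection) by the very first reaction $s+\Ocal\rw 0+1$. The verification is a line-by-line inspection of the reactions in $R'$: each reaction either leaves the coloring unchanged (e.g., $A^i\rw B^i$), is purely internal to a coloring sub-routine (e.g., $\chi_{\pi(2)}^{\pi(3)}0b+\Ocal\rw\chi_{\pi(2)}^{\pi(3)}0b+\chi_{\pi(2)\pi(3)}$), or installs a new color at a cell whose value is forced by the surrounding pattern (e.g., $\chi_{23}+\psi\rw\Ocal^6+\psi$ for $\psi\in\Xi^5$).

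With the invariant in hand, I would prove Part (2) by case analysis on the reaction $r\in R$ that carries $\alpha$ to $\beta$ at cells $u$ (and possibly $v$). If $r=(A,B,\odot)$ is unimolecular, the complete coloring property forces $\alpha'(u)=A^i$ for some $i\in[9]$, and a single application of $A^i\rw B^i\in R'$ produces $\beta'$. If $r=(A,B,C,D,d)$ is bimolecular and non-growing, the invariant forces $(\alpha'(u),\alpha'(v))=(A^{\pi(1)},B^{\pi(2)})$ for some $\pi\in\Pi_k$ with $k$ matching $d$, so $A^{\pi(1)}+B^{\pi(2)}\rw C^{\pi(1)}+D^{\pi(2)}\in R'$ applies. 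For a growing $r=(A,\Ocal,B,C,d)$, the neighbor $v$ carries some state mapping to $\Ocal$ under $\Rcal$, and I would walk through the reactions of \ptc{Growing\_Reactions} with the permutation $\pi\in\Pi_k$ matching $d$ and the superscript $i$ of $A^i$: fire $A^{\pi(1)}+\Ocal^{\pi(2)}\rw A^{\pi(1)}+\chi_{\pi(1)}^{\pi(2)}$, then either contract immediately to $\chi^{\pi(2)}$ when $\Bcal(v)$ is already colored or run the $\chi_{\pi(2)}^{\pi(3)}00\to\chi_{\pi(2)}^{\pi(3)}11$ coloring loop followed by cleanup, and finally fire the state transition $A^{\pi(1)}+\chi^{\pi(2)}\rw B^{\pi(1)}+C^{\pi(2)}$.

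The main obstacle will be verifying the coloring invariant itself, especially in the presence of concurrent growing reactions. The tricky scenarios are precisely those singled out in the construction: the deadlock of Figure~\ref{fig:deadlock} (handled by $\chi_1^2+\chi_1^3\rw\chi^2+\chi^3$ together with the branch $\chi_1^2+\chi_6^3/\chi_9^3\rw\chi_1^2p+\chi_2^300$), and the handshake between two coloring species $\chi_{\pi(2)}^{\pi(3)}0b$ and $\chi_{\pi(5)}^{\pi(6)}b0$ growing toward each other. One must certify that every intermediate species (the $\chi_{ij}$, $\chi_i^j$, $\chi_i^jp$, and $\chi_i^jb_0b_1$ families) is eventually consumed by the cleanup reactions, that no ``orphan'' colored cell is ever created with a color inconsistent with the global $3\times 3$ pattern, and that the coloring sub-protocol always terminates so that the required $\beta'$ is indeed reachable from $\alpha'$.
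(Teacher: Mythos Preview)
Your overall architecture matches the paper's: Part (1) is reduced to Part (2) by induction along a $\Gamma$-derivation, the base case is handled by running \ptc{Determine\_Global\_Orientation} to completion, and Part (2) rests on a complete-coloring invariant that is exactly Lemma~\ref{lem:3}.

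There is, however, a genuine gap in your treatment of growing reactions in Part (2). You propose to ``fire $A^{\pi(1)}+\Ocal^{\pi(2)}\rw A^{\pi(1)}+\chi_{\pi(1)}^{\pi(2)}$'' and then walk through the protocol, but Part (2) must work for \emph{every} $\alpha'\in\Pi(\alpha)$, and in a general $\alpha'$ the target cell $v$ need not carry $\Ocal^{\pi(2)}$. Any of the states $\chi^{\pi(2)}$, $\chi_i^{\pi(2)}$, $\chi_i^{\pi(2)}p$, $\chi_i^{\pi(2)}b_0b_1$, $\chi_{ij}$ also maps to $\Ocal$ under $\Rcal$, and such states appear at $v$ precisely when some other growing-reaction simulation is currently in progress nearby. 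In that situation the first reaction you name is simply not applicable. The paper's proof of Lemma~\ref{lem:2} devotes almost all of its effort to exactly this point: it performs an exhaustive case analysis on the current state $\sigma'$ at $v$, and in the hardest sub-case $\sigma'=\chi_i^{\pi(2)}$ a further case analysis on the state one cell beyond $v$, arguing in each branch that some sequence of $R'$-reactions eventually drives $v$ to $\chi^{\pi(2)}$ so that the final transition $A^{\pi(1)}+\chi^{\pi(2)}\rw B^{\pi(1)}+C^{\pi(2)}$ can fire.

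Your last paragraph names this as ``the main obstacle'', but you fold it into the verification of the coloring invariant. It is not part of that invariant: the invariant (Lemma~\ref{lem:3}) only asserts that $\Bcal(\psi)$ is colored for each $\psi\in\dagQ_{\text{colored}}$; it says nothing about in-progress $\chi$-states being drivable to $\chi^{\pi(2)}$. That progress argument is the actual content of Part (2) in the growing case, and without it your proof covers only those $\alpha'$ in which no coloring sub-routine is mid-execution at $v$.
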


This indeed implies the original definition of $\Gamma'\models_\Rcal\Gamma$. Since every $\alpha''$ s.t. $\Rcal^*(\alpha'')=\alpha$, $\alpha''$ is included in $\Pi(\alpha)$. And for any $\beta$ s.t. $\alpha\rw\beta$, we can find a sequence of configurations $\alpha_1,\cdots,\alpha_k$ s.t. $\alpha=\alpha_1\rw^1\alpha_2\rw^1\cdots\rw^1\alpha_k=\beta$. Then by the claim, for every $\alpha'\in\Pi(\alpha)$, there exists $\alpha'_1$ s.t. $\Rcal^*(\alpha'_1)=\alpha_1$ and $\alpha'\rw\alpha'_1$, which implies that $\alpha'_1\in\Pi(\alpha_1)$. By similar argument, we could show that there exists $\alpha'_i$ s.t. $\Rcal^*(\alpha'_i)=\alpha_i$ and $\alpha'_{i-1}\rw\alpha'_i$ for every $i=2,\cdots,k$. Hence, there exists $\beta'$ s.t. $\Rcal^*(\beta')=\beta$ and $\alpha'\rw\beta'$.

To prove the lemma we first give another lemma.
\begin{lemma}\label{lem:3}
Let $\Acal(s^5,\Gamma')$ be the set of configurations reachable from $\{\alpha':s^5\in\alpha'\text{ and }\alpha'\in\Pi(S)\}$. i.e. the reachable configurations starting after the global orientation has been determined by coloring the block of $s^5$. Then for every $\alpha'\in\Acal(s^5,\Gamma')$, $\alpha'$ satisfies the complete coloring property.
\end{lemma}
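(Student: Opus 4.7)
I plan to prove the claim by induction on the length of a reaction sequence starting from a configuration in $\{\alpha' : s^5 \in \alpha' \text{ and } \alpha' \in \Pi(S)\}$. In the base case, $s^5$ is the only species in $\dagQ_{\text{colored}}$, and by construction of \ptc{Determine\_Global\_Orientation}, the block $\Bcal(s^5)$ has just been filled with the states $\Ocal^1,\ldots,\Ocal^4,\Ocal^6,\ldots,\Ocal^9$, so the complete coloring property holds. Any leftover $Y,Z$ species outside $\Bcal(s^5)$ lie outside $\dagQ_{\text{colored}}$, and the residual cleanup reactions from \ptc{Determine\_Global\_Orientation} only remove them without altering any colored state.

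For the inductive step, I would actually maintain a slightly stronger invariant: for every position $v$ with $\alpha'(v) \in \dagQ_{\text{colored}} \cup \{\chi^j : j\in[9]\}$, the block $\Bcal(v)$ is entirely filled with states in $\Xi$. The strengthening is essential because the final growing-reaction step $A^{\pi(1)} + \chi^{\pi(2)} \to B^{\pi(1)} + C^{\pi(2)}$ introduces a new $\dagQ_{\text{colored}}$ species at the site of $\chi^{\pi(2)}$, so we must already know that $\Bcal(\chi^{\pi(2)})$ is colored. Case-analyzing by reaction type, the \ptc{State\_Transitions} reactions trivially preserve the invariant since they replace $\dagQ_{\text{colored}}$ species at the same positions. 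The critical cases arise inside \ptc{Growing\_Reactions}, specifically the three reactions that produce a new $\chi^j$: reaction~2 ($\chi_{\pi(1)}^{\pi(2)} + \psi^{\pi(3)} \to \chi^{\pi(2)} + \psi^{\pi(3)}$ with $\psi\in\dagQ\cup\{\chi\}$), reaction~4 ($\chi_{\pi(1)}^{\pi(2)} + \chi_{\pi(1)}^{\pi(3)} \to \chi^{\pi(2)} + \chi^{\pi(3)}$), and reaction~9 ($\chi_{\pi(1)}^{\pi(2)}p + \chi_{\pi(2)}^{\pi(3)}11 \to \chi^{\pi(2)} + \Ocal^{\pi(3)}$). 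For the first two I would argue geometrically: the new $\chi^{\pi(2)}$ sits sandwiched between two sites whose blocks are already colored by the inductive hypothesis, and one checks that these two $3\times 3$ blocks together cover all of $\Bcal(\chi^{\pi(2)})$. For reaction~9 the argument is dynamical: I would track the state machine of the coloring species $\chi_{\pi(2)}^{\pi(3)}b_0b_1$ and verify that the two bits faithfully record that the perpendicular cells of $\Bcal(\chi_{\pi(1)}^{\pi(2)}p)\setminus \Bcal(A^{\pi(1)})$ have each reached a colored state.

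The main obstacle is case~9, because the coloring species can advance its bits through four distinct interactions (cases~5--8 of the protocol), including two concurrent coloring processes marking each other simultaneously (case~7) and the absorption of stray $\chi_{ij}$ residues from earlier growing reactions (case~6). For each of these I must verify that setting a bit to $1$ genuinely coincides with the adjacent cell attaining a state in $\Xi$, and that the transient auxiliary $\chi_{\pi(2)\pi(3)}$ (which itself lies in $\Xi$ but not in $\dagQ_{\text{colored}}$) is either converted to $\Ocal^{\pi(6)}/\Ocal^{\pi(9)}$ or is only cleaned back to $\Ocal$ via $\Ocal^{\pi(3)} + \chi_{\pi(2)\pi(3)} \to \Ocal^{\pi(3)} + \Ocal$ at a site that is not in $\Bcal(\psi)$ of any $\psi\in\dagQ_{\text{colored}}$. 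I also need to rule out the "deadlock" of Figure~\ref{fig:deadlock} by exploiting exactly the case~3 reactions with $\chi_{\pi(6)}^{\pi(3)}$ and $\chi_{\pi(9)}^{\pi(3)}$, which force a fresh coloring attempt and therefore keep the invariant intact. Once these bookkeeping cases are verified, the induction closes and every $\alpha' \in \Acal(s^5,\Gamma')$ satisfies the complete coloring property.
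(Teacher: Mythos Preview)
Your approach is essentially the paper's: both argue inductively that a single reaction step preserves the invariant, reduce to showing that whenever a $\chi^j$ is produced its block is already colored, and then case-split on the three producing reactions (items~2, 4, 9), handling item~9 by tracing the $\chi_{\pi(2)}^{\pi(3)}b_0b_1$ state machine. Two small inaccuracies worth fixing: the species $\chi_{\pi(2)\pi(3)}$ is \emph{not} in $\Xi$ by the paper's definition (it carries no superscript color), so your parenthetical is wrong, though your subsequent analysis of its fate is still what is needed; and the deadlock of Figure~\ref{fig:deadlock} is a progress issue handled in Lemma~\ref{lem:2}, not a threat to the complete-coloring invariant, so it does not belong in this proof.
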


\begin{proof}of Lemma~\ref{lem:3}.

We prove by saying that for every $\alpha'\in\Acal(s^5,\Gamma')$, if $\alpha'$ satisfies the complete coloring property, then for any $\beta'$ s.t. $\alpha'\rw^1\beta'$, $\beta'$ satisfies the complete coloring property as well.

Suppose that $\Rcal^*(\alpha')=\alpha$, $\Rcal^*(\beta')=\beta$, and that $\beta'$ is produced by $\alpha'$ applying a reaction $r'\in R'$. Define reaction set $R'_g=\{A^{\pi(1)}+\chi^{\pi(2)}\rw C^{\pi(1)}+D^{\pi(2)}:A,B,C\in\dagQ,\pi\in\cup_{k=1}^4\Pi_k\}$. Except for the case $r'\in R'_g$, there will be no species in $\dagQ$ appears in $\beta$, and there is no reaction that eliminate the existing coloring throughout the whole simulation, so the complete coloring property is preserved.

When $r'\in R'_g$, it suffices to show that each time a species $\chi^i$ appears, $\Bcal(\chi^i)$ has always been colored (This is the property we used in \ptc{Growing\_Reactions}). First observe that $\chi^i$ could be produced only by item $2.$, $4.$ and $9.$ in the simulation of growing reactions. For simplicity, we fix $\pi=id$ in the succeeding description. In item $2.$, it is clear that $\Bcal(\chi^2)$ has been colored. In item $4.$, we know that $\chi_1^2$ is triggered from some species $A^1,A\in\alpha'\cap\dagQ$ and $\chi_1^3$ is triggered from some species $B^1,B\in\alpha'\cap\dagQ$. By the assumption that $\alpha'$ is completely colored, it is guaranteed that $\Bcal(\chi_1^2)$ and $\Bcal(\chi_1^3)$ have both been colored. Therefore we could turn them into $\chi^2,\chi^3$ simultaneously.

For the case in item $9.$, $\chi_2^300$ must have been produced first (item $3.$). To understand what happened in between, we have to take a closer look at the behavior of the following $3$ kinds of species.

$\bs{\chi_{23}}$ - $\chi_{23}$ can only be produced by $\chi_2^3b_0b_1$ (for some $b_0,b_1$ not all $1$) acting on some neighboring blank state $\Ocal$, and from $\chi_2^3b_0b_1$ we can trace back to $\chi_2^300$, which is triggered by $\chi_1^2$ (item $3.$). $\chi_1^2$ must be a neighbor of some $A^1\in\dagQ$. By the assumption that $\alpha'$ is completely colored, $\Bcal(A^1)$ is colored, and hence any $\chi_{23}$ in $\Bcal(\chi_1^2)\setminus\Bcal(A^1)$ will encounter some species in $\Xi^5$ or $\Xi^8$ and eventually be turned into $\Ocal^6,\Ocal^9$ respectively by item $5.$. Note that there may be one redundant $\chi_{23}$ appears outside $\Bcal(\chi_1^2p)$, it should be turned back into $\Ocal$ later.

$\bs{\chi_2^3b_0b_1\textbf{ and }\chi_1^2p}$ - We denote by $W$ to represent $\Bcal(\chi_1^2)\setminus\Bcal(A^1)$. By item $5.$, $\chi_2^3b_0b_1$ turns $\Ocal\in W$ into $\chi_{23}$, which will eventually become $\Ocal^6$ or $\Ocal^9$. If meeting species in $\{\chi_{\pi(3)\pi(9)},\chi_{\pi(4)\pi(5)},\chi_{\pi(5)\pi(4)}\}$ or $\{\chi_{\pi(3)\pi(6)},\chi_{\pi(7)\pi(8)},\chi_{\pi(8)\pi(7)}\}$, $\chi_2^3b_0b_1$ knows which color the position must be, so it turned them into $\Ocal^6$ or $\Ocal^9$ directly. For the remaining $\chi_{ij}$ it may encounter, they are a part of some other wall being colored, so they eventually turns into $\Ocal^6,\Ocal^9$ as well. Thus, $\chi_2^300$ eventually sees two colored species on $W$, one in $\Xi^6$ and one in $\Xi^9$. Item $7.$ gives the corresponding reactions for each situation, which always turns $\chi_2^300$ into $\chi_2^311$ eventually. $\chi_2^311$ is served as a signal that announce the completion of coloring $W$. Notice that in item $3.$, after $\chi_1^2$ trigger the formation of $\chi_2^300$, it becomes $\chi_1^2p$ and then stay still waiting for $\chi_2^311$ to turn it into $\chi^2$ (item $9.$). In this case, $\Bcal(\chi^2)$ is indeed colored. On the other hand, the only reaction $\chi_2^311$ can perform is also item $9.$, hence $\chi_2^211$ will eventually become $\Ocal^3$, and then turn the redundant $\chi_{23}$ back to $\Ocal$. The proof complete.

\end{proof}

Now we are ready to prove Lemma~\ref{lem:2}.

\begin{proof}of Lemma~\ref{lem:2}.

Initially, $\Pi(S)\neq\emptyset$ and $\Pi(S)$ satisfies the complete coloring property for sure. By the construction of \ptc{Determine\_Global\_Orientation}, for every $\alpha'\in\Pi(S)\setminus\Acal(s^5,\Gamma')$ there exists $\beta'\in\Pi(S)\cap\Acal(s^5,\Gamma')$ reachable for $\alpha'$. For any $\alpha\in\Acal(\Gamma)$, $\alpha\neq S$, $\Pi(\alpha)={\Rcal^*}^{-1}(\alpha)\cap\Acal(s^5,\Gamma')$ since each of them can only be produced after $s^5$ appeared. By Lemma~\ref{lem:1}, for any $\alpha\in\Acal(\Gamma)$, $\Pi(\alpha)\cap\Acal(s^5,\Gamma')$ satisfies the complete coloring property. And we are going to show that for any $\alpha'\in\Pi(\alpha)\cap\Acal(s^5,\Gamma')$, there is always a sequence of reactions that takes $\alpha'$ to some $\beta'$ s.t. $\Rcal^*(\beta')=\beta$. With this property, we have that for every $\alpha\in\Acal(\Gamma)$ there exists a sequence of reactions which takes $s'$ to $\alpha'\in\Pi(\alpha)$ that pass through some $\alpha''\in\Pi(S)\cap\Acal(s^5,\Gamma')$. Thus $\Pi(\alpha)\neq\emptyset$.  

Suppose that $\beta$ is produced by $\alpha$ applying a reaction $r\in R$. If $r\not\in R_g$, by the fact that $\alpha'$ is completely colored, we can apply the reactions correspond to $r$ in \ptc{State\_Transitions} to obtain a configuration $\beta'$ s.t. $\Rcal^*(\beta')=\beta$.

When $r\in R_g$, w.l.o.g. we assume that $r=(A,\Ocal,B,C)$ acting on cell $v$ which is colored by $1$ in $\alpha'$. And in $\alpha'$ the cell $v'$ to the East of $v$ is now containing a state $\sigma(v')$ colored $2$ that mapped to $\Ocal$. Notice that it does not matter how fast the other cells turn into state $\chi^i$. Roughly speaking, we could postpone the legal state transitions on those cells from then on, rather than performing them immediately. So it suffices to show that, at this point, there exists a sequence of reactions that turns $v'$ to state $\chi^2$.  Recall that there are $8$ kinds of states $\sigma'=\sigma(v')$ could be, and we'll discuss them all.

$\bs{\sigma'=\sigma'^2, \chi^2}$ - Then it is done.

$\bs{\sigma'=\Ocal^2}$ - $\sigma(v')$ will be triggered by $A^1$ to become $\chi_1^2$.

$\bs{\sigma'=\chi_i^jb_0b_1,\chi_{ij}}$ - By the proof of Lemma~\ref{lem:1}, $\sigma'$ will become $\Ocal$  (impossible for $v'$ since it has been colored) or $\Ocal^i$. Go to the above case.

$\bs{\sigma'=\chi_i^2p}$ - By the proof of Lemma~\ref{lem:1}, $\sigma'$ will eventually become $\chi^j$.

Now it suffice to discuss the case $\sigma'=\chi_i^2$, w.l.o.g. we let $i=1$. Since $\Bcal(A^1)$ has been colored and there is no reaction between $\chi_1^2$ and $\Xi^5$ or $\Xi^8$, so we only need to analyze the state $\psi$ to the East of $v'$.

$\bs{\psi=\Ocal,\Ocal^3}$- Then $v'$ turns into $\chi_1^2p$. By the proof of Lemma~\ref{lem:1}, $v'$ will become $\chi^2$ with its block colored.

$\bs{\psi}=\chi_i^jb_0b_1,\chi_{ij}$ - By the proof of Lemma~\ref{lem:1}, $\psi$ will become $\Ocal$ or $\Ocal^i$. Go to the above case.

$\bs{\psi=\sigma^3}$ - $\Bcal(\chi_1^2)$ has been colored, turns itself into $\chi^2$.

$\bs{\psi=\chi^3}$ - Turns itself into $\chi^2$. We'll explain this later.

$\bs{psi=\chi_i^3p}$ - By the proof of Lemma~\ref{lem:1}, $\psi$ will eventually become $\chi^3$, go to the above case.

$\bs{\psi=\chi_1^3,\chi_9^3,\chi_6^3}$ - If $\psi=\chi_1^3$, then both the block of $\chi_1^2,\chi_1^3$ have been colored by the assumption of complete coloring of $\alpha'$. If $\psi=\chi_9^3,\chi_6^3$, perform item $3.$ and then go to the first case.

Now, except for the case $\psi=\chi^3$, each time $\chi^2$ appears, $\Bcal(\chi^2)$ has been colored. If it is the existing $\chi^3$ that cause the formation of $\chi^2$, we can ask what cause the formation of $\chi^3$. So we'll have a sequence $(\chi^{j(k)})_{k=0}^n$ where $j(n)=2,j(n-1)=3, \cdots$ s.t. the existence of $\chi^{j(k-1)}$ caused the formation of $\chi^{j(k)}$ for all $j=1,\cdots,n$, and $\Bcal(\chi^{j(k)})$ is colored as long as $\Bcal(\chi^{j(k-1)})$ is colored. Since it is the unit-seeded model, the first $\chi^{j(0)}$ exists, which must be produced only if $\Bcal{\chi^{j(0)}}$ is colored. Thus $\Bcal(\chi^{j(n)})$ is colored. The proof complete.
    
\end{proof}


\subsubsection{$\Gamma'\Lrw_\Rcal\Gamma$}
First we prove that $\{\Rcal^*(\alpha'|\alpha'\in\Acal(S))\}=\Acal(T)$.
\begin{lemma}\label{lem:4}
    \begin{enumerate}
    \item $\forall \alpha\in\Acal(\Gamma), \exists\alpha'\in\Acal(\Gamma')\text{ s.t. }\Rcal^*(\alpha')=\alpha$.
    \item $\forall \alpha'\in\Acal(\Gamma'), \Rcal^*(\alpha')\in\Acal(\Gamma)$.
\end{enumerate}
\end{lemma}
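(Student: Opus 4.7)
The plan is to prove the two parts of Lemma~\ref{lem:4} by induction on the length of a reachability chain in $\Gamma$ and in $\Gamma'$ respectively, chaining the already-established Lemma~\ref{lem:1} and Lemma~\ref{lem:2}. Note that in our construction the representation function $\Rcal$ never outputs $\UNDsf$ on any state in $Q'$, so Lemma~\ref{lem:4} together with a separate analysis of terminal configurations will deliver the full equivalence-of-productions requirement $\Gamma' \Lrw_\Rcal \Gamma$ in Definition~\ref{def:equiv}.

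For part (1), I would fix $\alpha \in \Acal(\Gamma)$ with a witnessing chain $S = \alpha_0 \rw^1 \alpha_1 \rw^1 \cdots \rw^1 \alpha_k = \alpha$ and induct on $k$. The base case uses $S' = S$ and $\Rcal^*(S') = S$, which reduces to checking that the seed state $s$ and the blank state $\Ocal$ are mapped correctly by $\Rcal$. For the inductive step, given $\alpha'_{k-1} \in \Acal(\Gamma')$ with $\Rcal^*(\alpha'_{k-1}) = \alpha_{k-1}$, so that $\alpha'_{k-1} \in \Pi(\alpha_{k-1})$, the second clause of Lemma~\ref{lem:2} applied to the transition $\alpha_{k-1} \rw^1 \alpha_k$ furnishes $\alpha'_k$ reachable from $\alpha'_{k-1}$ in $\Gamma'$ with $\Rcal^*(\alpha'_k) = \alpha_k$, and transitivity of $\rw_{\Gamma'}$ gives $\alpha'_k \in \Acal(\Gamma')$.

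For part (2), I would fix $\alpha' \in \Acal(\Gamma')$ with chain $S' = \alpha'_0 \rw^1 \alpha'_1 \rw^1 \cdots \rw^1 \alpha'_m = \alpha'$ and induct on $m$. The base case is again $\Rcal^*(S') = S \in \Acal(\Gamma)$. For the inductive step, Lemma~\ref{lem:1} applied to $\alpha'_{m-1} \rw^1 \alpha'_m$ gives either $\Rcal^*(\alpha'_{m-1}) \rw^1_\Gamma \Rcal^*(\alpha'_m)$ or $\Rcal^*(\alpha'_{m-1}) = \Rcal^*(\alpha'_m)$, the latter occurring whenever $r' \in R'$ is a coloring-auxiliary reaction that is invisible to $\Rcal^*$, as analysed inside the proof of Lemma~\ref{lem:1}. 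Combined with the induction hypothesis $\Rcal^*(\alpha'_{m-1}) \in \Acal(\Gamma)$, this yields $\Rcal^*(\alpha'_m) \in \Acal(\Gamma)$.

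The main subtlety I anticipate is the bookkeeping through the initialization phase of \texttt{Determine\_Global\_Orientation}, during which $\Gamma'$ carries transient species such as $0,1,2,X,Y,Z$ that must all map to $s$ or $\Ocal$ so that $\Rcal^*$ stays pinned at $S$ until $s^5$ first appears. This is what the definition of $\Rcal$ is set up to deliver ($\Rcal(0)=s$ and everything outside $\hat{Q}\cup\{0\}$ sent to $\Ocal$), but it needs to be verified explicitly for the initial reaction $s+\Ocal \rw 0+1$ and all intermediate species produced before $s^5$. After initialization, the complete-coloring invariant of Lemma~\ref{lem:3} ensures that every further $\Gamma'$-step either mirrors a genuine $\Gamma$-reaction or is $\Rcal^*$-invariant, so the induction proceeds uniformly and the lemma follows.
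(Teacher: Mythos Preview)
Your proposal is correct and essentially parallels the paper's proof. For part~1 the paper constructs the simulating run directly, first running \texttt{Determine\_Global\_Orientation} and then, for each $r_i$ in the $\Gamma$-chain, applying the corresponding block of $\Gamma'$-reactions from the protocol; for part~2 it observes that the only reactions in $R'$ affecting $\Rcal^*$ are the \texttt{State\_Transitions} reactions and item~11 of \texttt{Growing\_Reactions}, each of which corresponds to a reaction in $R$. Your inductive arguments via Lemma~\ref{lem:1} and Lemma~\ref{lem:2} package exactly these observations---in fact Lemma~\ref{lem:4}(1) is literally the statement $\Pi(\alpha)\neq\emptyset$ of Lemma~\ref{lem:2}(1), so once Lemma~\ref{lem:2} is granted you could cite it outright rather than re-running the induction on its second clause. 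The paper's version is a bit less modular (it re-derives the needed facts inline rather than invoking the earlier lemmas), while yours is cleaner bookkeeping; the substance is the same, including the initialization subtlety you flag about $\Rcal^*$ staying pinned at $S$ throughout \texttt{Determine\_Global\_Orientation}.
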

The first item implies that $\{\Rcal^*(\alpha'|\alpha'\in\Acal(S))\}	\supseteq\Acal(T)$, and the second one implies $\{\Rcal^*(\alpha'|\alpha'\in\Acal(S))\}\subseteq\Acal(T)$.
\begin{proof}
\begin{enumerate}
    \item Given $\alpha$, let $S=\alpha_0\rw^1\alpha_1\rw^1\cdots\rw^1\alpha_n=\alpha$ be any sequence let achieve the configuration $\alpha$, and assume that $\alpha_i$ is produced by applying reaction $r_i$ to $\alpha_{i-1}$ for all $i\in[n]$. Then we construct a sequence of reaction to produce $\alpha'$. First we use \texttt{Determine\_Global\_Orientation} to get $\alpha_0'$. And each $r_i$ corresponds to a sequence of reactions $r'_{i_1},\cdots,r'_{i_m(i)}$ in the simulation. It is obvious that by applying these reaction sequences $(r'_{i_k})_{k=1}^{m(i)}$ for $i=1,\cdots,n$, we'll get the resulting configuration $\alpha'$ s.t. $\Rcal^*(\alpha')=\alpha$.
    \item The reactions in $R'$ that affect the image of $\Rcal^*$ are those described in \texttt{State\_Transitions} and item $11.$ of \texttt{Growing\_Reactions}. And each of them correspond to a reaction in $R$. Therefore, $\Rcal^*(\alpha')$ is achievable in $\Gamma$ as well.
\end{enumerate}
\end{proof}
Then it remains to show the following lemma:
\begin{lemma}\label{lem:5}
    \begin{enumerate}
    \item Given $\alpha\in\Acal(\Gamma)\setminus\Acal_*(\Gamma)$, then $\forall\alpha'\text{ s.t. }\Rcal^*(\alpha')=\alpha$, $\alpha'\not\in\Acal_*(\Gamma)$.
    \item $\forall \alpha\in\Acal_*(\Gamma), \exists\alpha'\in\Acal(\Gamma')\text{ s.t. }\Rcal^*(\alpha')=\alpha$.
\end{enumerate}
\end{lemma}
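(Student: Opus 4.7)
The plan is to prove the two parts separately: Part~1 follows directly from Lemma~\ref{lem:2}, while Part~2 needs an explicit construction of a ``tidy'' simulation trajectory followed by a short case analysis showing that no reaction fires at its endpoint. Throughout I read Part~1 as $\alpha'\notin\Acal_*(\Gamma')$ and Part~2 as asking for $\alpha'\in\Acal_*(\Gamma')$, correcting what appear to be minor typos in the statement.

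For Part~1, given $\alpha\in\Acal(\Gamma)\setminus\Acal_*(\Gamma)$, pick any $\beta\neq\alpha$ with $\alpha\rw^1_\Gamma\beta$, and for any $\alpha'\in\Pi(\alpha)$ apply Lemma~\ref{lem:2} (item~2) to obtain $\beta'$ with $\Rcal^*(\beta')=\beta$ and $\alpha'\rw_{\Gamma'}\beta'$. Because $\Rcal^*(\beta')=\beta\neq\alpha=\Rcal^*(\alpha')$, we have $\alpha'\neq\beta'$; the path $\alpha'\rw_{\Gamma'}\beta'$ therefore uses at least one reaction, so $\alpha'$ has a one-step successor distinct from itself and $\alpha'\notin\Acal_*(\Gamma')$.

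For Part~2, given $\alpha\in\Acal_*(\Gamma)$, I would pick any reachability witness $S=\alpha_0\rw^1_\Gamma\alpha_1\rw^1_\Gamma\cdots\rw^1_\Gamma\alpha_n=\alpha$ and build a matching trajectory in $\Gamma'$: first run \ptc{Determine\_Global\_Orientation} to its end (producing $s^5$ with $\Bcal(s^5)$ fully colored), and then, for each $i=1,\dots,n$, play out the reactions prescribed by \ptc{State\_Transitions} or \ptc{Growing\_Reactions} for $r_i$, scheduled so that one simulation finishes entirely---including the clean-up of residual $\chi_{ij}$ via item~10---before the next begins. Existence of such a completion from some configuration in $\Pi(\alpha_{i-1})$ to one in $\Pi(\alpha_i)$ is guaranteed by the proof of Lemma~\ref{lem:2}, and the complete-coloring property is preserved throughout by Lemma~\ref{lem:3}. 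By construction, the resulting $\alpha'\in\Pi(\alpha)$ contains no transient orientation state $0,X,Y,Z,1,\dots,5$ and no $\chi$-state of any kind. To finish, one case-splits on reactions $r'\in R'$ that could fire at $\alpha'$: a unimolecular $A^i\rw B^i$ or a non-growing bimolecular $A^{\pi(1)}+B^{\pi(2)}\rw C^{\pi(1)}+D^{\pi(2)}$ would expose a companion reaction in $R$ applicable to $\alpha$; the growing-reaction initiator $A^{\pi(1)}+\Ocal^{\pi(2)}\rw A^{\pi(1)}+\chi_{\pi(1)}^{\pi(2)}$ would place $A$ next to $\Ocal$ in $\alpha$ in direction $d$, again making some rule in $R_g$ applicable. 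Each possibility contradicts $\alpha\in\Acal_*(\Gamma)$, and every remaining reaction of $R'$ has a $\chi$- or orientation-protocol state on its left-hand side, which is absent from $\alpha'$.

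The main obstacle, essentially the only non-routine one, is the construction step in Part~2: justifying that coloring and clean-up can be run to completion without stalling or interfering with subsequent simulated reactions. This is underwritten by the detailed behaviour analysis of $\chi_{23}$, $\chi_i^jb_0b_1$, and $\chi_i^jp$ inside the proof of Lemma~\ref{lem:3}---every $\chi_i^jp$ eventually pairs with the appropriate $\chi_2^311$-type partner, and every residual $\chi_{23}$ meets a species in $\Xi^5\cup\Xi^8$ (or an $\Ocal^{\pi(3)}$ after completion) that clears it away. Once those earlier results are in place, the case analysis above is straightforward bookkeeping.
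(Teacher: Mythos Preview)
Your proposal is correct and follows essentially the same approach as the paper: Part~1 via Lemma~\ref{lem:2} (the paper phrases it as ``the analysis in the proof of $\Gamma'\models\Gamma$''), and Part~2 via the explicit tidy construction from the proof of Lemma~\ref{lem:4} item~1. Your case analysis in Part~2 is in fact more thorough than the paper's own proof, which only remarks that any reaction sequence changing the image under $\Rcal^*$ corresponds to a reaction in $R$ and does not explicitly dispatch the image-preserving reactions (such as the growing-reaction initiator) as you do.
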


\begin{proof}
\begin{enumerate}
    \item It means that there exists $r\in R$ s.t. applying $r$ to $\alpha$ produces some $\beta\neq\alpha$. So the statement follows by the analysis in the proof of $\Gamma'\models\Gamma$.
    \item By the proof of Lemma~\ref{lem:4} item $1$, there exists a sequence of reactions $r'$ that produces $\alpha'$. And every reaction sequence that leads to $\beta'$ s.t. $\Rcal^*(\beta')\neq\Rcal^*(\alpha')$ corresponds to some reactions in $R$ by the proof of Lemma~\ref{lem:4} item $2$.
\end{enumerate}
\end{proof}

\section{Simulate aTAM by unit-seeded directed sCRN}\label{sec:aTAM}
In this section, we demonstrate that unit-seeded directed sCRN can simulate aTAM.
\begin{theorem}\label{thm:dsCRN-aTAM}
Given a system of aTAM $\Gamma=(Q,S,g,\tau)$, there exists a unit-seeded directed sCRN $\Gamma'=(Q',R)$ that simulates $\Gamma$.
\end{theorem}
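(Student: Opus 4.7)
The plan is to simulate $\Gamma$ by a unit-seeded directed sCRN in which each blank cell gradually ``learns'' the glue labels of its four neighbors through local reactions, and commits to becoming a particular tile as soon as the accumulated binding strength reaches $\tau$. The states in $Q'$ comprise the blank $\Ocal$, a seed state $q_s$, a tile state $q_t$ for every tile $t\in Q\setminus\{\nullsf\}$, and a \emph{probing state} $q_P$ for every $P\in \Sigma^4$; the $d$-th coordinate $P_d$ records the glue presented by the probe's $d$-neighbor, with $P_d=\nullsf$ meaning ``not yet observed''. The initial configuration $S'$ places $q_s$ at the origin and $\Ocal$ elsewhere, and the representation function $\Rcal$ sends $q_t \mapsto t$, $q_s \mapsto s$, and $\Ocal$ and every $q_P$ to $\nullsf$.

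The reaction set $R$ has three families. First, \emph{initial probes} turn an $\Ocal$ cell adjacent to some tile $q_t$ into a probing state recording only the glue of $t$ on the facing side. Second, \emph{probe updates} $(q_t, q_P, q_t, q_{P'}, d)$ replace $P$ by $P'$ that augments $P$ with the observed glue of the adjacent $q_t$; they are applicable only when the relevant coordinate of $P$ is still $\nullsf$, so that observations are never overwritten. Third, the \emph{commit} reactions are the unimolecular rules $q_P \rw q_t$, one for every pair $(P, t)$ satisfying $\sum_d g(\sigma_d(t), P_d) \geq \tau$.

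To verify the simulation (Definition~\ref{def:sim}), I will establish the three conditions in turn. Initial probes and probe updates are $\Rcal^*$-invariant, so for $\Gamma \flw_\Rcal \Gamma'$ it suffices to check that any commit corresponds to a valid aTAM attachment: since tile states never change once written, each $P_d \neq \nullsf$ is exactly the actual glue of the probe cell's $d$-neighbor, so the sCRN strength $\sum_d g(\sigma_d(t), P_d)$ is a lower bound on the true aTAM binding strength and a commit is always aTAM-legal. For $\Gamma' \models_\Rcal \Gamma$, given any $\alpha'\in\Pi(\alpha)$ and an aTAM attachment $\alpha \rw \beta$ of $t$ at position $p$, I will reach a preimage of $\beta$ by first triggering an initial probe at $p$ (if $p$ is still $\Ocal$ in $\alpha'$), then sequentially applying probe updates from each of $p$'s tile neighbors whose glue is not yet recorded, and finally invoking the commit rule $q_P \rw q_t$. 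For $\Gamma' \Lrw_\Rcal \Gamma$, reachability equality follows from the previous two conditions; for the terminal-configuration match, I construct for each $\alpha \in \Acal_*(\Gamma)$ an explicit reachable ``fully informed'' sCRN preimage by placing $q_t$ at every tile position of $\alpha$ and $q_{P^*(p)}$ at every blank cell $p$ adjacent to at least one tile, where $P^*(p)_d$ records the actual glue of $p$'s $d$-neighbor (or $\nullsf$ if blank).

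The main obstacle is the terminality matching: I must prove both that this fully informed preimage is terminal and that no $\Gamma'$-terminal configuration projects to a non-terminal $\alpha\in\Acal(\Gamma)$. The key invariant is that any probing state $q_P$ whose $d$-coordinate is $\nullsf$ while its $d$-neighbor is an actual tile admits a probe update reaction, so in a terminal $\alpha'$ every probing state is necessarily fully informed. Then $\sum_d g(\sigma_d(t), P_d)$ equals the true aTAM attachment strength for every candidate $t$, so no commit is enabled exactly when no tile can attach in $\alpha$, and no initial probe is enabled exactly when no $\Ocal$-cell has a tile neighbor. In the forward direction, the fully informed preimage of a terminal $\alpha$ automatically satisfies all three non-applicability conditions, so it is itself $\Gamma'$-terminal.
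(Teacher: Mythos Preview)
Your approach is correct and follows essentially the same strategy as the paper: introduce an intermediate ``observing'' state at each blank cell that accumulates the glue labels of its neighbors through bimolecular reactions and then commits to a tile via a unimolecular reaction once the recorded binding strength meets $\tau$. The verification of the three simulation conditions is also organized the same way.

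The one genuine difference is a simplification on your side. The paper's observing species carries a separate symbol $\varepsilon$ for ``not yet observed'' and $\nullsf$ for ``observed to be empty'', and therefore needs explicit reactions to record $\nullsf$ when the neighbor is $\Ocal$ or another observing species, plus an extra rule to overwrite $\nullsf$ with an actual glue if a tile later appears there. You instead conflate ``not yet observed'' with $\nullsf$, which is sound precisely because $g(\cdot,\nullsf)=0$: an unobserved direction contributes zero to the recorded strength, so the recorded sum is always a lower bound on the true binding strength and any commit is aTAM-legal. This lets you drop the observe-blank reactions entirely and removes the need for a separate overwrite rule, at no cost to correctness. The paper's extra $\varepsilon$ layer buys nothing here (it matters only in the tile-automata simulation of Section~\ref{sec:TA}, where neighbor states can change), so your version is the cleaner one for aTAM specifically.
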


\subsection{Simulation overview}
The primary task is to simulate the attachment of a single tile. Essentially, we encode each tile $t$ by its state $\sigma(t)=(\sigma_N(t),\sigma_E(t),\sigma_S(t),\sigma_W(t))$. The main challenge arises from the fact that whether a tile is attachable in a position depends on all $4$ neighbors. However, in a d-sCRN, the transition of a cell's state is determined by only one of its neighbors. To address this, we introduce some auxiliary variables to enable a cell to contain a special species capable of "observing" all species around. This process is described in \ptc{Observe\_Neighbors}. Subsequently, it determines whether there exists any attachable tile in that cell. This aspect is elaborated in \ptc{Attachment}.

More precisely, we define two kinds of species: \emph{tile species} and \emph{observing species}, depicted in Figures~\ref{fig:tile_species} and~\ref{fig:observing_species}. The set of tile species is of the form $\psi=(\psi_N,\psi_E,\psi_S,\psi_W)$ where $\psi$ corresponds to a tile state in $Q$. We denote by $\psi(v)$ the tile species located at cell $v$. For all tiles $\sigma\in Q$, we ensure all corresponding tile species $\psi=\sigma$ are contained in $Q'$. An observing species is denoted by $\eta=(\xi,\eta_N,\eta_E,\eta_S,\eta_W)$, where $\xi$ distinguishes it from the tile species. Again, we use $\eta(v)$ to represent an observing species located at cell $v$. Any observing species must be adjacent to at least one of the tile species, and $\eta_d(v)$ records the labels facing $v$ from the $d$ direction. That is, let $v=(x_0,y_0)$, then $\eta(v)=(\psi_S(x_0,y_0+1),\psi_W(x_0+1,y_0),\psi_N(x_0,y_0-1),\psi_E(x_0-1,y_0))$. Let $\Sigma=\{\iota: \iota\text{ is a label of some }\sigma\in Q\}$ ($\nullsf\in\Sigma$), and let $\Sigma_\varepsilon=\Sigma\cup\{\varepsilon\}$, then $\eta_\Dcal(v)\in\Sigma_\varepsilon^4$. We use $\varepsilon$ to represent that a side has not been observed yet.

\begin{figure}[htbp]
    \centering
    \begin{subfigure}{0.4\textwidth}
        \centering
        \includegraphics[width=0.3\textwidth]{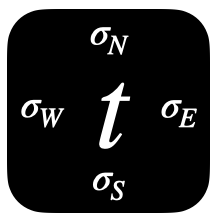}
        \caption{Tile species representing tile $t$.}
        \medskip
        \small $\psi(v)=\sigma(t)=(\sigma_N(t),\sigma_E(t),\sigma_S(t),\sigma_W(t))$.
        \label{fig:tile_species}
    \end{subfigure}%
    \begin{subfigure}{0.4\textwidth}
        \centering
        \includegraphics[width=0.3\textwidth]{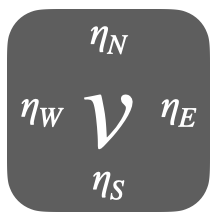}
        \caption{Observing species.}
        \medskip
        \small $\eta(v)=(\eta_N(v),\eta_E(v),\eta_S(v),\eta_W(v))$.
        \label{fig:observing_species}
    \end{subfigure}%
    \caption{Tile species and observing species.}
    \label{fig:tile species and observing species}
\end{figure}

In the simulation, any tile species $\psi$ could change its neighboring blank species $\Ocal$ to an observing species $(\xi,\varepsilon_\Dcal)$. That is, $(\psi,\Ocal,\psi,(\xi,\varepsilon_\Dcal),d')\in R'$ for all $d'\in\{\uw,\rw,\dw,\lw\}$. The observing species then start to record the labels facing itself in each direction. As soon as it has observed each side at least once, it checks whether there exists a legal tile species that is attachable at cell $v$. The entire simulation consists of two parts: \ptc{Observe\_Neighbors} and \ptc{Attachment} in Section~\ref{ssec:5.ptc}. The correctness proof is given in Section~\ref{ssec:5.proof}.

\subsection{Protocols}\label{ssec:5.ptc}

\paragraph{\ptc{Observe\_Neighbors}}
This protocol let an observing species record each label of its neighbor. If an observing species $\eta$ has not observed its neighbor $v$ in direction $d$, then $\eta=(\xi,\lrangle{\varepsilon,\eta_{-d}})$ for some $\eta_{-d}\in\Sigma_\varepsilon^3$. Let $\psi(v)$ be the state of $v$, then $\psi(v)$ may be the following three kinds of species:
\begin{enumerate}
    \item $\psi(v)=\Ocal$. Then $\eta$ records $\nullsf$ in that direction. Therefore, for any\\
    $(d,d')\in\{(N,\uw),(E,\rw),(S,\dw),(W,\lw)\}$, add the following reactions to $R'$:
    \begin{flalign*}
        ((\xi,\lrangle{\varepsilon,\eta_{-d}}),\Ocal,(\xi,\lrangle{\nullsf,\eta_{-d}}),\Ocal,d').&&
    \end{flalign*}
    
    \item $\psi(v)=(\xi,\lrangle{\varepsilon,\eta'_{-d^{-1}}})$ is also an observing species. Then they view each other as a blank species. For any $(d,d')\in\{(N,\uw),(E,\rw),(S,\dw),(W,\lw)\}$, add the following reactions to $R'$:
    \begin{flalign*}
        ((\xi,\lrangle{\varepsilon,\eta_{-d}}),(\xi,\lrangle{\varepsilon,\eta'_{-d^{-1}}}),(\xi,\lrangle{\nullsf,\eta_{-d}}),(\xi,\lrangle{\nullsf,\eta'_{-d^{-1}}}),d').&&
    \end{flalign*}
    
    \item $\psi(v)=\psi_\Dcal$ is a tile species. Then $\eta$ just record the label it sees. For any $(d,d')\in\{(N,\uw),(E,\rw),(S,\dw),(W,\lw)\}$, add the following reactions to $R'$:
    \begin{flalign*}
        ((\xi,\lrangle{\varepsilon,\eta_{-d}}),\psi,(\xi,\lrangle{\psi_{d^{-1}},\eta_{-d}}),\psi,d').&&
    \end{flalign*}
    
\end{enumerate}

Otherwise, the observing species $\eta$ may observe a tile species at $v$ after it has observed a blank species or an observing species at that cell. In other words, we should allow $\eta$ to update its information about $v$ even when it has recorded it to be $\nullsf$. So for any $(d,d')\in\{(N,\uw),(E,\rw),(S,\dw),(W,\lw)\}$, we need the following reactions in $R'$:
\begin{flalign*}
    ((\xi,\lrangle{\nullsf,\eta_{-d}}),\psi,(\xi,\lrangle{\psi_{d^{-1}},\eta_{-d}}),\psi,d').&&
\end{flalign*}

After the observing species has observed all its neighbors, it perform some ``calculation'' to see if there is any legal tile species can attach at its position. The process is described as follows:

\paragraph{\ptc{Attachment}}
In this protocol, an observing species can check whether a legal attachment exists. An observing species $\eta$ has recorded all labels it sees in $\eta_\Dcal\in\Sigma^4$, for any tile species $\psi=\psi_\Dcal\in Q$ satisfies $g(\psi_N,\eta_N)+g(\psi_E,\eta_E)+g(\psi_S,\eta_S)+g(\psi_W,\eta_W)\geq\tau$, we have the following reactions in $R'$:
\begin{flalign*}
    ((\xi,\eta_N,\eta_E,\eta_S,\eta_W),(\psi_N,\psi_E,\psi_S,\psi_W),\odot).&&
\end{flalign*}

Now we summarize the overall simulation protocol.

\paragraph{\ptc{s-d-sCRN\_$\gg$\_aTAM}}
For all $d'\in\{\uw,\rw,\dw,\lw\}$, pick $d$ correspond to $d'$ s.t. $(d,d')\in\{(N,\uw),(E,\rw),(S,\dw),(W,\lw)\}$. Then for all $\eta_{-d},\eta'_{-d^{-1}}\in\Sigma_\varepsilon^3$, $\psi=\psi_\Dcal$ being a tile species, and $d'\in\{\uw,\rw,\dw,\lw\}$, add the following reactions to $R'$ and all occurring species in $Q'$:
\begin{enumerate}
    \item $(\psi,\Ocal,\psi,(\xi,\varepsilon_\Dcal),d')$.
    
    \comm{A tile species turn its neighboring blank species into an observing species.}
    
    \item $((\xi,\lrangle{\varepsilon,\eta_{-d}}),\Ocal,(\xi,\lrangle{\nullsf,\eta_{-d}}),\Ocal,d')$.\comm{An observing species meets $\Ocal$.}
    
    \item $((\xi,\lrangle{\varepsilon,\eta_{-d}}),(\xi,\lrangle{\varepsilon,\eta'_{-d^{-1}}}),(\xi,\lrangle{\nullsf,\eta_{-d}}),(\xi,\lrangle{\nullsf,\eta'_{-d^{-1}}}),d')$.
    
    \comm{Two observing species meet each other.}
    
    \item $((\xi,\lrangle{\varepsilon,\eta_{-d}}),\psi,(\xi,\lrangle{\psi_{d^{-1}},\eta_{-d}}),\psi,d')$.
    
    $((\xi,\lrangle{\nullsf,\eta_{-d}}),\psi,(\xi,\lrangle{\psi_{d^{-1}},\eta_{-d}}),\psi,d').$
    
    \comm{An observing species meets a tile species.}

    \item $((\xi,\eta_N,\eta_E,\eta_S,\eta_W),(\psi_N,\psi_E,\psi_S,\psi_W),\odot)$, 
    
    \hfill if $g(\psi_N,\eta_N)+g(\psi_E,\eta_E)+g(\psi_S,\eta_S)+g(\psi_W,\eta_W)\geq\tau$.\comm{Tile attachment.}

\end{enumerate}
An example of simulating attachment is provided in Figure~\ref{fig:attachment}.

\begin{figure}[htbp]
    \centering
        \includegraphics[width=\textwidth]{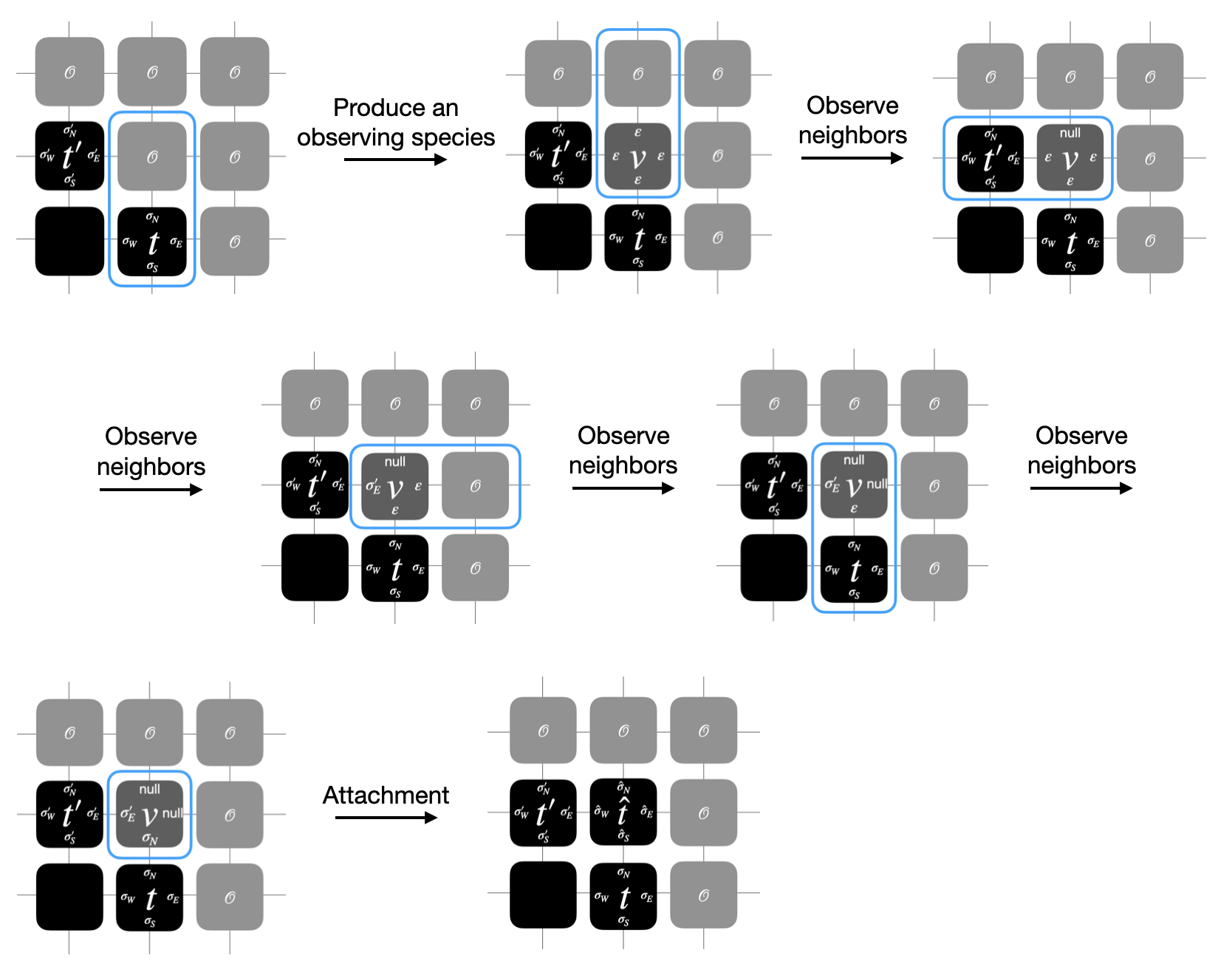}
    \caption{A simple example of simulating attachment.}
    \medskip
    \small Assuming $g(\hat{\sigma}_W,\sigma'_E)=g(\hat{\sigma}_S,\sigma_N)=1$ and $\tau=1$.
    \label{fig:attachment}
\end{figure}

\subsection{Proof sketch}\label{ssec:5.proof}
First, we define the representation function $R:Q'\rightarrow Q$. Let $\hat{Q}=\{\text{tile species}\}$. Then $R$ is defined as follows:
\begin{itemize}
    \item For all tile species $\psi\in\hat{Q}$, $R(\psi)=\psi$.
    \item For all $\psi\not\in\hat{Q}$, $R(\psi)=\nullsf$.
\end{itemize}

\paragraph{$\Gamma\flw_R\Gamma'$}
Observe that in the entire protocol, reactions that produce some tile species $\psi$ that maps to $\psi\neq\nullsf$ come from the \texttt{Attachment} protocol, and the tile species appears only if there exist some tiles attachable to that position. Therefore, for any $\alpha'\rw^1\beta'$ where $R^*(\alpha')=\alpha$ and $R^*(\alpha')=\alpha$, either $\beta=\alpha$ or $\beta=\alpha+t_{(x_0,y_0)}$ for some tile $t$ attachable at position $(x_0,y_0)$ in the aTAM system $\Gamma$.

\paragraph{$\Gamma'\models_R\Gamma$}
For every $\alpha,\beta\in\mathcal{A}(\Gamma)$ such that $\alpha\rw^1\beta$, $\beta=\alpha+t_{(x_0,y_0)}$ for some tile $t$. Assume that $\alpha\in\mathcal{A}(\Gamma')$, then we can always produce $\beta$ by putting an observing species in $(x_0,y_0)$. After observing every existing neighbor, the unimolecular reaction that performs the state transition to a proper tile species follows. Therefore, from a seed $S'=S$, we can produce any configuration of $\Gamma$ in $\Gamma'$. Let $\Pi(\alpha)=\alpha$ for all $\alpha\in\Gamma$, then it remains to show for every $\alpha'\in\Gamma'$ such that $R^*(\alpha')=\alpha$, $\alpha\rw_{\Gamma'}\alpha'$. Notice that $\alpha'-\alpha$ consists of observing species, holding some information of their neighbors. Since the attached tiles won't fall off, we can create those observing species each by growing the species $(\xi,\varepsilon,\varepsilon,\varepsilon,\varepsilon)$ at a right position and have it collect the information of a proper subset of its neighbors.

\paragraph{$\Gamma\Lrw_R\Gamma'$}
By the above explanation, it is obvious that $\{R^*(\alpha')|\alpha'\in\mathcal{A}(\Gamma')\}=\mathcal{A}(\Gamma)$. And the existence of a tile attachment in $\Gamma$ is equivalent to the existence of a tile species attachment in $\Gamma'$, so we have $\{R^*(\alpha')|\alpha'\in\mathcal{A}_*(\Gamma')\}=\mathcal{A}_*(\Gamma)$.
\section{Simulation between unit-seeded TA with affinity-strengthening rule and unit-seeded directed sCRN}\label{sec:TA}

In this section, we prove that unit-seeded directed sCRN and unit-seeded TA with affinity-strengthening rule can simulate each other. 

\begin{theorem}\label{thm:dsCRN-TA}
Given a unit-seeded tile automata system with affinity strengthening rule $\Gamma=(Q,S,I,g,R,\tau)$, there exists a unit-seeded directed sCRN $\Gamma'=(Q',S',R')$ which simulates $\Gamma$.
\end{theorem}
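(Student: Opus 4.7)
The plan is to extend the approach used in Theorem~\ref{thm:dsCRN-aTAM} for simulating aTAM by s-d-sCRN, because a tile automata system has two kinds of evolution that can be handled largely separately: single-tile attachments (analogous to aTAM) and local transition rules between pairs of adjacent tiles. I would let $S'=S$ and create a \emph{tile species} in $Q'$ for each state $\sigma\in Q$, together with \emph{observing species} of the same flavor as those used in Section~\ref{sec:aTAM}.

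Transition rules are directly simulated: for each $(\sigma_x^1,\sigma_y^1,\sigma_x^2,\sigma_y^2,d)\in R$, I add a single bimolecular reaction to $R'$ with the appropriate conversion from $D=\{\perp,\vdash\}$ to $\{\uw,\rw,\dw,\lw\}$. Because TA transitions are local pairwise updates with a specified orientation and d-sCRN bimolecular reactions are already directed, this correspondence is immediate. For attachment, I would reuse the \ptc{Observe\_Neighbors} and \ptc{Attachment} protocols from Section~\ref{sec:aTAM}. The observing species now records \emph{states} (rather than edge labels) of its four neighbors as elements of $Q\cup\{\nullsf\}$. Once all four sides are observed as $\eta_\Dcal$, for each candidate state $\sigma\in I$ such that the sum of the four pairwise affinities $g(\cdot,\cdot,\cdot)$ between $\sigma$ and the recorded neighbors is at least $\tau$, I add a unimolecular reaction from that observing species to the tile species $\sigma$. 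Restricting the candidate set to $I$ is what enforces the unit-seeded convention in which only tiles from $I$ may attach.

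The main subtlety, and the step I expect to require most care, is the interaction between ongoing observation and concurrent transitions: while an observing species is still collecting information, a neighboring tile species may fire a transition rule and change state, so that the observing species holds stale data. Here the \emph{affinity-strengthening} hypothesis is exactly what is needed -- any transition rule can only increase pairwise affinities, so an attachment justified by a previously recorded neighborhood remains justified against the current neighborhood, and the simulation never produces an unjustified attachment. Conversely, whenever an attachment is legal at some current configuration in $\Gamma$, the simulation in $\Gamma'$ can reproduce it by growing a fresh observing species and having it record the \emph{current} states of its neighbors before firing the unimolecular attachment reaction; this works because attached tiles never fall off under affinity-strengthening, so nothing the observing species sees is ever invalidated by retraction.

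The three conditions of Definition~\ref{def:sim} are then verified in the same style as Section~\ref{ssec:5.proof}. For $\Gamma\flw_\Rcal\Gamma'$, every reaction in $R'$ is either internal to observing species (and preserves $\Rcal^*$), a transition reaction matching some rule in $R$, or a unimolecular attachment reaction justified by a legal attachment in $\Gamma$. For $\Gamma'\models_\Rcal\Gamma$, every one-step change in $\Gamma$ can be reproduced: transitions by a single matching reaction, and attachments by the observe-then-attach sequence outlined above. Finally, $\Gamma'\Lrw_\Rcal\Gamma$ follows by combining these two directions to match reachable and terminal configurations under $\Rcal^*$, with terminality preserved because an $s$-as-TA terminal configuration is exactly one admitting neither a transition nor an attachable position, precisely the situation in which no $\Rcal^*$-affecting reaction is enabled in $\Gamma'$.
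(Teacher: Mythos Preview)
Your overall architecture matches the paper's: translate each transition rule $(\sigma_x^1,\sigma_y^1,\sigma_x^2,\sigma_y^2,d)\in R$ directly into a directed bimolecular reaction, and handle attachment with observing species as in Section~\ref{sec:aTAM}. Your use of affinity strengthening to justify that a stale-neighborhood attachment is still legal in $\Gamma$ is exactly the argument the paper gives for $\Gamma\flw_\Rcal\Gamma'$.

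There is, however, a genuine gap in the reverse direction. In the aTAM protocol you are reusing, once an observing species has recorded a \emph{non}-$\nullsf$ state on some side, that side is frozen: items~2--4 of \ptc{s-d-sCRN\_$\gg$\_aTAM} only update a side that is currently $\varepsilon$ or $\nullsf$. Now suppose an observing species at position $v$ records neighbor state $A$ on side $d$, and afterwards that neighbor fires a transition $A\to A'$. Affinity strengthening gives $g(\sigma,A,\cdot)\leq g(\sigma,A',\cdot)$ for every $\sigma$, so it is perfectly possible that some $\sigma\in I$ is attachable at $v$ against the \emph{current} neighborhood (containing $A'$) but \emph{not} against the recorded one (containing $A$). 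The observing species is then permanently stuck: it cannot attach, it cannot refresh side $d$, and since $v$ is no longer $\Ocal$ you cannot ``grow a fresh observing species'' there as your $\models_\Rcal$ argument assumes. Carrying this out at every boundary position yields a terminal configuration $\alpha'\in\Acal_*(\Gamma')$ whose image $\Rcal^*(\alpha')$ still admits an attachment in $\Gamma$, violating $\{\Rcal^*(\alpha')\mid\alpha'\in\Acal_*(\Gamma')\}=\Acal_*(\Gamma)$. The same stale-data configurations also obstruct condition~2 of Definition~\ref{def:mdl} when you take $\Pi(\alpha)=\{\alpha\}$, since an $\alpha''$ containing an observing species with a no-longer-present neighbor state cannot be reached from the clean $\alpha$.

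The paper closes this gap with one extra ingredient, \ptc{Update\_Neighboring\_Labels}: for every direction $d$ and every tile species $\psi\neq\eta_d$ it adds the reaction
\[
((\xi,\eta_N,\eta_E,\eta_S,\eta_W),\psi,(\xi,\lrangle{\psi,\eta_{-d}}),\psi,d'),
\]
allowing an observing species to overwrite an already-recorded side whenever the adjacent tile has changed. This both restores $\models_\Rcal$ (the observing species can always be brought up to date before attaching) and restores terminal equivalence (the update reactions are enabled precisely when some recorded side is inconsistent, so a terminal $\alpha'$ has all observing species consistent with their current neighbors, and hence no attachment is possible in $\Rcal^*(\alpha')$ either). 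Adding this single family of reactions to your construction fixes the gap; the rest of your argument then goes through as written.
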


\subsubsection{Simulation overview}\label{sssec:6.1.ov}
The configuration changes in the unit-seeded TA result from either state transitions or tile attachments. The simulation of state transitions is straightforward; we simply view them as bimolecular reactions. We describe this in Section~\ref{sssec:6.1.ptc}, in the \textbf{\ptc{State\_Transitions}} protocol. The tile attachments are similar to those in aTAM. The main issue to be taken care of is that in the a-sa-TA system, it is allowed that a pair of attached tiles change their states according to the transition rule. So the simulation is similar to the one used in simulating aTAM, except that we let the observing species keep updating its neighbors' information so that it is always possible to record the labels consistent with the current configuration. This is described in \textbf{\ptc{Update\_Neighboring\_Labels}} of Section~\ref{sssec:6.1.ptc}, and the correctness proof is given in Section~\ref{sssec:6.1.proof}.

\subsubsection{Protocols}\label{sssec:6.1.ptc}
\paragraph{\ptc{State\_Transitions.}}
To simulate the transition rules in $R$, we simply write each of them as a bimolecular reaction. Thus, we add the following reactions to $R'$:
\begin{flalign*}
    &(\psi_1,\psi_2,\psi_3,\psi_4,\rw),\text{ for all }(\psi_1,\psi_2,\psi_3,\psi_4,\vdash)\in R.\\
    &(\psi_1,\psi_2,\psi_3,\psi_4,\dw),\text{ for all }(\psi_1,\psi_2,\psi_3,\psi_4,\perp)\in R.&&
\end{flalign*}

\paragraph{\ptc{Update\_Neighboring\_Labels.}}
To handle the tile attachments, we use the same designing idea as protocol \ptc{Observe\_Neighbors} in Section~\ref{sec:aTAM}. This protocol makes the observing species keep updating their neighbors' labels. There may exist some tile species performing state transitions after an observing species has recorded its previous state. We slightly modify the \ptc{Observe\_Neighbors} protocol to make an observing species keep updating its information of neighboring labels on each side, even when all $4$ sides have been observed at least once. Notice that now the \emph{tile species} $\psi$ is just encoding a single state in $Q$. For an observing species $\eta$, for all $(d,d')\in\{(N,\uw),(E,\rw),(S,\dw),(W,\lw)\}$, all tile species $\psi\neq\eta_d$, we additionally add the following reactions to $R'$:
\begin{flalign*}
    ((\xi,\eta_N,\eta_E,\eta_S,\eta_W),\psi,(\xi,\lrangle{\psi,\eta_{-d}}),\psi,d').&&
\end{flalign*}

For the entire simulation, the protocol is described below.

\paragraph{\ptc{s-d-sCRN\_$\gg$\_s-as-TA}}
For all $d'\in\{\uw,\rw,\dw,\lw\}$, pick $d$ corresponding to $d'$ such that $(d,d')\in\{(N,\uw),(E,\rw),(S,\dw),(W,\lw)\}$. Then for all $\psi$, where $\psi_i$, $i\in[4]$, is a tile species, $\eta_{-d}$, $\eta'_{-d^{-1}}\in\Sigma_\varepsilon^3$, and $d'\in\{\uw,\rw,\dw,\lw\}$, add the following reactions to $R'$ and all occurring species in $Q'$:
\begin{enumerate}
    \item $(\psi,\Ocal,\psi,(\xi,\varepsilon_\Dcal),d')$.
    
    \comm{A tile species turns its neighboring blank species into an observing species.}
    \item $((\xi,\lrangle{\varepsilon,\eta_{-d}}),\Ocal,(\xi,\lrangle{\nullsf,\eta_{-d}}),\Ocal,d')$.\\
    $((\xi,\lrangle{\varepsilon,\eta_{-d}}),(\xi,\lrangle{\varepsilon,\eta'_{-d^{-1}}}),(\xi,\lrangle{\nullsf,\eta_{-d}}),(\xi,\lrangle{\nullsf,\eta'_{-d^{-1}}}),d')$.\\
    $((\xi,\lrangle{\varepsilon,\eta_{-d}}),\psi,(\xi,\lrangle{\psi_{d^{-1}},\eta_{-d}}),\psi,d')$.\\
    $((\xi,\lrangle{\nullsf,\eta_{-d}}),\psi,(\xi,\lrangle{\psi_{d^{-1}},\eta_{-d}}),\psi,d')$.

    \comm{Behavior of an observing species same as in the simulation of aTAM.}

    \item $((\xi,\eta_N,\eta_E,\eta_S,\eta_W),\psi,(\xi,\lrangle{\psi,\eta_{-d}}),\psi,d')$. \comm{Updating the neighboring labels.}

    \item $((\xi,\eta_N,\eta_E,\eta_S,\eta_W),(\psi_N,\psi_E,\psi_S,\psi_W),\odot)$, 
    
    \hfill if $g(\psi_N,\eta_N)+g(\psi_E,\eta_E)+g(\psi_S,\eta_S)+g(\psi_W,\eta_W)\geq\tau$.\comm{Tile attachment.}

    \item $(\psi_1,\psi_2,\psi_3,\psi_4,\rw)$, for all $(\psi_1,\psi_2,\psi_3,\psi_4,\vdash)\in R$.\\
    $(\psi_1,\psi_2,\psi_3,\psi_4,\dw)$, for all $(\psi_1,\psi_2,\psi_3,\psi_4,\perp)\in R$.\comm{State transition.}
\end{enumerate}

\subsubsection{Proof sketch}\label{sssec:6.1.proof}
First, we give the representation function $R:Q'\rw Q$. Define $\hat{Q}=\{\text{tile species}\}$. Then $R$ is defined as follows:
- For all tile species $\psi\in\hat{Q}$, $R(\psi)=\psi$.
- For all $\psi\not\in\hat{Q}$, $R(\psi)=\nullsf$.

\paragraph{$\Gamma\flw_R\Gamma'$}
A state transition in $\Gamma'$ corresponds directly to a bimolecular reaction in $\Gamma'$, and the attachments are similar to the proof in the simulation of aTAM. Although it may be the case that when an observing species is about to become a tile species, some of its neighbors have performed state transition so that the information is not consistent with what it recorded, the formation of this tile species still corresponds to a legal attachment in the aTAM system due to the affinity-strengthening constraint. Hence $\alpha'\rw_{\Gamma'}^1\beta'\implies R^*(\alpha')\rw_\Gamma^1R^*(\beta')$.

\paragraph{$\Gamma'\models_R\Gamma$}
This follows from the proof in aTAM, since state transition is simulated directly, any $\alpha\in\Acal(\Gamma)$ is also in $\Acal(\Gamma')$. Let $\Pi(\alpha)=\alpha$, then for every attachment in $\Gamma$, we can wait for the observing species to completely update the current states of its neighbors and then perform a corresponding state transition. The remaining proof is the same as in aTAM.

\paragraph{$\Gamma\Lrw_R\Gamma'$}
By the above explanation, it is obvious that $\{R^*(\alpha')|\alpha'\in\Acal(\Gamma')\}=\Acal(\Gamma)$. And the existence of a tile attachment in $\Gamma$ is equivalent to the existence of a tile species attachment in $\Gamma'$ by allowing the observing species to update the information of its neighbors continually, even if there is already a legal attachment that can be performed. Moreover, the updating process stops as long as the information is consistent with all its $4$ neighbors. So we have $\{R^*(\alpha')|\alpha'\in\Acal_*(\Gamma')\}=\Acal_*(\Gamma)$.



Simulating sCRN with tile automata is quite straightforward. We add a \emph{blank tile} with state $\Ocal$ that is attachable to any other tile. All bimolecular reactions are directly translated into state changes of two adjacent tiles. For unimolecular reactions, we allow the tiles with the corresponding state to perform state changes with any neighboring tiles.
\section{Simulation between non-deterministic async-CA and directed sCRN}\label{sec:CA}
In this section, we show that directed sCRN and non-deterministic async-CA can simulate each other. The simulation of non-deterministic async-CA by directed sCRN is described in Section~\ref{sec:CA-sCRN}, and the simulation of directed sCRN by non-deterministic async-CA is described in Section~\ref{sec:sCRN-CA}. With these simulation results, we conclude that the computational power of directed sCRN is the same as non-deterministic async-CA.

\subsection{Simulate non-deterministic async-CA by directed sCRN}\label{sec:CA-sCRN}
In~\cite{CQW20}, they suggest a method to emulate the synchronous cellular automata given a coloring initially. Since sCRN is intrinsically asynchronous, we take asynchronous cellular automata as a target to compare their computational power.
\begin{theorem}\label{thm:dsCRN-CA}
Given a non-deterministic, asynchronous CA $\Gamma=(Q,\Ncal,f)$, there exists a directed sCRN $\Gamma'=(Q',S',R')$ which simulates $\Gamma$.
\end{theorem}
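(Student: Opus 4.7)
The plan is to place one d-sCRN species at every node of $\Zbb^2$, encoding each CA state $a\in Q$ together with auxiliary fields that allow the cell to gradually collect its four neighbors' states via directed bimolecular reactions before applying $f$ as a unimolecular reaction. A species takes the form $(a,\phi,\beta,\mu)$ where $\phi$ is a phase tag in $\{\textsf{stable},\textsf{observing}\}\cup\{\textsf{locked-by-}d:d\in\Dcal\}$, $\beta:\Dcal\to Q\cup\{\varepsilon\}$ buffers observed neighbor states, and $\mu\in\{0,1\}$ is a pending-update flag; the representation function $\Rcal$ extracts $a$. The initial configuration $S'$ coincides with $S$ except that every cell is given $(\textsf{stable},\varepsilon_\Dcal,1)$ in the auxiliary components.

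First I would implement the observation/lock/commit cycle in the spirit of \ptc{Observe\_Neighbors} from Section~\ref{sec:aTAM} and \ptc{Update\_Neighboring\_Labels} from Section~\ref{sec:TA}. A stable cell with $\mu=1$ fires a unimolecular reaction switching to \textsf{observing} and resetting $\mu$ to $0$; an observing cell with empty slot $\beta_d=\varepsilon$ uses a directed bimolecular reaction in direction $d$ on a stable neighbor, writing the neighbor's CA state into $\beta_d$ and atomically placing the neighbor into $\textsf{locked-by-}d^{-1}$; when the four slots are full, a unimolecular reaction picks any $a'\in f(a,\beta_N,\beta_E,\beta_S,\beta_W)$ and replaces $a$ by $a'$; finally, four directed release reactions return each locked neighbor to stable, setting that neighbor's $\mu$ to $1$ iff the application of $f$ actually changed $a$ and leaving it otherwise. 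A back-off reaction lets an observing cell that cannot currently lock all four neighbors release the locks it does hold, restore its own $\mu$ to $1$, and return to \textsf{stable}.

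The key invariant is that every value stored in $\beta$ equals the current CA state of the corresponding neighbor at the moment $f$ is applied: a neighbor is written into $\beta$ only while stable, and the same bimolecular reaction locks it, so its CA component cannot change until the observing cell releases it. Therefore the four inputs to $f$ form a consistent snapshot exactly as in a single async-CA step. Two adjacent cells that both become \textsf{observing} cannot lock each other, but the back-off reaction ensures that no deadlock is permanent, so every single-cell CA update is realizable from any sCRN configuration representing the pre-update CA configuration.

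The main obstacle I expect is matching terminal configurations on the two sides, because without the $\mu$ mechanism a stable cell could always initiate a vacuous observing cycle and the sCRN would never be terminal, violating clause~2 of Definition~\ref{def:equiv}. The proposed $\mu$-discipline resolves this: once the CA reaches a fixed point, every completed observing cycle terminates with $a'=a$ and therefore does not re-arm any neighbor's flag, so the flags monotonically reach $0$ at every cell, at which point no reaction is enabled. Conversely, whenever some cell admits a witness update $a'\neq a$, that update is realizable as an observing cycle by the models argument, keeping the sCRN non-terminal. With these pieces in place the three proof obligations follow the templates of Sections~\ref{ssec:4.proof},~\ref{ssec:5.proof}, and~\ref{sssec:6.1.proof}: the follows direction is immediate because only the unimolecular $f$-application reaction alters $\Rcal^*$; the models direction is obtained by scheduling an observing cycle on the chosen CA cell from any sCRN representative; and equivalent productions follow from combining these with the terminal analysis above.
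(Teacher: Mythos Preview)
Your plan follows the same lock--observe--commit--release skeleton as the paper (Section~\ref{sssec:7.1.ptc}), and the single-locker phase tag versus the paper's per-direction $\kappa_\Dcal$ vector is a harmless simplification since your back-off rule dissolves any contention. The representation function, the ``follows'' argument via the unique $\Rcal$-changing reaction, and the terminal analysis are all in line with what the paper does.

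There is, however, one genuine gap in your $\mu$-discipline. You reset $\mu$ to $0$ when a cell enters \textsf{observing} and you only re-arm $\mu$ on a \emph{neighbor} when that neighbor is released after a state-changing commit. You never re-arm the committing cell's own flag. Consequently, a cell $v$ that has just performed a nontrivial update ends up stable with $\mu=0$ and cannot initiate another observing cycle until one of its neighbors also performs a nontrivial update. In the async-CA, by contrast, $v$ can be activated again immediately and may change again (its own state is part of its von-Neumann neighborhood). Concretely, take $f$ to be the identity at every cell except $v$, where $f$ increments a counter: the CA run $0\to 1\to 2\to\cdots$ at $v$ is legal, but in your sCRN the first commit leaves $v$ with $\mu=0$, every neighbor's subsequent cycle is vacuous and therefore never re-arms $v$, and you are stuck after one step. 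This breaks clause~1 of Definition~\ref{def:mdl} (you cannot reach the representative of the second update from the representative produced by the first) and also clause~2 of Definition~\ref{def:equiv} (you reach a terminal sCRN configuration whose image is not a CA fixed point).

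The paper avoids this precisely by distinguishing the two commit outcomes: after a \emph{changing} commit the cell enters the release state with pause bit $b=0$ and can lock again once it has released, whereas only a \emph{vacuous} commit sets $b=1$. The minimal repair to your scheme is the same: when the unimolecular $f$-application produces $a'\neq a$, also set the cell's own $\mu$ back to $1$ (equivalently, only leave $\mu=0$ when $a'=a$). With that one-line fix your construction is essentially the paper's, just with the dual convention that the re-arm signal is pushed by the changer rather than pulled by the paused cell.
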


\subsubsection{Simulation overview}\label{sssec:7.1.ov}
Notice that the local function of cellular automata may depend on the orientation, so it is necessary to 
simulated it by a d-sCRN along with the same initial pattern as in $\Gamma$ rather than an (undirected) sCRN. This could be accomplished by providing a predefined coloring on the surface at the beginning. A species must be able to observe and record each of its neighbors, then perform a state transition according to the local function. Like the simulation of aTAM in Section~\ref{sec:aTAM}, we use $4$ variables to record neighboring states. One thing to be careful about is that when a species $\psi$ is observing its neighbors, we require the observed neighbors to keep themselves unchanged until the state transition of $\psi$ is complete. Otherwise, $\psi$ may observe some illegal combination of states in its neighborhood. To avoid this situation, additional $4$ variables are introduced to a species to indicate the lock/unlock relation with its neighbors. This is realized in protocol \ptc{Observe\_and\_Lock}. As stated in the \ptc{State\_transitions} protocol, once a species has locked all its neighbors, it turns itself into another species that represents a different state in the CA system resulting from the local function. Then the species unlocks all its neighbors so that they are released and are able to perform reactions with other species. This is described in \ptc{Release}.

For more details about the simulation, we use a species of this form $\psi(v)=(\sigma(v),\sigma_\Dcal(v),\kappa_\Dcal(v),b,\nu)$ to encode the information needed for cell $v$. Where $\sigma_\Dcal(v)=(\sigma_N(v),\sigma_E(v),\sigma_S(v),\sigma_W(v))$, $\kappa_\Dcal(v)=(\kappa_N(v),\kappa_E(v),\kappa_S(v),\kappa_W(v))$. $\sigma(v)$ is the state of $v$ in the CA system $\Gamma$, $\sigma_\Dcal(v)$ records the states of $v$'s neighbors where $\varepsilon$ means that $v$ has not observed the neighbor on that side yet. $\kappa_\Dcal(v)\in\{\varepsilon,0,1\}$ represent whether $v$ is locked by its neighbor where $1$ means locking others, $0$ means being locked by others, and $\varepsilon$ means that they aren't locked by each other. $\nu\in\{0,1\}$ represents whether $v$ is a \emph{release species}, in which $v$ can do nothing but unlock all its neighbors. For the purpose of making the reaction process end whenever the cellular automata system $\Gamma$ has reached a fixed point, we want a cell not to lock its neighbors twice if the neighborhood stays unchanged. Therefore, we set $b=0$ in the beginning. If the local function $f$ is applied on a neighborhood and no state change is made on $v$, we turn $v$ into a \emph{pause species} where $b=1$. This implies that the activation of $v$ is "paused" until some of its neighbors change. The pause species cannot lock its neighbors. Therefore, if $\Gamma$ has a reachable fixed point, $\Gamma'$ is guaranteed to terminate as soon as every cell has executed at most one round of \ptc{Observe\_and\_Lock}.

If the cellular automata system $\Gamma$ begins with a configuration $S$, then we set the initial configuration of the sCRN to be $S'$ s.t. $S'(v)=(S(v),\varepsilon_\Dcal,\varepsilon_\Dcal,0,0)$ for all $v\in\Zbb^2$, where $\varepsilon_\Dcal=(\varepsilon,\varepsilon,\varepsilon,\varepsilon)$. This means that it has neither observed any of its neighbors nor been locked by them. The basic idea is to have species in the same neighborhood "observe and lock" one another. Roughly speaking, every species tries to lock its neighbors and records their states. Once it has locked all the neighbors successfully, it performs a state transition according to the local function $f$, and becomes a release species ($\nu=1$) to unlock all its neighbors. Note that a locked species is not allowed to lock others to avoid deadlocks, and we further allow each lock to be canceled at any time for simplicity. So the protocol mainly consists of two parts: \ptc{Observe\_and\_Lock} and \ptc{Release}, stated in Section~\ref{sssec:7.1.ptc}. The correctness proof is given in Section~\ref{sssec:7.1.proof}.

\subsubsection{Protocols}\label{sssec:7.1.ptc}

\paragraph{\ptc{Observe\_and\_Lock}}
This protocol describes the process for a species to record its neighbors' states and make them unable to change state until being unlocked. In particular, let a pair of species $\psi,\psi'$ be $(\sigma,\sigma_\Dcal,\kappa_\Dcal,\nu,b)$ and $(\sigma',\sigma'_\Dcal,\kappa'_\Dcal,\nu',b')$ respectively. If $\psi$, not locked by anyone else, observes an adjacent species $\psi'$ and they are not locked by each other, then we allow $\psi$ to lock $\psi'$ and record its state at the same time. So for any $\sigma,\sigma'\in Q$, $\sigma_\Dcal,\sigma_\Dcal\in(Q\cup\{\varepsilon\})^4$, $\kappa_{-d}\in\{\varepsilon,1\}^3$, $\kappa'_{-d^{-1}}\in\{\varepsilon,0,1\}^3$, $b'\in\{0,1\}$, and $(d,d')\in\{(N,\uw),(E,\rw),(S,\dw),(W,\lw)\}$, we add the following reactions to $R'$:
\begin{multline*}
    ((\sigma,\lrangle{\sigma_d,\sigma_{-d}},\lrangle{\varepsilon,\kappa_{-d}},0,0),(\sigma',\lrangle{\sigma'_{d^{-1}},\sigma'_{-d^{-1}}},\lrangle{\varepsilon,\kappa'_{-d^{-1}}},0,b'),\\
    (\sigma,\lrangle{\sigma',\sigma_{-d}},\lrangle{1,\kappa_{-d}},0,0),(\sigma',\lrangle{\sigma,\sigma'_{-d^{-1}}},\lrangle{0,\kappa'_{-d^{-1}}},0,b'),d')
\end{multline*}
At any time we allow $\psi$ to unlock $\psi'$ and make it possible to lock other species. For any $\sigma,\sigma'\in Q$, $\sigma_\Dcal,\sigma'_\Dcal\in(Q\cup\{\varepsilon\})^4$, $\kappa_{-d},\kappa'_{-d^{-1}}\in\{\varepsilon,0,1\}^3$, $b'\in\{0,1\}$, and $(d,d')\in\{(N,\uw),(E,\rw),(S,\dw),(W,\lw)\}$, we have the following reactions in $R'$:
\begin{multline*}
    ((\sigma,\lrangle{\sigma',\sigma_{-d}},\lrangle{1,\kappa_{-d}},0,0),(\sigma',\lrangle{\sigma,\sigma'_{-d^{-1}}},\lrangle{0,\kappa'_{-d^{-1}}},0,b'),\\
    (\sigma,\lrangle{\sigma',\sigma_{-d}},\lrangle{\varepsilon,\kappa_{-d}},0,0),(\sigma',\lrangle{\sigma,\sigma'_{-d^{-1}}},\lrangle{\varepsilon,\kappa'_{-d^{-1}}},0,b'),d').
\end{multline*}

\paragraph{\ptc{State\_Transitions}}
As a species locked all its neighbors, it performs state transition to simulate the local function. Therefore, for all $\sigma,\sigma'\in Q$ and $\sigma_\Dcal\in Q^4$ such that $\sigma'=f(\sigma,\sigma_N,\sigma_E,\sigma_S,\sigma_W)$, we have the following two cases:
\begin{enumerate}
    \item If $\sigma'\neq\sigma$, we just apply the local function and turn the species into release state. Therefore, we add the following reactions to $R'$:
    \begin{flalign*}
        ((\sigma,\sigma_\Dcal,\varepsilon_\Dcal,0,0),(\sigma',\sigma_\Dcal,\varepsilon_\Dcal,1,0),\odot).&&
    \end{flalign*}
    \item If $\sigma\neq\sigma'$, we don't perform any state transition but we turn the species into the pause state, preventing it from locking the same neighborhood again. So we have the following reactions in $R'$:
    \begin{flalign*}
        ((\sigma,\sigma_\Dcal,\varepsilon_\Dcal,0,0),(\sigma',\sigma_\Dcal,\varepsilon_\Dcal,1,1),\odot).&&
    \end{flalign*}
\end{enumerate}
Notice that whenever the pause species has released all its neighbors and investigate any state change among its neighbors, it updates the information and is no longer paused. Therefore, for all $\sigma,\sigma'\in Q$, $\sigma_\Dcal,\sigma'_\Dcal\in(Q\cup\{\varepsilon\})^4$, $\kappa_{-d},\kappa'_{-d^{-1}}\in(\{\varepsilon,0,1\})^3$, $b'\in\{0,1\}$, $(d,d')\in\{(N,\uw),(E,\rw),(S,\dw),(W,\lw)\}$, and $\sigma'\neq\sigma_d$, add the following reactions to $R'$:
\begin{multline*}
    ((\sigma,\lrangle{\sigma_d,\sigma_{-d}},\lrangle{\varepsilon,\kappa_{-d}},0,1),(\sigma',\lrangle{\sigma'_{d^{-1}},\sigma'_{-d^{-1}}},\lrangle{\varepsilon,\kappa'_{-d^{-1}}},0,0),\\
    (\sigma,\lrangle{\sigma',\sigma_{-d}},\lrangle{\varepsilon,\kappa_{-d}},0,0),(\sigma',\lrangle{\sigma,\sigma'_{-d^{-1}}},\lrangle{\varepsilon,\kappa'_{-d^{-1}}},0,0),d').
\end{multline*}

The remaining part is \ptc{Release}, which we now describe.
\paragraph{\ptc{Release}}
In this protocol, a release species unlocks all its neighbors and leaves the release state. We first simulate the unlocking process. For all $\sigma.\sigma'\in Q$, $\sigma_\Dcal\in Q^4$, $\sigma'_\Dcal\in(Q\cup\{\varepsilon\})^4$, $\kappa_{-d},\kappa'_{-d^{-1}}\in(\{\varepsilon,0,1\})^3$, $b'\in\{0,1\}$, and $(d,d')\in\{(N,\uw),(E,\rw),(S,\dw),(W,\lw)\}$, we have the following reaction belong to $R'$: 
\begin{multline*}
    ((\sigma,\lrangle{\sigma',\sigma_{-d}},\lrangle{1,\kappa_{-d}},1,b),(\sigma',\lrangle{\sigma,\sigma'_{-d^{-1}}},\lrangle{0,\kappa'_{-d^{-1}}},0,b'),\\
    (\sigma,\lrangle{\sigma',\sigma_{-d}},\lrangle{\varepsilon,\kappa_{-d}},1,b),(\sigma',\lrangle{\sigma,\sigma'_{-d^{-1}}},\lrangle{\varepsilon,\kappa'_{-d^{-1}}},0,b'),d')
\end{multline*}
And then it leaves the release state by performing an unimolecular reaction. Hence, for all $\sigma\in Q$, $\sigma_\Dcal\in Q^4$, $b\in\{0,1\}$, add the following reactions to $R'$:
\begin{flalign*}
    ((\sigma,\sigma_\Dcal,\varepsilon_\Dcal,1,b),(\sigma,\sigma_\Dcal,\varepsilon_\Dcal,0,b),\odot).&&
\end{flalign*}

In conclusion, we summarize the simulation in the following protocol. See Figure~\ref{fig:lock_and_change} for a diagram.

\begin{figure}[htbp]
    \centering
    \begin{subfigure}{0.5\textwidth}
        \centering
        \includegraphics[width=\textwidth]{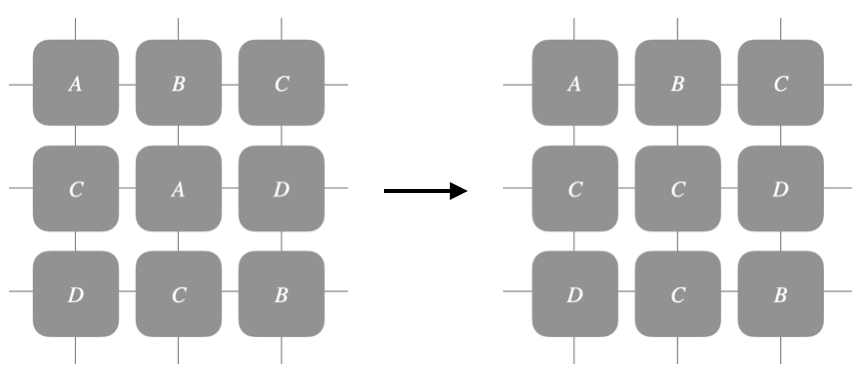}
        \caption{Local function $f(A,B,D,C,C)=C$.}
        \label{fig:local_func}
    \end{subfigure}%
    \begin{subfigure}{0.5\textwidth}
        \centering
        \includegraphics[width=0.6\textwidth]{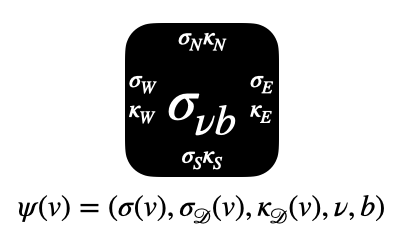}
        \caption{A species encoding information of a cell.}
        \label{fig:CA_species}
    \end{subfigure}%

    \bigskip
    \begin{subfigure}{\textwidth}
        \centering
        \includegraphics[width=\textwidth]{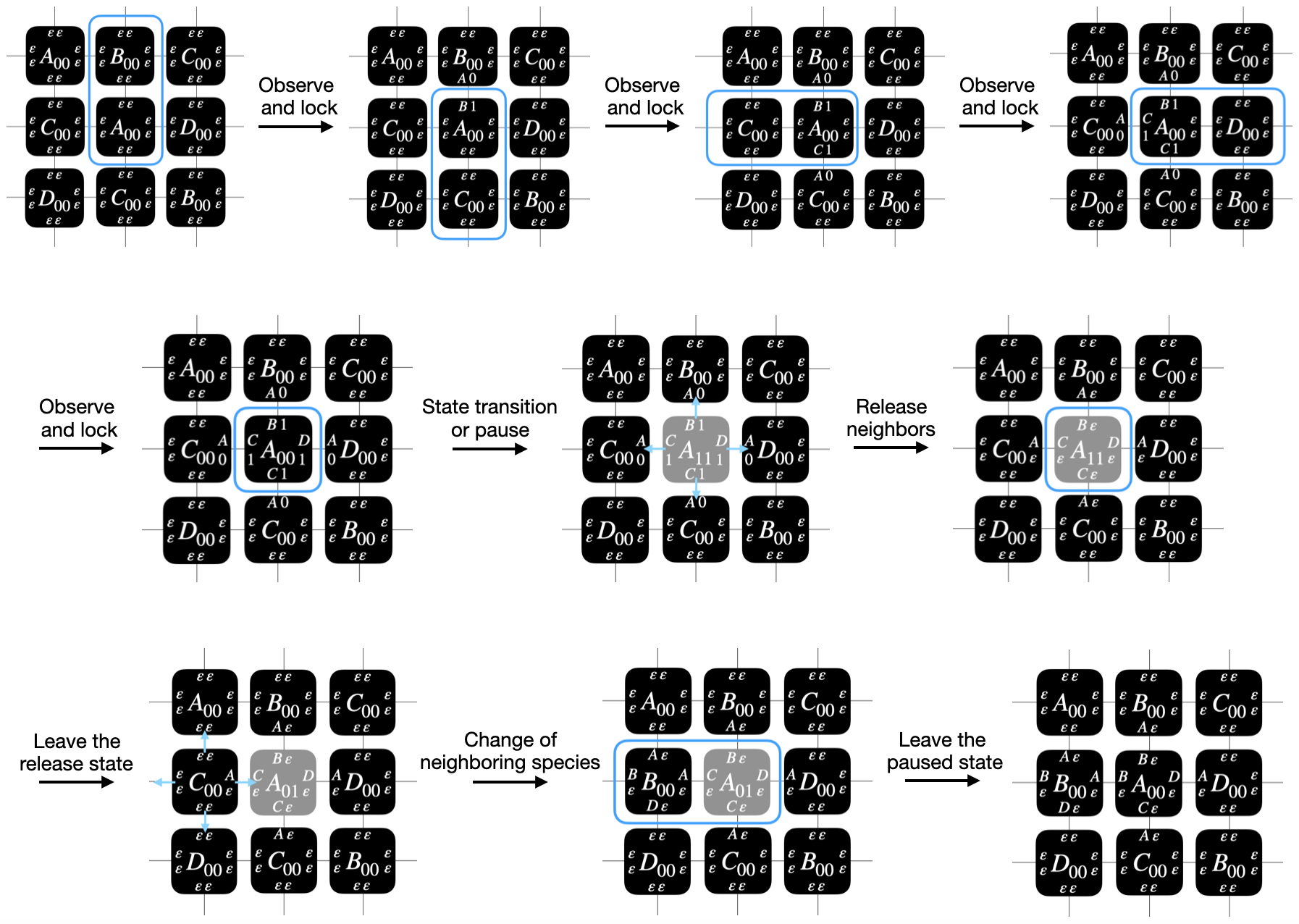}
        \caption{Simulating acync-CA local function $f(A,B,D,C,C)=A$ by d-sCRN.}
        \label{fig:local_function}
    \end{subfigure}%
    \caption{Simulation of acync-CA on sCRN.}
    \label{fig:lock_and_change}
\end{figure}

\paragraph{\ptc{d-sCRN\_$\gg$\_async-CA}}
~\\
For any $\sigma,\sigma'\in Q$, $\sigma_\Dcal,\sigma'_\Dcal\in(Q\cup\{\varepsilon\})^4$, $\kappa_{-d},\kappa'_{-d^{-1}}\in\{\varepsilon,0,1\}^3$, $b'\in\{0,1\}$, and $(d,d')\in\{(N,\uw),(E,\rw),(S,\dw),(W,\lw)\}$, we have the following reactions in $R'$:
\begin{enumerate}
    \item $((\sigma,\lrangle{\sigma_d,\sigma_{-d}},\lrangle{\varepsilon,\kappa_{-d}},0,0),(\sigma',\lrangle{\sigma'_{d^{-1}},\sigma'_{-d^{-1}}},\lrangle{\varepsilon,\kappa'_{-d^{-1}}},0,b'),$
    
    \hfill $(\sigma,\lrangle{\sigma',\sigma_{-d}},\lrangle{1,\kappa_{-d}},0,0),(\sigma',\lrangle{\sigma,\sigma'_{-d^{-1}}},\lrangle{0,\kappa'_{-d^{-1}}},0,b'),d')$,
    
    \hfill if $\kappa_{-d}\in\{1,\varepsilon\}^3$.
    
    \comm{Lock and observe its neighbors.}
    
    \item $((\sigma,\lrangle{\sigma',\sigma_{-d}},\lrangle{1,\kappa_{-d}},0,0),(\sigma',\lrangle{\sigma,\sigma'_{-d^{-1}}},\lrangle{0,\kappa'_{-d^{-1}}},0,b'),$
    
    \hfill $(\sigma,\lrangle{\sigma',\sigma_{-d}},\lrangle{\varepsilon,\kappa_{-d}},0,0),(\sigma',\lrangle{\sigma,\sigma'_{-d^{-1}}},\lrangle{\varepsilon,\kappa'_{-d^{-1}}},0,b'),d')$.
    
    \comm{Unlock its neighbor whenever it want.}
    
    \item $((\sigma,\sigma_\Dcal,\varepsilon_\Dcal,0,0),(\sigma',\sigma_\Dcal,\varepsilon_\Dcal,1,0),\odot)$, if $\sigma'=f(\sigma,\sigma_N,\sigma_E,\sigma_S,\sigma_W)$ and $\sigma\neq\sigma'$.

    \comm{Perform state transition and become release species.}

    \item $((\sigma,\sigma_\Dcal,\varepsilon_\Dcal,0,0),(\sigma',\sigma_\Dcal,\varepsilon_\Dcal,1,1),\odot)$, if $\sigma'=f(\sigma,\sigma_N,\sigma_E,\sigma_S,\sigma_W)$ and $\sigma=\sigma'$.

    \comm{Local function maps the neighborhood of $\sigma$ to itself.}
    
    \comm{$\sigma$ turns itself into the paused release species.}

    \item $((\sigma,\lrangle{\sigma',\sigma_{-d}},\lrangle{1,\kappa_{-d}},1,b),(\sigma',\lrangle{\sigma,\sigma'_{-d^{-1}}},\lrangle{0,\kappa'_{-d^{-1}}},0,b'),$
    
    \hfill $(\sigma,\lrangle{\sigma',\sigma_{-d}},\lrangle{\varepsilon,\kappa_{-d}},1,b),(\sigma',\lrangle{\sigma,\sigma'_{-d^{-1}}},\lrangle{\varepsilon,\kappa'_{-d^{-1}}},0,b'),d')$, if $\sigma_\Dcal\in Q^4$.

    \comm{Unlock all its neighbors after applying the local function.}

    \item $((\sigma,\sigma_\Dcal,\varepsilon_\Dcal,1,b),(\sigma,\sigma_\Dcal,\varepsilon_\Dcal,0,b),\odot)$, if $\sigma_\Dcal\in Q^4$.\comm{Leave the release state.}

    \item $((\sigma,\lrangle{\sigma_d,\sigma_{-d}},\lrangle{\varepsilon,\kappa_{-d}},0,1),(\sigma',\lrangle{\sigma'_{d^{-1}},\sigma'_{-d^{-1}}},\lrangle{\varepsilon,\kappa'_{-d^{-1}}},0,b'),$
    
    \hfill $(\sigma,\lrangle{\sigma',\sigma_{-d}},\lrangle{\varepsilon,\kappa_{-d}},0,0),(\sigma',\lrangle{\sigma,\sigma'_{-d^{-1}}},\lrangle{\varepsilon,\kappa'_{-d^{-1}}},0,b'),d')$,
    
    \hfill if $\sigma_\Dcal\in Q^4$ and $\sigma'\neq\sigma_d$.

    \comm{After releasing all its neighbors, some neighbors change their state.}
    
    \comm{Then update the information and leave the paused state.}
\end{enumerate}

\subsubsection{Proof sketch}\label{sssec:7.1.proof}
First, we give the representation function $R:Q'\rw Q$ as follows:\\
$\forall\psi=(\sigma,\sigma_\Dcal,\kappa_\Dcal,b,\nu)\in Q'$, $R(\psi)=\sigma$.

\paragraph{$\Gamma\flw_R\Gamma'$}
By the same argument as in Section~\ref{ssec:4.proof}, Lemma~\ref{lem:1}, we only need to prove that for any $\alpha',\beta'\in\Gamma'$ such that $\alpha'\rw^1\beta'$, we have $R^*(\alpha')\rw^1R^*(\beta')$. For such $\alpha',\beta'$, except for the reactions in item $3$ of protocol \ptc{d-sCRN\_$\gg$\_async-CA}, the first component of $\psi\in Q'$ won't change, so $R^*(\alpha')=R^*(\beta')$. For those two reactions, since the neighbors of $\psi=(\sigma,\sigma_\Dcal,\varepsilon_\Dcal,0,0)$ will not perform state transition after being locked, $\sigma_\Dcal$ is exactly the state around $\psi$ in configuration $\alpha'$. So $R^*(\sigma',\sigma_\Dcal,\varepsilon_\Dcal,0,1)=\sigma'=f(\sigma,\sigma_\Dcal)=f(R^*(\sigma,\sigma_\Dcal,\varepsilon_\Dcal,0,0),\sigma_\Dcal)$. This implies $R^*(\alpha')\rw^1R^*(\beta')$.

\paragraph{$\Gamma'\models_R\Gamma$}
For any $\alpha\in\Acal(\Gamma)$, let $\Pi=\{\alpha'\in\Acal(\Gamma'):R^*(\alpha')=\alpha\}$ be the set of all configurations that map to $\alpha$. Suppose that $\alpha$ can be produced by triggering a sequence of cells in $\Gamma'$, then we simulate it by having these cells lock their neighbors, perform state transition, and release all their neighbors one by one. It is obvious that the resulting configuration maps to $\alpha$. Then it suffices to show that for every $\beta\in\Acal(\Gamma)$ such that $\alpha\rw^1\beta$ and for every $\alpha'\in\Pi$, there exists a sequence of reactions in $\Gamma'$ such that $\alpha'\rw\beta'$, where $R^*(\alpha')=\alpha, R^*(\beta')=\beta$. Assume that $\beta$ can be produced from $\alpha$ by applying local function $f$ at a cell $v$. Since each lock between a pair of species can be unlocked unconditionally, we can always take the configuration back to the one with no lock. At this point, it is possible to perform state transition on $v$ by locking all its neighbors unless $v$ is a pause species where $f$ has no effect on it. Therefore, we conclude that $\alpha'\rw\beta'$ for some $R^*(\beta')=\beta$.

\paragraph{$\Gamma\Lrw_R\Gamma'$}
By the above explanation, it is obvious that $\{R^*(\alpha'):\alpha\in\Acal(\Gamma')\}=\Acal(\Gamma)$. Notice that there exists a fixed point $\alpha$ in $\Gamma$ that is reachable if and only if there exists a reachable configuration $\alpha'$ in $\Gamma'$ such that $R^*(\alpha')=\alpha$ and all species are paused by letting every cell perform a state transition according to the local function. As a result, $\{R^*(\alpha'):\alpha\in\Acal_*(\Gamma')\}=\Acal_*(\Gamma)$ as well.

\subsection{Simulate directed sCRN by non-deterministic async-CA}\label{sec:sCRN-CA}
\begin{theorem}\label{thm:CA-dsCRN}
Given a directed sCRN $\Gamma=(Q,S,R)$, there exists a non-deterministic, asynchronous CA $\Gamma'=(Q',\Ncal,f)$ which simulates $\Gamma$.
\end{theorem}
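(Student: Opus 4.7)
The plan is to construct $\Gamma'$ by enlarging $Q$ with one fresh ``intermediate'' state $I_r$ for each bimolecular reaction $r\in R$, and to simulate each such reaction by a three-step asynchronous handshake: the initiating cell commits to $I_r$, the partner then transitions to its product state, and finally the initiator completes. Concretely, set $Q'=Q\cup\{I_r:r\text{ bimolecular}\}$, take $\Ncal$ to be the von Neumann neighborhood, let the initial configuration be $S$ (embedded into $Q'\supseteq Q$), and define the representation function by $\Rcal(q)=q$ for $q\in Q$ and $\Rcal(I_r)=\UNDsf$, so that any configuration containing an intermediate state is mapped by $\Rcal^*$ to $\UNDsf$, exactly the relaxation anticipated in Section~\ref{sec:sim}.

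The non-deterministic local function $f$ is assembled from the following options. For each unimolecular reaction $(A,B,\odot)\in R$, let $B\in f(A,\ldots)$. For each bimolecular reaction $r=(A,B,C,D,d)\in R$, let
\begin{itemize}
    \item \emph{Initiation}: $I_r\in f(A,\ldots)$ whenever the neighbor in direction $d$ is $B$;
    \item \emph{Response}: $D\in f(B,\ldots)$ whenever the neighbor in direction $d^{-1}$ is $I_r$;
    \item \emph{Completion}: $C\in f(I_r,\ldots)$ whenever the neighbor in direction $d$ is $D$;
    \item \emph{Abort}: $A\in f(I_r,\ldots)$ whenever the neighbor in direction $d$ is neither $B$ nor $D$.
\end{itemize}
In every neighborhood where no non-identity option applies, set $f(q,\ldots)=\{q\}$, so identity is available \emph{only} as a last resort; consequently a configuration is a fixed point of $G$ if and only if no sCRN reaction can be triggered at any cell.

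The verification has three parts, mirroring Definitions~\ref{def:equiv}, \ref{def:flw}, and \ref{def:mdl}. For $\Gamma\flw_\Rcal\Gamma'$ I would inspect each single-step transition: a unimolecular transition $A\rw B$ (with $A,B\in Q$) in $\Gamma'$ corresponds to $(A,B,\odot)\in R$, while initiation, response, completion, and abort each involve a configuration containing some $I_r$, for which $\Rcal^*$ is $\UNDsf$ and the follows condition is vacuously satisfied. For $\Gamma'\models_\Rcal\Gamma$ I would take $\Pi(\alpha)=\{\alpha\}$ under the embedding $Q\hookrightarrow Q'$: every $\alpha\rw^1_\Gamma\beta$ is realized by activating the two reacting cells in the initiation--response--completion order, and any $\alpha''\rw^1_{\Gamma'}\beta'$ with $\Rcal^*(\alpha'')=\alpha\neq\UNDsf$ must satisfy $\alpha''=\alpha$ (all cells lie in $Q$), so the second clause of Definition~\ref{def:mdl} is immediate. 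Finally, $\Gamma'\Lrw_\Rcal\Gamma$ splits into the production equality (shown by induction on sCRN reaction sequences, with intermediate async-CA snapshots mapping to $\UNDsf$) and the terminal equality, which follows directly from the identity-only design of $f$ above.

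The main obstacle is a race between overlapping commitments: two neighbors $u_1,u_2$ of a common cell $v$ may both initiate bimolecular reactions involving $v$, and after $v$ non-deterministically responds to only one of them, the other initiator is left in its intermediate state facing an unexpected neighbor. The abort rule is designed to resolve this, but its correctness must be checked carefully: (i) whenever an $I_r$-cell's target neighbor is neither $B$ nor $D$ the abort option must be enabled, so the system never deadlocks and no spurious fixed point arises; (ii) because every abort starts from a configuration containing an intermediate state, $\Rcal^*$ is $\UNDsf$ there and $\Gamma\flw_\Rcal\Gamma'$ is preserved; and (iii) an abort returns the cell to its own pre-initiation state, so it cannot produce any clean configuration that was not already reachable in $\Gamma$.
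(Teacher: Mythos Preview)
Your construction has a genuine gap: the \emph{Completion} rule can fire spuriously. The cell holding $I_r$ completes to $C$ whenever it sees $D$ in direction $d$, but it cannot distinguish ``my partner responded to me'' from ``my partner reached state $D$ by some unrelated route.'' Concretely, take $Q=\{A,B,C,D,X,Y\}$ with reactions $r=(A,B,C,D,\rw)$ and $r'=(X,B,Y,D,\lw)$, and place $A,B,X$ at $(0,0),(1,0),(2,0)$. In $\Gamma$ the reachable clean configurations are $(A,B,X)$, $(C,D,X)$, and $(A,D,Y)$; the configuration $(C,D,Y)$ is \emph{not} reachable. In your $\Gamma'$, however, both $A$ and $X$ may initiate toward $B$; if $B$ responds to $X$ (becoming $D$) and $X$ completes to $Y$, then the cell at $(0,0)$---still holding $I_r$---now sees $D$ to its east and completes to $C$, yielding the clean configuration $(C,D,Y)$. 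Since both endpoints are clean (no $I_\bullet$ present) and $(A,B,X)\not\rw_\Gamma(C,D,Y)$, this violates both $\Gamma\flw_\Rcal\Gamma'$ and the production equality in Definition~\ref{def:equiv}. Your abort rule handles the case where the neighbor is visibly wrong, but not the case where it is coincidentally right.

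The paper closes this hole by making the responding cell pass through an explicit \emph{accept} state $(\accsf,\psi',\psi,d^{-1})$ that records which invitation it is honoring; the inviter transitions only upon seeing that matching accept state, never upon seeing the raw product $D$. (The paper also maps invite states to their original species rather than to $\UNDsf$, so every completed handshake is bracketed by clean configurations that differ by exactly one sCRN reaction.) To repair your scheme you would need the same idea: the partner must enter a dedicated intermediate state tied to $r$ (or at least to the initiator's identity and direction) before the initiator is allowed to complete.
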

\subsubsection{Simulation overview}
Given a d-sCRN $\Gamma=(Q,S,R)$, we give our non-deterministic async-CA $\Gamma'$ the same initial configuration, which is the seed $s$ at a specific cell and all the other cells contain $\Ocal$. To simulate a sCRN, we have to construct the local function $f$. So we assign the outcome of $f$ on every possible neighborhood. For unimolecular reaction, we give the protocol in Section~\ref{sssec:7.2.ptc} \ptc{Uni\_Reactions}, by just having a cell change its state regardless of what its neighbors are. For bimolecular reactions, notice that in sCRN, two species change their state at the same time, but there's always one cell changing state at a time in cellular automata. Therefore we have to "tie" a pair of adjacent cells together. Let a cell $\sigma$ non-deterministically pick a neighbor $\sigma'$ in direction $d$ and "invite" it if there exists a reaction in $R$ that uses $\sigma,\sigma'$ as reactants. $\sigma$ becomes the \emph{invite state} $(\invsf,\sigma,\sigma',d)$. After receiving several invitations, $\sigma'$ non-deterministically choose one of them to "accept" and turns into an \emph{accept state} $(\accsf,\sigma',\varsigma,d')$, which means that it accepts the invitation from a cell with state $\varsigma$ in direction $d'$. Equivalently, other invitations are viewed as being "rejected". Then the "invite-accept pair" can change their state according to the reaction respectively. For simplicity we let a cell implicitly lock all of its neighbors as soon as it turns into an invite state or an accept state. The last step is to tell all the rejected cells to give up and return to their previous state. The protocol is given in Section~\ref{sssec:7.2.ptc} \ptc{Bi\_Reactions}. Figure~\ref{fig:bimolecular_reaction} shows a simple example, and the correctness proof is given in Section~\ref{sssec:7.2.proof}.

\begin{figure}[htbp]
    \centering
    \begin{subfigure}{\textwidth}
        \centering
        \includegraphics[width=0.4\textwidth]{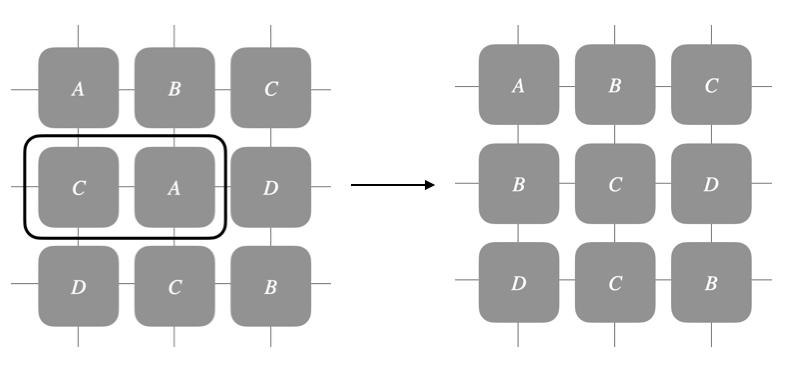}
        \caption{Bimolecular reaction in sCRN}
        \label{fig:sCRN}
    \end{subfigure}%
    
    \begin{subfigure}{\textwidth}
        \centering
        \includegraphics[width=\textwidth]{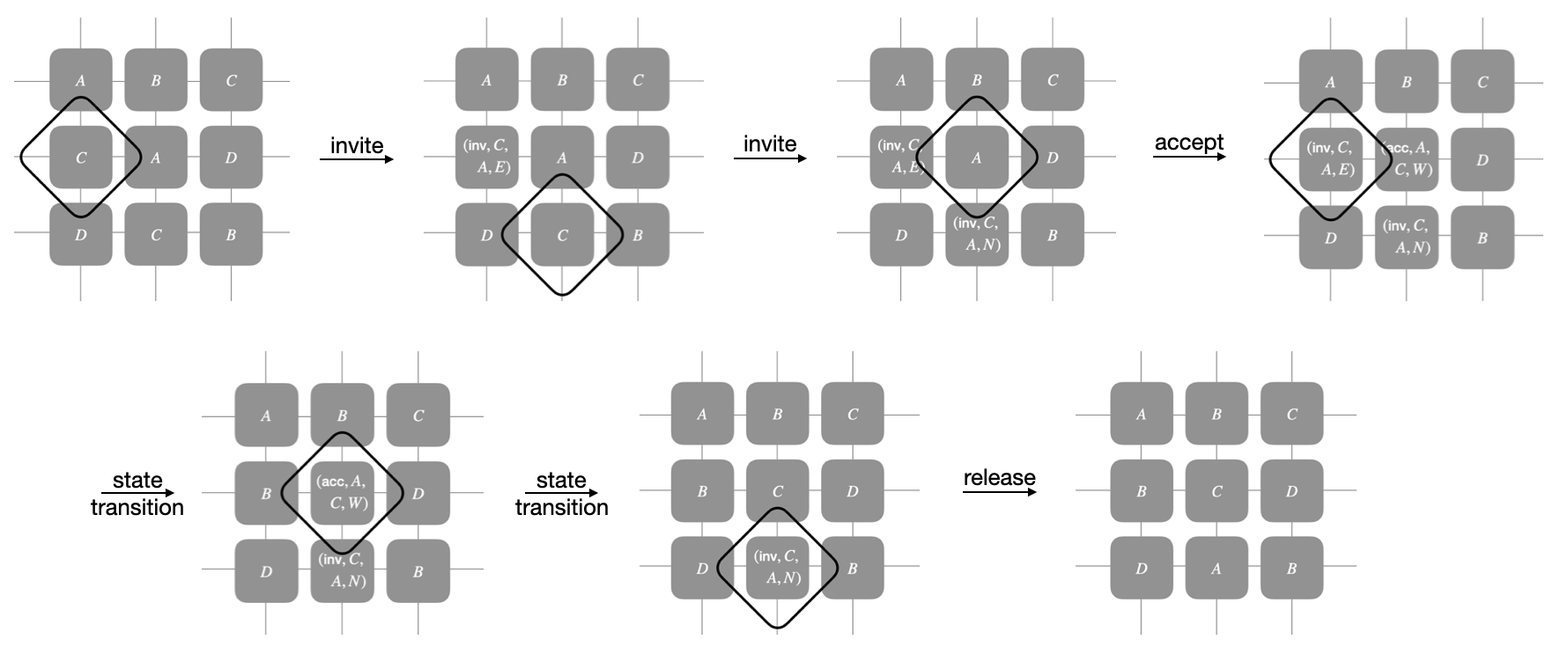}
        \caption{Simulation by async-CA}
        \label{fig:bimolecular_react}
    \end{subfigure}%
    \caption{Simulating sCRN reaction $C+A\rw B+C$ on non-deterministic async-CA}
    \label{fig:bimolecular_reaction}
\end{figure}
\subsubsection{Protocols}\label{sssec:7.2.ptc}
\paragraph{\ptc{Uni\_Reactions}}
This protocol simulates the unimolecular reactions in $\Gamma$. First, we let all the states in $Q$ be contained in $Q'$. For all $\psi,\psi'\in Q$ such that $(\psi,\psi',\odot)\in R$, we add $\sigma'$ to $f(\psi,\psi_N,\psi_E,\psi_S,\psi_W)$ for every $\psi_\Dcal\in Q^4$.

\paragraph{\ptc{Bi\_Reactions}}
To be more precise, for any species $\psi\in Q$ in sCRN, we define the set $nR(\psi,d)=\{\psi':\exists \psi_x,\psi_y \text{ s.t. }(\psi,\psi',\psi_x,\psi_y,d)\in R\}$ to be all possible species that can react with $v$ in direction $d$ according to $R$. A cell $v$ non-deterministically picks a direction $d$. If the neighbor $\vec{v'}=\vec{v}+\vec{v_d}$ contains a state representing a species $\psi'\in nR(\psi,d)$, then $v$ changes to an \emph{invite state} $(\invsf,\psi,\psi',d)$. On the other hand, $v'$ non-deterministically picks an invitation to accept, changes to an \emph{accept state} $(\accsf,\psi',\varphi,d')$ (if accepting $v$ then $\varphi=\psi,d'=d^{-1}$), and then the ''invite-accept'' pair of cells perform state transition respectively. A state involved in a bimolecular reaction can either invite others or accept the invitations it receives. So we discuss the following two cases:
\begin{enumerate}
    \item If a state $\psi\in Q$ has a neighbor $\psi_d\in nR(\psi,d)$, $\psi$ could turn into an invite state toward that neighbor. So we add the state $(\invsf,\psi,\psi_d,d)$ to $f(\psi,\psi_N,\psi_E,\psi_S,\psi_W)$ for all $\psi_{-d}\in Q^3$.
    \item If $\psi$ has been invited by some of its neighbors, it (non-deterministically) picks one to accept among all of these invitations and turns into the accept state. So for any neighbor $\psi_d=(\invsf,\psi',\psi,d^{-1})$, we add the state $(\accsf,\psi,\psi',d)$ to $f(\psi,\psi_N,\psi_E,\psi_S,\psi_W)$ for all $\psi_{-d}\in(Q\cup\{\text{invite states}\})^3$.
\end{enumerate}

As long as an invite-accept pair $((\invsf,\psi,\psi',d),(\accsf,\psi',\psi,d^{-1}))$ occurs and there exists $\psi_x,\psi_y\in Q$ s.t. $(\psi,\psi',\psi_x,\psi_y,d)\in R$, they change their state consecutively. So we add $\psi_x$ to $f((\invsf,\psi,\psi',d),\lrangle{(\accsf,\psi',\psi,d^{-1}),\psi_{-d}})$ for all $\psi_{-d}\in Q^3$, and add $\psi_y$ to $f((\accsf,\psi',\psi,d^{-1}),\lrangle{\psi_x,\psi'_{-d^{-1}}})$ for all $\psi'_{-d^{-1}}\in Q^3$.

It remains to deal with those invite states that have not been accepted. If an invite state $(\invsf,\psi,\psi',d)$ observes its neighbor in direction $d$ is not in state $\psi'$, it knows that its invitation has been rejected. So it turns back to its original state. There may be a situation where $\psi'$ has chosen another state to perform a bimolecular reaction but remained unchanged itself. In this case, we let $(\invsf,\psi,\psi',d)$ keep inviting $\psi'$. Therefore, we only need to add the state $\psi$ to $f((\invsf,\psi,\psi',d),\psi_N,\psi_E,\psi_S,\psi_W)$ for all $\psi_\Dcal\in Q^4$ and $\psi_d\neq\psi'$.

We summarize the above simulation in the following protocol.

\paragraph{\ptc{async-CA\_$\gg$\_d-sCRN}}
For all $\psi,\psi'\in Q$, we construct $f$ by the following protocol:
\begin{enumerate}
    \item For all $\psi_\Dcal\in Q^4$, add the following states to $f(\psi,\psi_N,\psi_E,\psi_S,\psi_W)$:
    \begin{itemize}
        \item[1.1] $\sigma'$, if $(\psi,\psi',\odot)\in R$. \comm{Unimolecular reactions.}
        \item[1.2] $(\invsf,\psi,\psi_d,d)$, if $\psi_d\in nR(\psi,d)$. \comm{Change to invite state.}
    \end{itemize}

    \item For all $\psi_\Dcal\in(Q\cup\{\text{invite states}\})^4$, add the state to $f(\psi,\psi_N,\psi_E,\psi_S,\psi_W)$:
    \begin{itemize}
        \item [2.1]$(\accsf,\psi,\psi',d)$, if $\psi_d=(\invsf,\psi',\psi,d^{-1})$. \comm{Change to accept state.}
    \end{itemize}

    \item For all $(\psi,\psi',\psi_x,\psi_y,d)\in R$:
    \begin{itemize}
        \item [3.1]Add $\psi_x$ to $f((\invsf,\psi,\psi',d),\lrangle{(\accsf,\psi',\psi,d^{-1}),\psi_{-d}})$ for all $\psi_{-d}\in Q^3$.
        \item [3.2]Add $\psi_y$ to $f((\accsf,\psi',\psi,d^{-1}),\lrangle{\psi_x,\psi'_{-d^-1}})$ for all $\psi'_{-d^{-1}}\in Q^3$.
    \end{itemize}
    \comm{Invite-accept pairs performing state transitions to simulate bimolecular reactions.}

    \item Add $\psi$ to $f((\invsf,\psi,\psi',d),\psi_N,\psi_E,\psi_S,\psi_W)$, for all $\psi_\Dcal\in Q^4$ and $\psi_d\neq\psi'$.
    
    \comm{Invitation has been rejected.}

\end{enumerate}

\begin{remark}
If we allow the representation function $R$ to map a neighborhood in the cellular automata system to a species in sCRN, then we define $R$ as the following (where $\sigma,\sigma',\sigma_x\in Q$):
\begin{itemize}
    \item $R((\invsf,\sigma,\sigma',d),\sigma_\Dcal)=\sigma$.
    \item $R((\accsf,\sigma,\sigma',d),\sigma_\Dcal)=\sigma$.
    \item $R(\sigma_x,\lrangle{(\accsf,\sigma',\sigma,d^{-1}),\sigma_{-d}})=\sigma'$. When the accepted state has not performed a state transition.
    \item $R(\sigma,\sigma_\Dcal)=\sigma$ for all the remaining cases.
\end{itemize}
By a similar explanation as above, we could see that $\Gamma\flw_R\Gamma'$, $\Gamma'\models_R\Gamma$, and $\Gamma\Lrw_R\Gamma'$ under this kind of representation function $R$.
\end{remark}

\subsubsection{Proof sketch}\label{sssec:7.2.proof}
First we give the representation function $R:Q'\rw Q$.\\
For all accept state $\sigma_\accsf$, $R(\sigma_\accsf)=\UNDsf$. For all invite state $\sigma_\invsf=(\invsf,\sigma,\sigma_d,d)$, $R(\sigma_\invsf)=\sigma$. Otherwise, $R(\sigma)=\sigma\ \forall\sigma\in Q$.

\paragraph{$\Gamma\flw_R\Gamma'$}
For any $\alpha',\beta'\in\Gamma'$ s.t. $\alpha'\rw\beta'$, if $R^*(\alpha'),R^*(\beta')\neq\UNDsf$, and $R^*(\alpha')\neq R^*(\beta')$, then $\alpha'\rw\beta'$ if and only if at least one of the following holds:
\begin{itemize}
    \item A unimolecular reaction is simulated directly by applying a local function on a cell (item $1.1$ in protocol \ptc{async-CA\_$\gg$\_d-sCRN}).
    \item Some invite-accept pairs is produced within the transformation from $\alpha'$ to $\beta'$, and all the accepted states have undergone their state transitions, meaning that every invite-accept pairs has completely simulate a bimolecular reaction in $\Gamma$ (item $3.2$ in protocol \ptc{async-CA\_$\gg$\_d-sCRN}).
\end{itemize}
Therefore, for $\alpha'\rw\beta'$ where $R^*(\alpha')\neq R*(\beta')$, $R^*(\alpha')\rw R*(\beta')$ corresponds to a sequence of reactions in the sCRN $\Gamma$.

\paragraph{$\Gamma'\models_R\Gamma$}
For any $\alpha\in\Acal(\Gamma)$, let $\Pi=\{\alpha\}$. $\alpha$ is achievable by simulating the unimolecular and bimolecular reactions by item $1.1$ and $3.2$ respectively. For all $\alpha'$ s.t. $R^*(\alpha')=\alpha$, there is no accept state in $\alpha'$, so we can trigger all the invite state in an arbitrary order from $\alpha$ since the invite state implicitly lock all of its neighbors. The result follows.

\paragraph{$\Gamma\Lrw_R\Gamma'$}
By the above explanation, it is obvious that $\{R^*(\alpha'):\alpha\in\Acal(\Gamma')\}=\Acal(\Gamma)\cup\{\UNDsf\}$. Notice that if there is no reactions that can be performed on a configuration $\alpha$, then there exists a configuration $\alpha'$ with no invite-accept pair and no rejected states s.t. $R^*(\alpha')=\alpha$ (which is equivalent to the termination of $\alpha'$) and vise versa. As a result, we have $\{R^*(\alpha'):\alpha\in\Acal_*(\Gamma')\}=\Acal_*(\Gamma)$.

\section{Simulation between amoebot and clockwise sCRN}\label{sec:amoe}
In this section, we show that clockwise sCRN and amoebot can simulate each other. The simulation for amoebot by clockwise sCRN is given in Section~\ref{ssec:Ameo-sCRN}, and the simulation for clockwise sCRN by amoebot is given in Section~\ref{ssec:sCRN-Ameo}

\subsection{Simulating amoebot by clockwise sCRN}\label{ssec:Ameo-sCRN}
\begin{theorem}\label{thm:csCRN-amoe}
Given an amoebot system $\Gamma=(Q,S,\delta)$, there exists a clockwise sCRN $\Gamma'=(Q',S',R')$ which simulates $\Gamma$.
\end{theorem}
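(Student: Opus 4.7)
The plan is to adapt the observe-lock-transition-release pipeline from Sections~\ref{sec:CA-sCRN} and~\ref{sec:TA} to the richer amoebot state. Each particle $p$ is encoded by a species that carries its full tuple $(\phi,o,d,f_0,\ldots,f_9)\in Q$, augmented with a head/tail tag and a partner identifier so that the two cells of an expanded particle can recognize each other; the sCRN blank species $\Ocal$ plays the role of an unoccupied node. Because the amoebot model guarantees a common clockwise chirality and clockwise sCRN labels each species' neighbors $0,\ldots,5$ in clockwise order, the local directions align naturally: identifying the sCRN direction $0$ at a particle species with the amoebot edge label $0$ makes sCRN direction $i$ correspond to amoebot edge label $i$, so $\delta$ can be evaluated directly without any global coloring step.

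To simulate one activation I would run a protocol analogous to \ptc{Observe\_and\_Lock} from Section~\ref{sec:CA-sCRN}: the particle species first turns into an \emph{observing} variant with $\varepsilon$ slots for the ten flags it must read, then fills those slots through a family of bimolecular reactions with each of its (at most six) neighbors, while simultaneously locking every observed neighbor so that no other activation can interfere. Every lock can be released unconditionally, which prevents deadlocks exactly as in Section~\ref{sec:CA-sCRN}. Once all slots are filled and every neighbor is locked, a single unimolecular reaction applies $\delta$; because $\delta$ is nondeterministic I would add one such reaction per turn in the output set. The resulting intermediate species remembers the new state, new flags, new tail direction, and a pending movement $m\in M$ to be executed before the final release phase.

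Executing the pending $m$ is the main obstacle. For $\idlesf$ there is nothing to do; $\expandsf_i$ is a single bimolecular reaction with the $\Ocal$ in direction $i$ that installs a head species at the target and converts the original cell into its tail; $\contractsf_i$ is a single bimolecular reaction between the head and tail that sends one of them to $\Ocal$. The handover $\handoversf_i$ is harder because it touches three cells: the acting contracted $p$ and both halves of the expanded neighbor $q$. Since clockwise sCRN only supports pairwise reactions, I would split the handover into two stages using an auxiliary \emph{handover-signal} species. The first reaction, between $p$ and the cell of $q$ that $p$ expands into, turns $p$ into the new head and that cell into a handover-signal that remembers the partner pointer of $q$'s other half; the second reaction, between the signal and that other half, simultaneously contracts $q$ into the other half and turns the signal into the new tail of $p$. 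Because no reaction in $R'$ other than these two touches a signal species, and because the locks established during observation prevent any other reaction from interfering with these cells mid-sequence, the handover commits atomically.

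Finally, the release phase is exactly the one from Section~\ref{sssec:7.1.ptc}, and the representation function $\Rcal$ sends each fully-formed particle species back to the amoebot state it encodes and every transient (observing, locked, pending-movement, handover-signal) species to $\UNDsf$. Then $\Gamma\flw_\Rcal\Gamma'$ holds because the only reactions that change $\Rcal^*$ are the unimolecular reaction that applies $\delta$ and the movement reactions, each of which implements exactly one legal single-activation step of $\Gamma$; $\Gamma'\models_\Rcal\Gamma$ holds because from any configuration mapping to $\alpha\in\Acal(\Gamma)$ every pipeline in progress can be driven to completion in an arbitrary order of activations, with the unconditional unlock guaranteeing we never get stuck; and $\Gamma'\Lrw_\Rcal\Gamma$ follows because a configuration of $\Gamma'$ is terminal exactly when every particle is in a quiescent species and $\delta$ returns only state-preserving, idle turns at each of them, which is precisely the condition for the amoebot system to be terminal.
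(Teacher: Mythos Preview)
Your high-level strategy coincides with the paper's: encode each particle by a species carrying the full amoebot tuple plus a head/tail tag, reuse the observe--lock--release pipeline of Section~\ref{sec:CA-sCRN} to collect the incoming flags, and realise each movement in $M$ by one or two bimolecular reactions. The representation function and the three-clause correctness argument are also the same in spirit. The substantive divergence is your treatment of $\handoversf_i$, and there the two-step scheme you propose has a genuine gap.

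You claim that the locks acquired during $p$'s observation phase protect the handover ``mid-sequence,'' but $p$ is adjacent only to the near half of $q$; the far half of $q$ is not among $p$'s six neighbours and is therefore never locked by $p$. Now place a second contracted particle $p'$ on the opposite side of $q$, also intending to push. Both $p$ and $p'$ can complete their observation phases concurrently (each locks a different half of $q$), and both can then fire your first handover reaction, converting each half of $q$ into a handover-signal. Your second reaction requires a signal to meet an ordinary head or tail species, but both halves are now signals and you explicitly forbid any other reaction touching a signal, so the system deadlocks. This breaks both $\Gamma'\models_\Rcal\Gamma$ and the terminal-set clause of $\Gamma'\Lrw_\Rcal\Gamma$. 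The paper handles exactly this race by using three steps with intermediate \emph{pushing}, \emph{prepare}, and \emph{waiting} species, by making the first step reversible so that a losing pusher can back out, and by having the expanded species lock its own neighbourhood before the contraction sub-step is allowed to fire. A minor additional slip: after expansion the newly occupied node is the head and $p$'s original node becomes the tail, not the other way round.
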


\subsubsection{Simulation overview}\label{sssec:8.1.ov}
In the amoebot model, a particle performs state transitions and movements according to the flags placed by its neighbors, and it occupies two cells at a time if it is expanded. Therefore, the corresponding sCRN must have pairs of species that encode the same expanded particle, observe all the flags, and simulate all possible types of movement. Notice that an algorithm for the amoebot system may result in different, asymmetric configurations for clockwise and counterclockwise chirality, so it is necessary to 
use a sCRN with knowledge of the correct chirality (clockwise here). This could be accomplished by providing the surface with a predefined coloring at the beginning, or just using a clockwise sCRN. We now assume that $\Gamma'$ is given an initial pattern as in $\Gamma$. Recall that $Q=\Phi\times\Ocal\times D\times\Sigma^{10}$. The following are two steps to construct the simulation:
\begin{itemize}
    \item First, we construct a set of states $Q^*=\Phi\times\Ocal\times\Hcal\times\Sigma^{10}\times(\Ocal\cup\{\varepsilon\})$ that can be used to encode each particle in $\Gamma$.
    \item With $Q^*$, we can construct the corresponding reactions $R'$ and state set $Q'$ in $\Gamma'$ more easily.
\end{itemize}

We use a single species to represent a contracted particle, which is called the \emph{contracted species}. A pair of \emph{expanded species} consists of one \emph{tail species} and one \emph{head species}, and they together represent an expanded particle. Recall that expansion preserves local orientation; we observe that each expanded particle can be classified into $6$ \emph{types}, depending on which direction of cell a contracted particle expanded into when creating the expanded particle. More details are given in Section~\ref{sssec:8.1.P}.

To simulate a transition rule in the amoebot model, the corresponding species in sCRN must first observe all its neighbors' flags and then perform the corresponding movement.
For the observation part, we mainly follow the designing idea in the simulation of cellular automata, which is stated in Section~\ref{sec:CA-sCRN}. We'll briefly describe when to lock or unlock the neighbors and when to enter the pause state for each movement in Section~\ref{sssec:8.1.F}. For the second part, if the movement is $\idlesf$, it is just state transition and the simulation is straightforward. The protocol is given in \ptc{Idle}. For $\expandsf_i$, it can be simulated by having a contracted species react with a blank species $\Ocal$ in the $i$ direction, placing the flags in each direction carefully. The protocol is given in \ptc{Expansion}. We simulate contraction by having a pair of expanded species perform reactions with each other, after which one of them is turned into the blank species $\Ocal$, and the other becomes the contracted species. The contracting process is given in \ptc{Contraction}. For $\handoversf_i$, we simulate by slightly modifying the contraction and expansion protocols so that the contracted species can remember which cells it pushes in a pair of expanded species. After contracting out of that cell, the expanded species can tell the contracted species to expand into it. We define the \emph{switching cell} to be the one occupying by different particles before and after the handover operation. In the simulation, the contracted species first transforms the switching cell, say $v_s$, to a \emph{prepare species} and turns itself into a \emph{pushing species}. This makes the pair of expanded species contract out of $v_s$, leaving $v_s$ with a \emph{waiting species} which waits for the expansion of the locked contracted species. Details are provided in protocol \ptc{Handover}. These movements are simulated in Section~\ref{sssec:8.1.M}. The correctness proof is given in Section~\ref{sssec:8.1.proof}.

\subsubsection{Encoding particles}\label{sssec:8.1.P}
Here we explain how to represent a particle in $Q$ by species in $Q^*$. The labels $f_0,\cdots,f_9$ are directly copied from the original particle, so the contracted particles have $f_6=\cdots=f_9=\varepsilon$. $\Hcal=\{H,T,\varepsilon\}$ indicates whether head or tail the species is, where $\varepsilon$ means that it is a contracted species. Recall that in the amoebot system, the initial configuration consists of contracted particles, so each expanded particle is created by some contracted particle performing $\expandsf_i$, $i\in\Ocal$. The last variable indicate this $i$, and a contracted species always has a $\varepsilon$ there.

We map a contracted particle $\rho_C=(\phi,o,\varepsilon,f_0,\cdots,f_5,\varepsilon,\varepsilon,\varepsilon,\varepsilon)$ to a contracted species $\psi_C=(\phi,o,\varepsilon,f_0,\cdots,f_5,\varepsilon,\varepsilon,\varepsilon,\varepsilon,\varepsilon)\in Q^*$ naively. And each expanded particle $\rho_E=(\phi,o,d,f_0,\cdots,f_9)$ is mapped to a  pair of \emph{type-$i$ species}
\[(\psi_T,\psi_H)=\left((\phi,o,T,f_0,\cdots,f_9,i),(\phi,o,H,f_0,\cdots,f_9,i)\right)\text{ for some }i\in\Ocal.\]
We call $\psi_T$ the tail species, $\psi_H$ the head species, and we also call the corresponding expanded particle the \emph{type-$i$ particle}. Notice that once $i$ and $h\in\Hcal$ are given, the tail direction $d$ is determined as well. See Figure~\ref{fig:types} for an example of type-$0$ and type-$4$ species. The light gray circle represents the tail, and the dark gray circle represents the head. The one with an orange frame means that its local $0$ fits the local $0$ of the expanded particle. Such a mapping from $Q$ to $Q^*\cup (Q^*)^2$ is denoted as $F$, where $F(\rho_C)=\psi_C$, $F(\rho_E)=(\psi_T,\psi_E)$.

\begin{figure}[htbp]
    \centering
    \begin{subfigure}{0.2\textwidth}
        \centering
        \includegraphics[width=\textwidth]{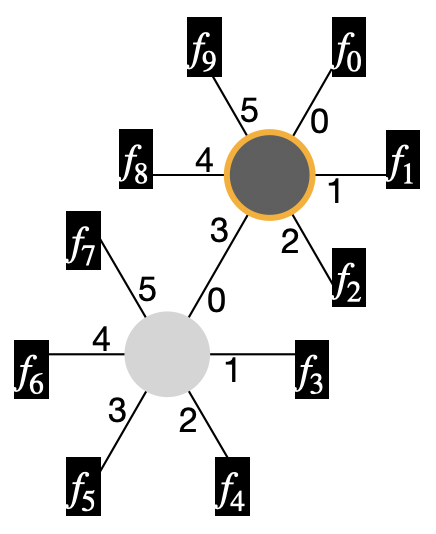}
        \caption{Type $0$.}
        \label{fig:type0}
    \end{subfigure}%
    \begin{subfigure}{0.25\textwidth}
        \centering
        \includegraphics[width=\textwidth]{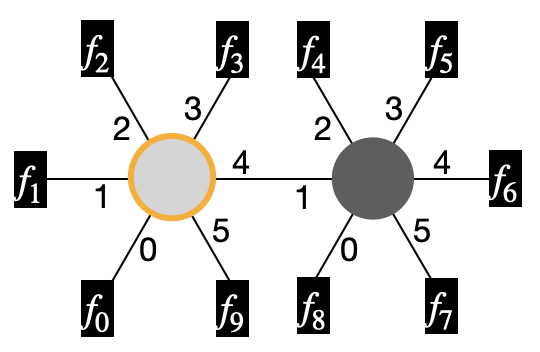}
        \caption{Type $4$.}
        \label{fig:type4}
    \end{subfigure}%
    \caption{Types of expanded species representing expanded particles}
    \label{fig:types}
\end{figure}

\subsubsection{Observing flags}\label{sssec:8.1.F}
Here we explain how to adapt the idea in Section~\ref{sec:CA-sCRN} to fit the purpose of observing all the flags around a particle. Similarly, a release species can only unlock its neighbors, and a paused species cannot lock any other species. We allow a species to lock others only if it has not been locked by anyone else, and most of the locks can be cancelled unconditionally except for some cases we'll specify in the following description.

For a contracted species $\psi_C$, it keeps locking and observing its neighbors.
\begin{itemize}
    \item If all its neighbors have been locked, $\psi_C$ can perform $\idlesf$. Note that if no state transition is needed to be performed at this point, $\psi_C$ turns itself into the paused release state.
    \item If the neighbor in direction $i$ is not locked by any other species, and the rest of its neighbors have all been locked by $\psi_C$, it can perform either an expansion or a handover toward direction $i\in\Ocal$.
    \begin{itemize}
        \item If an expansion is performed, the expanded species are produced in the release state.
        \item If a handover is performed, a pushing species and a prepare species appear first. A pushing species cannot unlock any neighbors, and the prepare species cannot lock the pushing species or its partner in the expanded species. Then the expanded species contract out of the prepare species as long as all the other neighbors have been locked. This results in a contracted species and a waiting species, where the contracted one is produced in the release state, and the waiting species cannot unlock any species. The last step is to let the pushing species and waiting species simulate the expansion, and both of them become a release state at the same time.
    \end{itemize}
\end{itemize}

For a pair of expanded species $\psi_E$, they keep locking and observing all their neighbors except for each other, once the neighbors around them are all locked, they can either idle or perform a contraction.
\begin{itemize}
    \item If the movement is $\idlesf$ and there is no state transition needed to be performed, then the species turn themselves into the paused release state.
    \item If the movement is a contraction, then both the contracted species and blank species resulting from the contraction are produced in the release state.
\end{itemize}

Following the design when simulating cellular automata, a paused species leaves the pause state if any of its neighbors changes, which includes turning into a pushing, prepare, or waiting species. The process described in this section is similar to the one in Section~\ref{sec:CA-sCRN}, so we separate it from the simulation of movements, and we are not giving the technical details or the sketch of proof of this part in this paper. Further, we assume from now on, every species knows all the flags facing it at any point.

\subsubsection{Movements}\label{sssec:8.1.M}
The goal now remains to simulate any legal movement in $\Gamma$. See Figure~\ref{fig:movements} for some examples of these operations.

\begin{figure}[htbp]
    \centering
    \includegraphics[width=\textwidth]{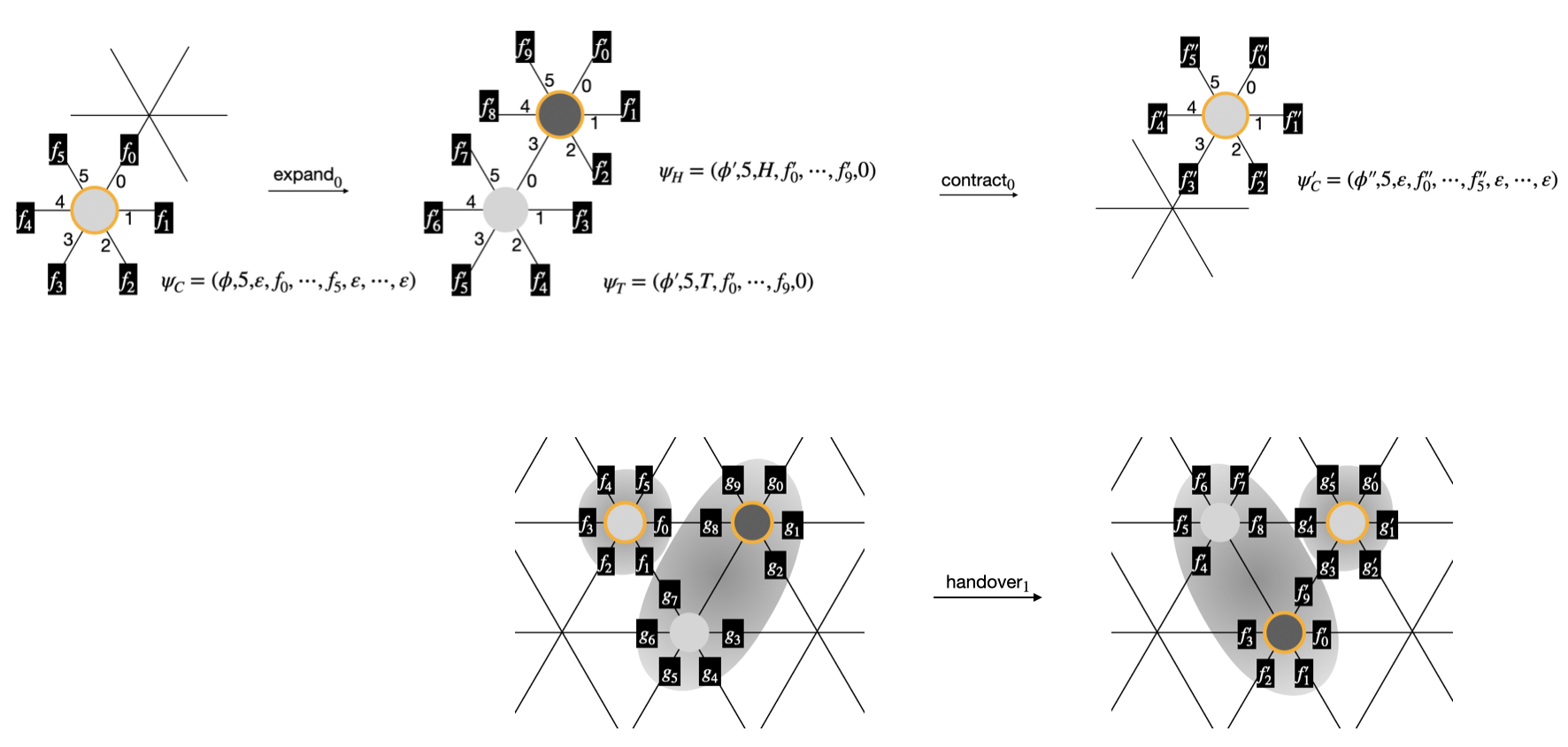}
    \caption{Examples of movements}
    \label{fig:movements}
\end{figure}

\paragraph{\ptc{Idle}}
It is just a state transition. Assume that a contracted particle $\rho_C$ becomes $\rho'_C$ by performing $\idlesf$, and that $\psi_C=F(\rho_C)$, $\psi'_C=F(\rho'_C)$, we add the unimolecular reaction $(\psi_C,\psi'_C,\odot)$ to $R'$. For an expanded particle $\rho_E$ that becomes $\rho'_E$, assume that $(\psi_T,\psi_H)=F(\rho_E)$ and $(\psi'_T,\psi'_H)=F(\rho'_E)$, we add the bimolecular reaction $(\psi_T,\psi_H,\psi'_T,\psi'_H,k)$ to $R'$ where $k$ is the type of $\rho_E$.

\paragraph{\ptc{Expansion}}
To simulate a contacted particle $\rho_C\in Q$ that expands toward direction $i$ and becomes an expanded particle $\rho_E\in Q$, assuming that $\psi_C=F(\rho_C)$ and $(\psi_T,\psi_H)=F(\rho_E)$, we add $(\psi_C,\Ocal,\psi_T,\psi_H,i)$ to $R'$.

\paragraph{\ptc{Contraction}}
To simulate a type-$k$ particle performing $\contractsf_i$, we first see whether it is contracting into the head or the tail.
\begin{itemize}
    \item When $(k,i)\in C_T=\{(0,5),(1,6),(2,9),(3,0),(4,1),(5,4)\}$, a type-$k$ particle performing $\contractsf_i$ is actually contracting into the tail.
    \item When $(k,i)\in C_H=\{(0,0),(1,1),(2,4),(3,5),(4,6),(5,9)\}$, a type-$k$ particle performing $\contractsf_i$ is actually contracting into the head.
\end{itemize}
Therefore, given an expanded particle $\rho_E\in Q$ which is contracting toward direction $i$ and is turning into a contracted particle $\rho_C\in Q$, we have the following reactions in $R'$ assuming $\psi_C=F(\rho_C)$ and $(\psi_T,\psi_H)=F(\rho_E)$:
\begin{flalign*}
    &(\psi_T,\psi_H,\psi_C,\Ocal,k), \text{ for all }(k,i)\in C_T.\\
    &(\psi_T,\psi_H,\Ocal,\psi_C,k), \text{ for all }(k,i)\in C_H.&&
\end{flalign*}

\paragraph{\ptc{Handover}}
A handover is equivalent to a contraction and an expansion happening simultaneously. In particular, we view $\handoversf_i$ as $(\expandsf_i,\contractsf_j)$ performing on a pair of adjacent particles. In the sCRN, it is impossible to change the states of three cells at the same time, so we use an additional variable $l\in\Ocal\cup\{\varepsilon,w\}$ to ensure that the corresponding contraction and expansion both happen eventually. If $l\in\Ocal$, this means a pushing species intending to push a tail or head species toward the $l$ direction, otherwise, $l=\varepsilon$ represents a prepare species being pushed by a pushing species. When the expanded species has contracted out of the switching cell, $l$ is set to be $w$ that stands for the waiting species.

Assume that it is a contracted particle $\rho_C$ pushing an expanded type-$k$ particle $\rho_E$ toward direction $i$, and they become $\rho'_E$ and $\rho'_C$ eventually. Let $F(\rho_C)=\psi_C$, $F(\rho'_C)=\psi'_C$, $F(\rho_E)=(\psi_T,\psi_H)$, and $F(\rho'_E)=(\psi'_T,\psi'_H)$. We describe the simulation protocol by the following three parts:
\begin{enumerate}
    \item We have $\rho_C$ remember $i$, the direction of the switching cell it is pushing. Let the species located in the switching cell be $\psi_h$ for some $h\in{H,T}$ ($h$ is fixed given $(k,j)$), then we have $\psi_C$ become a pushing species and $\psi_h$ become a prepare species $(\psi_h,\varepsilon)$. At this point, $(\psi_C,i)$ and $(\psi_h,\varepsilon)$ temporarily ``tie'' each other. We add the following reaction to $R'$:
    \begin{flalign*}
        &(\psi_C,\psi_H,(\psi_C,i),(\psi_H,\varepsilon)),\text{ for }(k,j)\in C_T.\\
        &(\psi_C,\psi_T,(\psi_C,i),(\psi_T,\varepsilon)),\text{ for }(k,j)\in C_H.&&
    \end{flalign*}
    At the same time there may be another contracted particle pushing $\rho_E$, but these two pushes cannot be performed simultaneously. To avoid this situation we also add the reverse reactions $(\psi_C,\psi_h,(\psi_C,i),(\psi_h,\varepsilon))^{-1}$ to $R'$.

    \item After the prepare species $\psi_h$ appears within the expanded species, the expanded species contracts out of $\psi_h$, and leaves a mark $W$ on $\psi_h$, which serves as a signal that tells $(\psi_C,i)$ to perform the expansion. Suppose that $\rho_E$ is a type-$k$ particle, we add the following reactions to $R'$:
    \begin{flalign*}
        &(\psi_T,(\psi_H,\varepsilon),\psi'_C,(\psi_H,w),k),\text{ for }(k,j)\in C_T.\\
        &((\psi_T,\varepsilon),\psi_H,(\psi_T,w),\psi'_C,k),\text{ for }(k,j)\in C_H.&&
    \end{flalign*}
    Note that $k$ is known given $\rho_E$, which, together with $j$ determines whether head or tail it is contracting into.

    \item After $(\psi_C,i)$ observes its neighbor in direction $i$ becoming $(\psi_h,w)$, it complete the expanding process. So we add the following reaction to $R'$:
    \begin{flalign*}
        ((\psi_C,i),(\psi_h,w),\psi'_T,\psi'_H,i).&&
    \end{flalign*}
\end{enumerate}

We summarize the simulation of all the movements in the following protocol.
\paragraph{\ptc{Simulate\_Movements\_by\_c-sCRN}}
For any transition define by $\delta$, we simulate it according to the movement being performed. For all contracted particles $\rho_C,\rho'_C\in Q$, expanded type-$k$ particles $\rho_E,\rho'_E\in Q$, a vector of flags $\vec{f''}=(f''_0,\cdots,f''_9)$ around the particle that is about to move, we assume that $F(\rho_C)=\psi_C$, $F(\rho'_C)=\psi'_C$, $F(\rho_E)=(\psi_T,\psi_H)$, and $F(\rho'_E)=(\psi'_T,\psi'_H)$. For simplicity, let
$\delta(\cdot)=\delta(\cdot,\vec{f''})$. Add the following reactions to $R'$:
\begin{enumerate}
    \item $(\psi_C,\psi'_C,\odot)$, if $(\rho'_C,\idlesf)=\delta(\rho_C)$.\\
    $(\psi_T,\psi_H,\psi'_T,\psi'_H,k)$, if $(\rho'_E,\idlesf)=\delta(\rho_E)$.\comm{Idle.}
    
    \item $(\psi_C,\Ocal,\psi_T,\psi_H,i)$, if $(\rho_E,\expandsf_i)=\delta(\rho_C)$ for some $i\in\{0,\cdots,5\}$.\comm{Expansion.}
    
    \item $(\psi_T,\psi_H,\psi_C,\Ocal,k)$, if $(\rho_C,\contractsf_i)=\delta(\rho_E)$ and $(k,i)\in C_T$.\comm{Contract into tail.}\\
    $(\psi_T,\psi_H,\Ocal,\psi_C,k)$, if $(\rho_C,\contractsf_i)=\delta(\rho_E)$ and $(k,i)\in C_H$.\comm{Contract into head.}

    \item If $(\hat{\rho}_C,\handoversf_i)=\delta(\rho_C)$ s.t. $\hat{\rho}_C$ puts the information on its flags that makes $\rho_C$ expand along direction $i$ and become $(\rho'_E)$, and makes $\rho_E$ contract toward direction $j$ and change to $\rho'_C$, then we add the following reactions to $R'$:
    
    $(\psi_C,\psi_H,(\psi_C,i),(\psi_H,\varepsilon))$, $(\psi_C,\psi_H,(\psi_C,i),(\psi_H,\varepsilon))^{-1}$, for $(k,j)\in C_T$.\\
    $(\psi_C,\psi_T,(\psi_C,i),(\psi_T,\varepsilon))$, $(\psi_C,\psi_T,(\psi_C,i),(\psi_T,\varepsilon))^{-1}$, for $(k,j)\in C_H$.\\
    $(\psi_T,(\psi_H,\varepsilon),\psi'_C,(\psi_H,w),k)$, for $(k,j)\in C_T$.\\
    $((\psi_T,\varepsilon),\psi_H,(\psi_T,w),\psi'_C,k)$, for $(k,j)\in C_H$.\\
    $((\psi_C,i),(\psi_h,w),\psi'_T,\psi'_H,i)$, for $h\in\{T,H\}$.\comm{Handover}
\end{enumerate}

\subsubsection{Proof sketch}\label{sssec:8.1.proof}
First we define the representation function. For any contracted species, it is mapped to the contracted particle it represents, so $R(\psi_C)=F^{-1}(\psi_C)$ for all $\psi_C\in\{\text{contracted species}\}$. For tail and head species in a type-$i$ T-H pair s.t. $F(\rho_E)=(\psi_T,\psi_H)$ for some expanded particle $\rho_E$, let $R(\psi_T)=R(\psi_H)=\rho_E$. This is well-defined since we can decode $\rho_E$ by any one of the $\psi_T$ and $\psi_H$. For the remaining species, let $R(\Ocal)=\nullsf$, and $R((\psi,l))=R((\psi,\varepsilon))=R((\psi,w))=\UNDsf\ \forall l\in\Ocal$. The contraction, expansion, or handover, toward each direction, corresponds one-to-one to a protocol in Section~\ref{sssec:8.1.M}.

\paragraph{$\Gamma\flw_R\Gamma'$}
For any $\alpha'\rw\beta'$, it is a sequence consists of contraction, expansion, or a reaction in the sub-protocol simulating handover. If it was the last case, the configuration produced by one step of reaction is not mapped to $\UNDsf$ only if it performs item $3$ in protocol \ptc{Handover}, which means that a handover has complete. Therefore, if $R^*(\alpha'),R^*(\beta')\neq\UNDsf$, then $R^*(\alpha')\rw R^*(\beta')$ since it can be achieved by a sequence of movements.

\paragraph{$\Gamma'\models_R\Gamma$}
For any $\alpha\in\Acal(\Gamma)$, let $\Pi=\{\alpha\}$. $\alpha$ is achievable by simulating the movements one by one according to each sub-protocol. Notice that for any $\alpha''$ s.t. $R^*(\alpha'')=\alpha$, there cannot be any prepare species or waiting species in $\alpha''$. So $\alpha''$ must be a configuration s.t. all sequence of reactions that simulate a single movement is completely performed. Therefore $\alpha\rw\alpha''$.

\paragraph{$\Gamma\Lrw_R\Gamma'$}
By the above explanation, it is obvious that $\{R^*(\alpha'):\alpha\in\Acal(\Gamma')\}=\Acal(\Gamma)\cup\{\UNDsf\}$. Notice that the ability of performing a movement in $\Gamma$ is equivalent to the ability of performing the sub-protocol corresponds to that movement. (Although in the handover operation there might be more than one contracted species pushing the same expanded species, item $2$ of protocol \ptc{Handover} are reversible reactions, so it will be only one species ''accepted'' by the expanded species eventually.) Therefore we have $\{R^*(\alpha'):\alpha\in\Acal_*(\Gamma')\}=\Acal_*(\Gamma)$.

\subsection{Simulate clockwise sCRN by amoebot}\label{ssec:sCRN-Ameo}
\begin{theorem}\label{thm:amoe_csCRN}
Given a clockwise sCRN $\Gamma=(Q,S,R)$ , there exists an amoebot system $\Gamma'=(Q',S',\delta)$ which simulate $\Gamma$.
\end{theorem}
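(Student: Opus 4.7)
The plan is to build an amoebot system $\Gamma'$ in which every particle remains contracted throughout execution, so the simulation reduces to (i) encoding each sCRN species as a basic amoebot state and (ii) simulating bimolecular reactions by flag-mediated handshakes between adjacent contracted particles. Because clockwise sCRN and the amoebot model both live on the triangular lattice with a shared clockwise chirality but no global orientation, the local direction $d \in \{0,\ldots,5\}$ labelling a reaction $(A,B,C,D,d) \in R$ can be identified with the amoebot's own local edge label $d$ as seen from the $A$-side reactant, so no auxiliary coloring or orientation-fixing phase is required.

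First I would set $S'$ to contain one contracted particle at every cell of $S$, each in a basic state encoding its sCRN species and with empty flags. The state set $\Phi$ consists of one basic state per species in $Q$, together with auxiliary \emph{invite} states $(\invsf, A, B, d)$ and \emph{accept} states $(\accsf, B, A, d')$, exactly as in Section~\ref{sec:sCRN-CA}. The representation function is $R(\sigma) = \sigma$ for basic states, $R((\invsf,A,B,d)) = A$, $R((\accsf,B,A,d')) = \UNDsf$, and $R^*$ is the cell-wise extension. Unimolecular reactions $(A,B,\odot) \in R$ are placed directly into $\delta$ as state transitions from $A$ to $B$ with movement $\idlesf$ and no flag precondition.

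For each bimolecular reaction $(A,B,C,D,d) \in R$, $\delta$ implements the following four activations, mirroring the async-CA construction in Section~\ref{sec:sCRN-CA}: (1) a particle in state $A$ whose $d$-th edge faces a neighbor in state $B$ may transition to $(\invsf,A,B,d)$ and write an invite flag on that edge; (2) a particle in state $B$ reading an invite flag on its own edge $d'$ from a neighbor in some state $(\invsf,A,B,d)$ may non-deterministically pick one such offer, transition to $(\accsf,B,A,d')$, and write an accept flag on edge $d'$; (3) once a mutual invite-accept pair is in place the invite side activates and transitions to $C$, after which the accept side reads the change and transitions to $D$, completing the handshake; (4) an invite state whose targeted neighbor is no longer in state $B$ (because a competing handshake succeeded or the neighbor withdrew) may revert to $A$ and clear its invite flag. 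Since each particle only ever reads and writes its own edge labels, the sCRN direction $d$ (from $A$'s view) and the reply direction $d'$ (from $B$'s view) are each known to the party that needs them, with no need for global agreement.

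The main obstacle is exactly the one addressed in Theorem~\ref{thm:CA-dsCRN}: a single sCRN reaction changes two cells simultaneously, whereas amoebot activations are single-particle. The invite-accept protocol resolves this at the cost of intermediate configurations where one participant has already transitioned and the other has not, which is why accept states must map to $\UNDsf$. With this in place, $\Gamma \flw_R \Gamma'$ follows by observing that the only activations producing a non-$\UNDsf$ image change are unimolecular transitions and step (3) completions, both of which correspond to legal reactions of $\Gamma$. For $\Gamma' \models_R \Gamma$, given $\alpha \in \Acal(\Gamma)$ I would let $\Pi$ consist of those $\alpha' \in (R^*)^{-1}(\alpha) \cap \Acal(\Gamma')$ with no pending invite or accept flags; any such $\alpha'$ can simulate any reaction applicable at $\alpha$ by scheduling the corresponding handshake to completion and using step (4) to cancel conflicting invitations. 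Finally $\Gamma' \Lrw_R \Gamma$ follows from the above together with the fact that $\alpha' \in \Acal(\Gamma')$ is terminal iff it carries no pending flags and $R^*(\alpha') \in \Acal_*(\Gamma)$.
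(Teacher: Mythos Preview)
Your proposal is correct and takes essentially the same approach as the paper: use only contracted particles and replay the invite--accept handshake from Section~\ref{sec:sCRN-CA}, identifying the clockwise-sCRN local direction with the amoebot local edge label. The paper's own proof is only a one-paragraph sketch pointing to that same construction, so your write-up is in fact more detailed than what appears there; the one implementation point worth making explicit is that each basic-state particle must continuously broadcast its current species as a flag on all incident edges, since in the amoebot model a particle learns a neighbor's state only through flags, not by direct inspection.
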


\subsubsection{Simulation overview}
We use only contracted particles, and with the property that each particle can see every neighbors and perform state transition by the information in its neighborhood, we can use the same technique as the simulation of d-sCRN by non-deterministic async-CA (Section~\ref{sec:sCRN-CA}). Notice that the underlying lattice does not matter and we don't provide the precies protocol here.

\bibliographystyle{splncs04}
\bibliography{ref_v1}

\begin{thebibliography}{10}
\providecommand{\url}[1]{\texttt{#1}}
\providecommand{\urlprefix}{URL }
\providecommand{\doi}[1]{https://doi.org/#1}

\bibitem{ALQ01}
Adleman, L., Cheng, Q., Goel, A., Huang, M.D.: Running time and program size for self-assembled squares. In: Proceedings of the Thirty-Third Annual ACM Symposium on Theory of Computing. p. 740–748 (2001)

\bibitem{AB23}
Alaniz, R.M., Brunner, J., Coulombe, M., Demaine, E.D., Diomidova, J., Knobel, R., Gomez, T., Grizzell, E., Lynch, J., Rodriguez, A., Schweller, R., Wylie, T.: Complexity of reconfiguration in surface chemical reaction networks (2023)

\bibitem{RDS23}
Alaniz, R.M., Caballero, D., Cirlos, S.C., Gomez, T., Grizzell, E., Rodriguez, A., Schweller, R., Tenorio, A., Wylie, T.: Building squares with optimal state complexity in restricted active self-assembly. Journal of Computer and System Sciences  \textbf{138},  103462 (2023)

\bibitem{ADD19}
Alumbaugh, J.C., Daymude, J.J., Demaine, E.D., Patitz, M.J., Richa, A.W.: Simulation of programmable matter systems using active tile-based self-assembly. In: DNA Computing and Molecular Programming. pp. 140--158 (2019)

\bibitem{AAFJ08}
Angluin, D., Aspnes, J., Fischer, M.J., Jiang, H.: Self-stabilizing population protocols. ACM Trans. Auton. Adapt. Syst.  \textbf{3}(4) (12 2008). \doi{10.1145/1452001.1452003}, \url{https://doi.org/10.1145/1452001.1452003}

\bibitem{BNR20}
Bhattacharjee, K., Naskar, N., Roy, S., Das, S.: A survey of cellular automata: Types, dynamics, non-uniformity and applications. Natural Computing  \textbf{19},  433–461 (06 2020)

\bibitem{BGY19}
Brailovskaya, T., Gowri, G., Yu, S., Winfree, E.: Reversible computation using swap reactions on a surface. In: DNA Computing and Molecular Programming. pp. 174--196 (2019)

\bibitem{CGS21}
Caballero, D., Gomez, T., Schweller, R., Wylie, T.: Verification and computation in restricted tile automata. Natural Computing pp. 1--19 (2021)

\bibitem{CL18}
Chalk, C., Luchsinger, A., Martinez, E., Schweller, R., Winslow, A., Wylie, T.: Freezing simulates non-freezing tile automata. In: DNA Computing and Molecular Programming. pp. 155--172 (2018)

\bibitem{cdd14:stochastic}
Chen, H.L., Doty, D., Soloveichik, D.: Deterministic function computation with chemical reaction networks. Natural Computing  \textbf{13}(4),  517--534 (2014)

\bibitem{cdsw23:mass_action}
Chen, H.L., Doty, D., Soloveichik, D., Reeves, W.: Rate-independent computation in continuous chemical reaction networks. Journal of the ACM  \textbf{70}(3),  1--61 (2023)

\bibitem{CQW20}
Clamons, S., Qian, L., Winfree, E.: Programming and simulating chemical reaction networks on a surface. Journal of The Royal Society Interface  \textbf{17},  20190790 (2020)

\bibitem{CSW09}
Cook, M., Soloveichik, D., Winfree, E., Bruck, J.: Programmability of Chemical Reaction Networks, pp. 543--584. Springer Berlin Heidelberg, Berlin, Heidelberg (2009), \url{https://doi.org/10.1007/978-3-540-88869-7_27}

\bibitem{DFM12}
Dennunzio, A., Formenti, E., Manzoni, L.: Computing issues of asynchronous ca. Fundam. Inf.  \textbf{120}(2),  165–180 (2012)

\bibitem{DDG14}
Derakhshandeh, Z., Dolev, S., Gmyr, R., Richa, A.W., Scheideler, C., Strothmann, T.: Amoebot - a new model for programmable matter. In: Proceedings of the 26th ACM Symposium on Parallelism in Algorithms and Architectures. p. 220–222. SPAA '14, Association for Computing Machinery, New York, NY, USA (2014). \doi{10.1145/2612669.2612712}, \url{https://doi.org/10.1145/2612669.2612712}

\bibitem{DL12}
Doty, D., Lutz, J.H., Patitz, M.J., Schweller, R.T., Summers, S.M., Woods, D.: The tile assembly model is intrinsically universal. In: IEEE 54th Annual Symposium on Foundations of Computer Science. pp. 302--310 (2012)

\bibitem{FGB17}
Fages, F., Le~Guludec, G., Bournez, O., Pouly, A.: Strong turing completeness of continuous chemical reaction networks and compilation of mixed analog-digital programs. In: Feret, J., Koeppl, H. (eds.) Computational Methods in Systems Biology. pp. 108--127. Springer International Publishing, Cham (2017)

\bibitem{FG08}
Fat{\`e}s, N., Gerin, L.: Examples of fast and slow convergence of 2d asynchronous cellular systems. In: Cellular Automata. pp. 184--191 (2008)

\bibitem{HP23}
Hader, D., Patitz, M.J.: The impacts of dimensionality, diffusion, and directedness on intrinsic cross-model simulation in tile-based self-assembly (2023)

\bibitem{JVN66}
von Neumann, J.: Theory of self-reproducing automata (1966), \url{https://cba.mit.edu/events/03.11.ASE/docs/VonNeumann.pdf}

\bibitem{Pa12}
Patitz, M.J.: An introduction to tile-based self-assembly. In: Durand-Lose, J., Jonoska, N. (eds.) Unconventional Computation and Natural Computation. pp. 34--62 (2012)

\bibitem{QW14}
Qian, L., Winfree, E.: Parallel and scalable computation and spatial dynamics with dna-based chemical reaction networks on a surface. In: Murata, S., Kobayashi, S. (eds.) DNA Computing and Molecular Programming. pp. 114--131. Springer International Publishing, Cham (2014)

\bibitem{RW00}
Rothemund, P., Winfree, E.: The program-size complexity of self-assembled squares. In: Proceedings of the Annual ACM Symposium on Theory of Computing. pp. 459--468 (2000)

\bibitem{SCW08}
Soloveichik, D., Cook, M., Winfree, E., Bruck, J.: Computation with finite stochastic chemical reaction networks. Natural Computing: An International Journal  \textbf{7}(4),  615–633 (dec 2008). \doi{10.1007/s11047-008-9067-y}, \url{https://doi.org/10.1007/s11047-008-9067-y}

\bibitem{Win98}
Winfree, E.: Algorithmic self-assembly of dna. In: Proceedings of the International Conference on Microtechnologies in Medicine and Biology. pp.~4--4 (2006)

\end{thebibliography}



%
%
%

\end{document}